\newcommand{\form}[1]{q_0[#1]}
\newcommand{\SP}[2]{\big\langle #1,#2 \big\rangle} 
\newcommand{\sps}[2]{\langle #1,#2 \rangle} 
\newcommand{\bx}{{\bf x}}
\renewcommand{\d}{\textrm{d}}
\newcommand{\N}{\mathbb{N}} 
\newcommand{\Z}{\mathbb{Z}} 
\newcommand{\R}{\mathbb{R}} 
\newcommand{\U}{\mathcal{U}}
\newcommand{\calD}{\mathcal{D}}
\newcommand{\calQ}{\mathcal{Q}}
\newcommand{\F}{\mathcal{F}}
\renewcommand{\S}{\mathbb{S}}
\renewcommand{\H}{\mathcal{H}}
\newcommand{\calC}{\mathcal{C}}
\newcommand{\calU}{\mathcal{U}}
\newcommand{\re}{\mathrm{Re}}
\newcommand{\veps}{\varepsilon}
\newcommand{\wti}{\widetilde}
\newcommand{\n}[1]{\left\| #1 \right\|}
\newcommand{\ab}[1]{\left| #1 \right|}
\newcommand{\norm}[1]{\left\| #1 \right\|}
\newcommand{\abs}[1]{\left| #1 \right|}
\newcommand{\supp}{\textnormal{supp}}
\newcommand{\<}{\left\langle}
\renewcommand{\>}{\right\rangle}
\newcommand{\vp}{\varphi}
\newcommand{\ve}{\varepsilon}
\renewcommand{\t}[1]{\textnormal{#1}}
\renewcommand{\(}{\left(}
  \renewcommand{\)}{\right)}
\newcommand{\bl}{}
\newtheorem{condition}{Condition}
\newtheorem{theorem}{Theorem}[section]
\newtheorem{proposition}[theorem]{Proposition}
\newtheorem{lemma}[theorem]{Lemma}
\theoremstyle{remark}
\newtheorem{remark}[theorem]{Remark}
\newtheorem{remarks}[theorem]{Remarks}
\theoremstyle{definition}
\newcounter{listi}
\newenvironment{remarklist}{\begin{list}{{\rm(\roman{listi})}}{%
\setlength{\topsep}{0mm}\setlength{\parsep}{1mm}\setlength{\itemsep}{0mm}%
\setlength{\labelwidth}{1.3em}\setlength{\leftmargin}{1.2em}\usecounter{listi}%
}}{\end{list}}
\renewcommand{\le}{\leqslant}
\renewcommand{\leq}{\leqslant}
\renewcommand{\ge}{\geqslant}
\renewcommand{\geq}{\geqslant}     
\title[Asymptotic dynamics of magnetic quantum systems]{On the
  asymptotic dynamics of 2-D magnetic quantum systems}
\author[E. C\'ardenas]{Esteban C\'ardenas}
\address[E. C\'ardenas]{Department of Mathematics,
	University of Texas at Austin,
	2515 Speedway,
	Austin TX, 78712, USA}
	\email{eacardenas@utexas.edu}
\author[D. Hundertmark]{Dirk Hundertmark}
\address[D. Hundertmark]{Department of Mathematics, Institute for Analysis, Karlsruhe Institute of Technology, 76131 Karlsruhe, Germany, and Department of Mathematics, Altgeld Hall, University of Illinois at Urbana-Champaign, 1409 W. Green Street, Urbana, IL 61801, USA}
\email{dirk.hundertmark@kit.edu}
\author[E. Stockmeyer]{Edgardo Stockmeyer}
\address[E. Stockmeyer]{Instituto de F\'isica, Pontificia Universidad Cat\'olica de Chile, Vicu\~na Mackenna 4860, Santiago 7820436, Chile}
\email{stock@fis.puc.cl}
\author[S. Wugalter]{Semjon Wugalter}
\address[S. Wugalter]{Department of Mathematics,
Institute for Analysis,	Karlsruhe Institute of Technology,
	76131 Karlsruhe, Germany}
	\email{semjon.wugalter@kit.edu}
\subjclass[2010]{81Q10, 35Q41, 35B30}
\keywords{Magnetic Hamiltonian, dense point spectrum, dynamical localisation}
\begin{document}
\thanks{\copyright 2020 by the authors. Faithful reproduction of this article,
       in its entirety, by any means is permitted for non-commercial purposes}
\begin{abstract}
  In this work we provide results on the long time localisation in space
  (dynamical localisation) of certain two-dimensional magnetic quantum
  systems. The underlying Hamiltonian may have the form $H=H_0+W$,
  where $H_0$ is rotationally symmetric, has dense point spectrum,
  and $W$ is a perturbation that breaks the rotational symmetry. In the latter case, we also give estimates for the growth of the angular momentum operator in time.

\end{abstract}

\maketitle
{\hypersetup{linkcolor=black}
\setcounter{tocdepth}{1}
\tableofcontents}

\section{Introduction}
Consider a charged quantum particle subject to a time independent electro-magnetic field 
profile. The system may be described through a self-adjoint operator $H$ with 
domain $\calD(H)$ in a Hilbert space $\mathcal{H}$. If we  assume that the system is initially in a 
state $  \varphi\equiv \varphi(0)\in \calD(H)$ then, according to the Schr\"odinger equation,   the state of the system  at time $t$, $\varphi(t)$,  is given by $e^{-iHt/\hbar}\varphi$ (here $\hbar>0$ is Plank's constant devided by $2\pi$). A fundamental question is whether the 
system remains localized for long times and, if not, what  the speed of the wavepackage spreading 
is in terms of the electromagnetic field profile.

This
phenomena can be investigated, for instance, by looking at the long-time behavior of the expected radius of the state
\begin{align}\label{radius}
\sps{\varphi(t)}{\abs{\bx} \varphi(t)}\,, \quad \mbox{for} \quad t\gg 1\,.
\end{align}
 This, in turn, can sometimes be estimated if one has
information on the spectral quality of the underlying Hamiltonian (see,
e.g., \cite{Oliveira2009,Teschl1999} and \cite{Last1996}).  Let us assume 
that the initial state belongs to  a finite energy region $I\subset \R$ (with $\abs{I}<\infty$) of $\mathcal{H}$, i.e., 
$\varphi=E_I(H) \varphi$, with $E_I(H)$ being the spectral projection of $H$ on $I$.
Then, one can
easily check that if the spectrum of the Hamiltonian is a discrete
set, then the system remains localized in the sense that 
 $$
 \sup_{t\in\R}\sps{\varphi(t)}{\abs{\bx} \varphi(t)}\le C \norm{\varphi}^2\,,
 $$
for some constant $C>0$. Moreover, in dimension one, it is known that if the spectrum is absolutely continuous the wavefunction spreading is ballistic  in time average. More precisely, there  is a constant $c>0$ such that
\begin{align*}
\frac{1}{T} \int_{0}^T  \sps{\varphi(t)}{\abs{\bx} \varphi(t)} \ge c T\,, \quad T>1\,.
\end{align*}
(See \cite{Last1996} for this and more general results of this type).
However, if the spectrum is
dense point or singular continuous there is very little one can
say a priori (see \cite{RJLS2}).  Indeed, for $H$ having point spectrum, it is only known in general that the
system exhibits a sub-ballistic dynamical behaviour \cite{Simon1990},
i.e.,
\begin{align*}
\lim_{t\to\infty} \sps{\varphi(t)}{\abs{\bx} \varphi(t)}/t\to 0\,.
\end{align*}
Moreover, there are examples  of Hamiltonians with pure point
spectrum where the spreading rate is arbitrarily close to 
ballistic \cite{RJLS2}, i.e., for any $\varepsilon>0$
\begin{align*}
\limsup_{t\to\infty}  \sps{\varphi(t)}{\abs{\bx} \varphi(t)} /t^{1-\varepsilon}\to \infty\,,
\end{align*}
for a large class of initial data $\varphi$.

In the work at hand, we shed some more light on this problem for the cases,  (a) when $H=H_0$ is the two-dimensional magnetic Schr\"odinger operator with a radially symmetric magnetic field $B$ and has dense point spectrum and (b) when $H=H_0+W$,  with $H_0$ as before and   $W$ being 
an electric perturbation, smooth in the angular variable,   that decays at infinity.  Our conditions include the cases when 
\begin{align*}
\int_0^r B(s)s ds =\lambda r^{\sigma}\,, \quad \lambda>0, \sigma\ge 1\,.
\end{align*}
Using the same arguments as in \cite{miller1980quantum} one can easily show that for $\sigma \in (1,2)$ the spectrum of $H_0$ is dense pure point. Moreover, if $\sigma=1$ there is a {\it mobility edge} at energy $\lambda^2$, i.e., the spectrum is dense pure point on $[0,\lambda^2)$ and purely absolutely continuous on $(\lambda^2, \infty)$. It follows directly from Theorem \ref{thm 1} below  that  when $\sigma\in (1,2)$ the dynamics generated by $H_0$ is localized in time, provided the initial data is sufficiently smooth. Moreover, we show an anlogous result for the case $\sigma=1$,  whenever   $\varphi=E_{[0,\lambda^2)}(H) \varphi$ (see Section~\ref{med}). Similar results have been obtained for Dirac operators in 
	\cite{Barbaroux}. 
 
 The problem for $\sigma \in (1,2)$ becomes much more delicate if we turn on the electric perturbation $W$. In this case we do not even know the quality of the spectrum.  Indeed, through the perturbation,
 certain spectral subspaces may cease to be pure point and continuous
 spectrum (presumably singular) may appear (see e.g., \cite{RMS1994} and \cite{Oliveira2009}).  In this case, we provide estimates on the wave package spreading in terms of the decay rate of $W$. In particular, we show  that if $W$ decays exponentially fast, then the expected radius of the system grows at most logarithmically fast in time. Moreover, if $W=\mathcal{O}(1/\abs{\bx}^p)$ for some $p>2\sigma$, then $\sps{\varphi(t)}{\abs{\bx} \varphi(t)} $ grows at most as $t^\theta$ with $\theta=(p-\sigma)^{-1}<1$ (see Theorem~\ref{coro 1}, below).
 
 In order to prove that, we show on the one hand, that one can control the growth of the radius \eqref{radius}  in terms of the expected, time-dependent, angular momentum. (This is the actual content of   Theorem~\ref{thm 1}.) On the other hand, in Theorem~\ref{thm 2}, we provide estimates on the growth in time  of the angular momentum operator in terms of the decay rate of $W$.  As an essential tool, we use  certain novel  tunnelling estimates (see Theorem~\ref{thm-tunnelling}) which are in turn 
derived from fairly general exponential decay estimates for the spectral projections $E_{I}(H)$ given in Theorm~\ref{thm 3}.

 The above discussion roughly summarizes our main results. We emphasize that we are not aware of
other localisation bounds of this type  in  such situations (perturbed dense point spectrum) for
deterministic systems. Notice, however, that   the subject is frequently addressed  in the
realm of random Schr\"odinger operators. Although in these cases  the randomness of the potential plays a fundamental role in the proofs of localisation (see, e.g.,
\cite{hundertmark2008short,aizenman2015random}).

{\it This  paper is organized as follows}: In the rest of this section we describe precisely the model   and state most of our main results. We show theorems~\ref{thm 1} and \ref{thm 2} in Section~\ref{section thm 2}. In Section~\ref{edcay} we state and prove the exponential decay estimates for  the spectral projections Theorem~\ref{thm 3}. Finally,  in Section~\ref{aplications}, we apply the latter theorem to the model at hand and  show   the tunnelling estimates stated in Theorem~\ref{thm-tunnelling}.  
\subsection{The model and main results}
 Let us introduce the Hamiltonian $H_0$ of a quantum particle moving in
$\R^2$ that is interacting with a magnetic field $\textbf{B}$ pointing
perpendicularly to the plane. We denote by
$\textbf{A}=(A_1,A_2):\R^2\rightarrow \R^2$ a magnetic vector
potential associated to the magnetic field through
$\textbf{B}=(\partial_1 A_2-\partial_2 A_1) \hat{x}_3$.  
Throughout this work we use units such that $\hbar =2m=1$, where $m$ is the mass of the particle. 
For $\textbf{A}\in L_{\textrm{loc}}^2(\R^2,
\R^2)$ we define the sesquilinear form 
\begin{align}
  \label{eq:22}
  q_0(\varphi,\psi)=\int_{\R^2}
  \overline{(-i\nabla-\textbf{A}(\bx))\varphi(\bx) } (-i\nabla-\textbf{A}(\bx))\psi(\bx) \d \bx\,,
\qquad\varphi,\psi\in  \calD(q_0)\,,
\end{align}
with domain 
\begin{align}
  \label{eq:23}
  \calD(q_0)=  \left\{ 
\varphi\in L^2(\R^2)\,|\,
 (-i\partial_j-A_j) \varphi \in L^2(\R^2)\,, j\in \{1,2\}
\right\}\,.
\end{align}

It is well known \cite{Simon1979} that $C_0^\infty(\R^2)$ is a form core for $
q_0$.  
We denote by $H_0$ the self-adjoint operator {corresponding} to $q_0$ and by
$\calD(H_0)\subset  \calD(q_0)$ its domain.

We are interested in the particular case in which $H_0$ describes the
dynamics of a particle in a rotationally symmetric magnetic field
$\textbf{B}(\bx) =B(|\bx|) \hat{x}_3$. We choose the Poincar\'e gauge where (here
$r:=\ab{\bx}$, as usual)
\begin{align}\label{rotgauge}
{\bf A}(\t{\textbf{x}})=A(r)\hat{\theta} = \frac{ 
\Phi(r)}{r^2} 
\begin{pmatrix} 
-x_2 \\ x_1
\end{pmatrix}
\quad
\hbox{with}\quad
 \Phi(r)= {A(r)}{r}=\int_{0}^{r} B(s)s \, \t{d} s\,.
\end{align}
We will show that this choice of vector potential is locally square integrable, whenever the magnetic field is, see Lemma \ref{lem-square-integrable} in the appendix.  
Notice that $\Phi(r)$ is, up to factor of $2\pi$, the magnetic flux through a disc of radius $r>0$ centered at the origin.

One can show, see the discussion in Appendix \ref{app-polar coordinates}, 
that the quadratic form $q_0$ corresponding to $H_0$ is given by 
\begin{equation}\label{eq-q_0-rep-1}
	q_0(\vp,\psi) 
	  = \SP{\partial_r\vp}{\partial_r\psi} 
	     + \SP{\frac{1}{r}(\Phi-L)\vp}{\frac{1}{r}(\Phi-L)\psi} 
\end{equation}
for all $\vp,\psi\in\calD(q_0)$ since the magnetic flux $\Phi$ is radial. 
Here $\partial_r= \tfrac{x}{|x|}\cdot\nabla$ is the radial derivative and $L= -i (x_1 \partial_2 - x_2 \partial_1)$ is the generator of rotations in $\R^2$. 

Identifying the underlying Hilbert space $L^2(\R^2)$  with  
$\H:=L^2(\R^+
\times \S^1, r\d r\, \d\theta)$ through the
transformation
\begin{equation}
  \label{eq:8}
  \calU: L^2(\R^2)\to \H\,,
\end{equation}
with $\psi\mapsto \U\psi= \tilde{\psi}$, where
$\tilde{\psi} (r,\theta)= \psi(r\cos\theta,r\sin\theta)$, 
we define the self-adjoint  angular momentum operator $J=\U L\U^{-1}$. It is easy to see that  
\begin{align}
  \label{eq:6}
  J\widetilde{\varphi}:=-i\frac{\partial}{\partial\theta} \widetilde{\varphi}
\end{align}
when $\widetilde{\varphi}=\U\varphi$. 
In this coordinates we have, for any $\varphi,\psi\in \calD(q_0)=\calQ(H_0)$,
\begin{align}
  \label{eq:24}
  q_0(\varphi,\psi)=
\SP{\partial_r \tilde{\varphi}}{\partial_r    \tilde{\psi}}_\H
+\SP{\tfrac{1}{r} (\Phi-J)\tilde{\varphi}}{\tfrac{1}{r} (\Phi-J) \tilde{\psi}}_\H\,,
\end{align}
where $\tilde{\varphi}=\calU\varphi$ and
$\tilde{\psi}=\calU\psi$ with the unitary $\U:L^2(\R^2)\to \H$ given in 
\eqref{eq:8}.

 Notice that the spectrum of $J$ coincides with the set of integers $\Z$. 
We define
$P_j$ to be the eigenprojection onto the subspace of $\H$ with
fixed angular momentum $j\in \Z$. It is well known that the family
$(P_j)_{j\in \Z}$ gives a complete decomposition of the Hilbert space
$\H$ into subspaces which diagonalize  $q_0$: Expanding in Fourier series we have
$\U\vp (r,\theta)=(2\pi)^{-1/2}\sum_{j\in\Z} \vp_j(r) e^{ij\theta}$ and similarly for $\psi$ and using \eqref{eq:24} gives 
\begin{equation}
 \begin{split}
  \label{eq:12}
 q_0(\varphi,\psi)
  &=\sum_{j\in \Z} \left(\SP{\partial_r \varphi_j}{\partial_r    {\psi}_j}_{L^2(\R_+,rdr)}
    +\SP{\tfrac{1}{r}(\Phi-j) {\varphi}_j}{\tfrac{1}{r}(\Phi-j)\psi_j}_{L^2(\R_+,rdr)}\right) \\
  & =\sum_{j\in \Z} \left(\SP{\partial_r \varphi_j}{\partial_r    {\psi}_j}_{L^2(\R_+,rdr)}
    +\SP{\varphi_j}{V_j\psi_j}_{L^2(\R_+,rdr)}\right) \\
  &\eqqcolon \SP{\partial_r \varphi}{\partial_r    {\psi}}
        +\SP{\varphi}{V\psi}
 \end{split}
\end{equation}
where $V_j(r) := \tfrac{1}{r^2}(\Phi(r)-j)^2$ is the {\it effective potential}. We will frequently identify $\varphi_j$ with $P_j\calU\varphi$ and similarly for $\psi_j$.

We consider  electric perturbations of $H_0$  through a potential $W$
which is not necessarily rotationally symmetric but satisfies the following smoothness condition in the angular
variable.
\begin{condition} \label{con3}
  There are constants $a>0$,  $0<\zeta\le 1$, and a function
  {\bl $v\in L^2(\R^2)+L^\infty(\R^2)$} such that for all $j\in \Z$ and
  almost every $r>0$ 
  \begin{align}\label{asbe}
    \ab{\widehat{W}(r,j)}\le b(r) e^{-a\ab{j}^\zeta}\,,
  \end{align}
where $b(\ab{\bx})=v(\bx)$ for $\bx \in \R^2$ and, for $j\in \Z$,
\begin{align*}
  \widehat{W}(r,j)\coloneqq\frac{1}{\sqrt{2\pi}} \int_0^{2\pi} W(r,\theta)
  e^{-ij\theta} \t{d} \theta\,,\quad \mbox{for a.e.} \quad r>0\,.
\end{align*}
is the Fourier transform of the potential $W$ in the angular variable. 
\end{condition}

\begin{remarks}
\begin{remarklist}\vspace{1mm}
	\item The above condition is an analyticity condition in terms of the Gevrey scale in the angular variable $\theta$. In particular, if $\zeta=1$, the  above condition is precisely the analyticity $W$ in $\theta$, for almost every $r>0$,  which is also clear from the familiar Paley--Wiener theorem.  
	    Such an analyticity  condition is not unusual, see \cite{laszlo} 
	    and \cite{Shu-gaussian}.
\item \label{rel-bdd} 
In view of the diamagnetic inequality, see Theorem 2.4 and Theorem 2.5
from \cite{AHSI}, we see that $W$ is  infinitesimally $H_0$-bounded in
the operator and form sense. In particular, we have that for any
$\varepsilon>0$ there exist constants $C(\varepsilon)>0$ such that
\begin{equation}
\label{eq:3}
  \norm{W\varphi}^2\le \varepsilon\norm{H_0
  \varphi}^2+C(\varepsilon)\norm{\varphi}^2\,,\quad
\varphi\in \mathcal{D}(H_0)\,
\end{equation}
and 
\begin{equation}
\label{eq:5}
  \ab{\int_{\R^2}W(\bx)\ab{\varphi(\bx)}^2 \d \bx }\le \varepsilon \,\form{\varphi}+C(\varepsilon)\norm{\varphi}^2\,,\quad
\varphi\in \calD(q_0)\,.
\end{equation}
\item Notice that the diamagnetic inequality  (form bounded with respect to the nonmagnetic kinetic energy implies   the same for the magnetic kinetic energy) in this gauge, is an easy consequence of our analysis of the magnetic quadratic form, see Appendix \ref{app-polar coordinates}: Lemma \ref{lem radial energy boundedness} and Remark \ref{rem:diamagnetic}. 
\end{remarklist}
\end{remarks}
In the work at hand we study the dynamics of a quantum particle
governed by the Hamiltonian
\begin{align}
  \label{eq:4}
  H\varphi:=H_0\varphi+W\varphi\,,\qquad \varphi\in \calD(H)=\calD(H_0)\,.
\end{align}
{  In view of Remark \ref{rel-bdd}}  and the Kato-Rellich theorem the operator $H$
is bounded from below and self-adjoint. 

For $\vp,\psi$ in the form domain of $W$, $\calQ(W)$, one checks that as quadratic forms 
\begin{align}
	\SP{\vp}{W\psi} = \sum_{j,k\in\Z} \SP{\vp_j}{\widehat{W}(\cdot,j-k)\psi_k}_{L^2(\R_+,rdr)}
\end{align}
and thus the quadratic form of the magnetic Schr\"odinger operator $H=H_0+W$ is given by 
\begin{equation}\label{def-magn form}
  \begin{split}
	q(\vp,\psi)&= \SP{\vp}{H\psi} \\
	 &= \sum_{j\in \Z} \left(
	         \SP{\partial_r \varphi_j}{\partial_r    {\psi}_j}_{L^2(\R_+,rdr)}
             +\SP{\varphi_j}{V_j\psi_j}_{L^2(\R_+,rdr)}
         \right) \\
     &\phantom{==}       +  \sum_{j,k\in\Z} \SP{\vp_j}{\widehat{W}(\cdot,j-k)\psi_k}_{L^2(\R_+,rdr)} \\
     &=  \SP{\partial_r \varphi}{\partial_r    {\psi}}
        +\SP{\varphi}{V\psi} + \SP{\varphi}{W\psi}\,.
  \end{split}	
\end{equation}
 see also the discussion in Appendix \ref{app-polar coordinates} for more details.

In order to state our results in a concise way we will work separately with the following two conditions on the magnetic flux  $\Phi $ given in \eqref{rotgauge}. 
\begin{condition}\label{con1}
	Let $\Phi \in L^2_{\rm loc}(\R^+, \d r/r)$ such that there are  constants 
	$\lambda_+<\infty $ and $\sigma_+ > 1$ such that
	\begin{align}
	\label{int-con}
	|\Phi(r)|\le \lambda_+(1+ r^{\sigma_+})\,, 
	\end{align} 
	for all $r>0$. 
\end{condition}
\begin{condition}\label{con1b}
	Let $\Phi \in L^2_{\rm loc}(\R^+, \d r/r)$ be such that there are 
	constants $ r_0>1$, $\lambda_->0 $ and 
	$\sigma_-> 1$ such that
	\begin{align}
	\label{ext-con}
	|\Phi(r)|\ge \lambda_-\, r^{\sigma_-}\quad r\ge r_0
	\,.
	\end{align} 
\end{condition}
\begin{remark}\label{rk1}
It is interesting  to consider  the model case where 
the magnetic field is bounded and asymptotically decays to zero as   
	\begin{align}
	\label{eqn:3}
	B(r)\sim  r^{-\alpha},\qquad \text{for } r\to\infty\,,
	\end{align}
	for some $\alpha< 1$. Here conditions \ref{con1} and \ref{con1b} are fulfilled with  $\sigma_+=\sigma_-=2-\alpha$.   

\end{remark}
Our results concern the time evolution of a state $\varphi$ with
energy on a bounded interval $I\subset \R$. In order to state them we
introduce some notation that will be used throughout this work.
{  Let $e_0:={\rm inf}\, {\rm spec}(H)$, be the infimum of the spectrum of $H$}  and let $E_0 \in (e_0, \infty)$ be a
fixed constant. We set $$ I:=[e_0,E_0]$$ and denote by $E_I(H)$ the
spectral projection of $H$ onto the interval $I$.
Let $U(t)\coloneqq e^{-itH}$ be the time evolution operator associated to $H$.
For any initial state $\vp\in L^2(\R^2)$ we denote by $\vp(t):=U(t)\vp$ the
state of the system at time $t$.

We are now ready to state our main results.
{
\begin{theorem}\label{thm-tunnelling}
	Assume that Condition \ref{con3}  is satisfied.
	\begin{enumerate}[i{\rm)}]
		\item {\rm(}Interior tunnelling estimates{\rm)}. 
Under Condition \ref{con1} there exist constants $c_+\in (0,1]$ and $\delta_+>0$ such that 
	\begin{align}
	\label{lemma 1.1 a}
	\sum_{j\in\Z} e^{\delta_+ |j|^\zeta}\| \mathds{1}_{ [0,c_+ |j|^{\zeta/\sigma_+} ]  }(\ab{\bx}) P_j E_I(H)\|^2 < \infty\,.
	\end{align}
\item (Exterior tunnelling estimates). Under Condition \ref{con1b}  
there exist constants $c_-\ge r_0$ and $\delta_->0$ such that 
	\begin{align} 
	\label{lemma 1.1 b}
	\sum_{j\in\Z}  \| \mathds{1}_{ [c_- |j|^{\zeta /\sigma_-} ,\infty ) }(\ab{\bx}) \, e^{\delta_- |\t{\textbf{x}}|^{\zeta\sigma_-} } \, P_j E_I(H)\|^2 < \infty\,.
	\end{align}
	Here $\mathds{1}_A$ stands for the indicator function of the set $A$ and {   $\sigma_-$ and $\sigma_+$ are  the parameters given in condition \ref{con1} and \ref{con1b}.}
	\end{enumerate}
\end{theorem}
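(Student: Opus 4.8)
My plan is to deduce both estimates from the general exponential‑decay bound for the spectral projections $E_I(H)$ in Theorem~\ref{thm 3}, applied in the angular decomposition with a weight tailored, sector by sector, to the effective potential $V_j(r)=r^{-2}(\Phi(r)-j)^2$; the perturbation $W$ is fed in through Condition~\ref{con3} and the subadditivity of $t\mapsto t^{\zeta}$. The first step is purely elementary. For~(i): choosing $c_+$ small, \eqref{int-con} forces $|\Phi(r)|\le\tfrac12|j|$ on $[0,c_+|j|^{\zeta/\sigma_+}]$ once $|j|$ is large, so $|\Phi(r)-j|\ge\tfrac12|j|$ and
\begin{align*}
  V_j(r)\ \ge\ \tfrac14 j^2 r^{-2}\ \ge\ \kappa_+|j|^{2-2\zeta/\sigma_+},\qquad 2-\tfrac{2\zeta}{\sigma_+}>0,
\end{align*}
so this ball lies arbitrarily deep inside the classically forbidden region for the fixed energy window $I$, and in fact $V_j$ stays that large out to radius of order $|j|^{1/\sigma_+}$. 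For~(ii): by \eqref{ext-con}, once $r$ exceeds a multiple of $|j|^{\zeta/\sigma_-}$ large enough that $|\Phi(r)|\ge 2|j|$, one has $|\Phi(r)-j|\ge\tfrac12|\Phi(r)|$ and hence $V_j(r)\gtrsim r^{2\sigma_--2}\to\infty$, so the exterior of such a ball is again deep in the forbidden region.

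For~(i) I would then take, in the $j$‑sector, a Lipschitz weight $f_j$ equal to $\delta_+'|j|^{\zeta}$ on $[0,c_+|j|^{\zeta/\sigma_+}]$ and decreasing to $0$ across the annulus on which $V_j\gtrsim|j|^{2-2\zeta/\sigma_+}$, with slope so small that $|f_j'|^2\le(1-\delta)(V_j-E_0)$ everywhere and $f_j'\equiv0$ where $V_j\le E_0$; for~(ii), a weight $g_j$ that vanishes below the threshold radius above and then grows like $\delta_-(r^{\zeta\sigma_-}-r_{\mathrm{thr}}^{\zeta\sigma_-})$. The exponent $\zeta\sigma_-$ is dictated not by $V_j$ (which would allow the faster rate $r^{\sigma_-}$) but by the requirement that for any two sectors $g_j(r)-g_k(r)\le\delta'\,|j-k|^{\zeta}$ — which subadditivity of $t\mapsto t^{\zeta}$ delivers exactly for this rate and threshold, and which is what will let us absorb $W$. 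Assembling the $f_j$ (resp.\ $g_j$) into an operator diagonal in the angular decomposition and invoking Theorem~\ref{thm 3} gives $\|P_jE_I(H)e^{f_j}\|\le C$ uniformly in $j$ (resp.\ with $g_j$); since $f_j\equiv\delta_+'|j|^{\zeta}$ on the support of the indicator in \eqref{lemma 1.1 a}, this reads $\|\mathds{1}_{[0,c_+|j|^{\zeta/\sigma_+}]}P_jE_I(H)\|\le Ce^{-\delta_+'|j|^{\zeta}}$, so the series converges as soon as $2\delta_+'>\delta_+$. For \eqref{lemma 1.1 b} one combines $\|P_jE_I(H)e^{g_j}\|\le C$ with the observation that the state's tail has already decayed by a factor $e^{-\mathrm{dist}}$ at $r_{\mathrm{thr}}$, the Agmon distance from the classically allowed well to $r_{\mathrm{thr}}$ being of order $|j|$ and therefore beating the (bounded on the relevant set) discrepancy between $e^{g_j}$ and the weight $e^{\delta_-|\bx|^{\zeta\sigma_-}}$ appearing in the statement.

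The step where Theorem~\ref{thm 3} actually does work — and the main obstacle — is $W$. Conjugating the form $q$ by $e^{f_j}$ turns the off‑diagonal block $\widehat{W}(\cdot,j-k)$ into $e^{f_j(\cdot)}\widehat{W}(\cdot,j-k)\,e^{-f_k(\cdot)}$, of pointwise modulus at most $b(r)\,e^{(f_j(r)-f_k(r))-a|j-k|^{\zeta}}\le b(r)\,e^{-(a-\delta')|j-k|^{\zeta}}$ by Condition~\ref{con3} and the weight‑difference bound above; a Schur estimate in the sector index then shows $e^{F}We^{-F}$ is bounded relatively to $-\Delta$, hence — via the diamagnetic inequality, i.e.\ Lemma~\ref{lem radial energy boundedness} and Remark~\ref{rem:diamagnetic} — relatively to $H_0$, by a multiple of $v\in L^2+L^\infty$ with arbitrarily small relative bound once $\delta'<a$; thus $e^{F}He^{-F}$ satisfies the hypotheses of Theorem~\ref{thm 3} with constants independent of $j$. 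I expect the genuinely delicate part to be aligning the three scales entering the Agmon‑type inequality behind Theorem~\ref{thm 3} — the support of $\nabla F$ (where $V_j$ is large), the size of $F$ on the classically allowed region (which controls the error term there), and the growth rate of $F$ still compatible with Condition~\ref{con3} — and reading off from this the admissible constants $c_\pm,\delta_\pm$; once those are pinned down, convergence of the two series in \eqref{lemma 1.1 a}--\eqref{lemma 1.1 b} is essentially automatic.
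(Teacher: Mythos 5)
Your proposal is correct and follows essentially the same route as the paper: lower-bound the effective potential $V_j$ in the classically forbidden region to the left (using Condition~\ref{con1}) and to the right (using Condition~\ref{con1b}) of the classical annulus — this is exactly the paper's Lemma~\ref{cfr} — then build sector-dependent piecewise-$C^1$ Agmon weights with slope controlled by $\sqrt{V_j-\wti{E}}$, with the inter-sector Lipschitz bound $|F_j-F_k|\lesssim|j-k|^\zeta$ arranged via subadditivity of $t\mapsto t^\zeta$ so that $W$ can be absorbed through Condition~\ref{con3}, and feed the result into Theorem~\ref{thm 3}. The paper's explicit weights ($F_j(r)=|j|^{\zeta(1-1/\sigma_+)}(\ve|j|^{\zeta/\sigma_+}-r)_+$ for the interior estimate and $G_j(r)=c[r^{\zeta\sigma_-}-\eta^{\zeta\sigma_-}(1+|j|)^\zeta]_+$ for the exterior) differ from your ``plateau-then-ramp'' interior weight only cosmetically, and your reason for choosing the rate $r^{\zeta\sigma_-}$ rather than the $V_j$-permitted $r^{\sigma_-}$ (i.e.\ that \eqref{efw} and not \eqref{F'} is binding) is exactly the paper's.
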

}
\begin{remarks}
\begin{remarklist}
  \item 
   The interior and exterior tunnelling bounds above show \emph{strong decay} of the spectral projection $E_I(H)$ for finite energy intervals $I$ into the classically forbidden region. They are derived from the exponential decay of the energy projections described in Section~\ref{edcay}. Remarkably, these bounds are valid in a regime, where the unperturbed operator $H_0$ has dense point spectrum. 

\item\label{tunn estimate}
 The bound  \eqref{lemma 1.1 a}  also immediately implies that there 
 exists  a positive constant $C $ such that 
	\begin{equation}
	\n{ \mathds{1}_{ [0,c_+ |j|^{\zeta/\sigma_+} ]  }(\ab{\bx})\, P_j E_I(H) } \leq C\, e^{-\delta_+ |j|^\zeta}\,,  \qquad j\in\Z\,.
	\end{equation}
 \item \label{tunn estimate2}
	Consider the example of Remark \ref{rk1}. 	Theorem \ref{thm-tunnelling} indicates  that a wave-function with energies in the interval $I$ and angular momentum $j\in\Z$, $P_jE_I(H)\varphi$, is essentially localized in the annulus between  $c_+ |j|^{1/(2-\alpha)}$ and $c_- |j|^{1/(2-\alpha)}$. This is in fact the scale where the classically allowed region (see \eqref{class allowed} below) for $P_jE_I(H)\varphi$ is located. For further details 
	see Section \ref{aplications}, where the proof of Theorem \ref{thm-tunnelling} is given.
\end{remarklist}
\end{remarks}

 Our next theorem states
that the expectation of $\ab{\bx}$ in time, is dominated by the
expectation of the angular momentum operator in time to certain
power. This power  depends on the behavior of the  magnetic flux far
from the origin, see Condition \ref{con1b}. 
\begin{theorem}[{$J(t) $ controls $x(t)$}]
\label{thm 1}
Let $H$ be the Hamiltonian defined in \eqref{eq:4} and assume that conditions \ref{con3} and
\ref{con1b} are satisfied. Then, for any $\nu>0$, there exists a constant $C>0$ such
that for all {  initial states $\vp\in E_{I}(H) L^2(\R^2)$} 
\begin{equation}
\SP{\varphi(t)}{|\bx|^{\nu }\vp(t)} \leq C \( \n{\vp}^2  +    \SP{\vp(t)}{|J|^{\zeta\nu/{ \sigma_-}} \vp(t)} \)\,, \quad \textnormal{for}\quad t\in \R\,.
\end{equation}
\end{theorem}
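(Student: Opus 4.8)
The plan is to reduce the estimate, sector by sector in the angular momentum, to the exterior tunnelling bound of Theorem~\ref{thm-tunnelling}(ii). Throughout I work in the unitarily equivalent picture $\H=L^2(\R^+\times\S^1, r\,\d r\,\d\theta)$, use that multiplication by $r^\nu$ is rotationally invariant and hence commutes with $J$ and with every eigenprojection $P_j$, and read the inequality as an identity in $[0,\infty]$ so that no a priori finiteness of the left-hand side is assumed. Writing $\vp(t)=\sum_{j\in\Z}P_j\vp(t)$, the vectors $r^{\nu/2}P_j\vp(t)$ are mutually orthogonal, so Parseval together with monotone convergence gives
\[
  \SP{\vp(t)}{|\bx|^\nu\vp(t)}=\n{r^{\nu/2}\vp(t)}^2=\sum_{j\in\Z}\n{r^{\nu/2}P_j\vp(t)}^2 .
\]
For each $j$ I would then split the radial variable at the critical radius $R_j:=c_-|j|^{\zeta/\sigma_-}$, with $c_-,\delta_->0$ the constants furnished by Theorem~\ref{thm-tunnelling}(ii):
\[
  \n{r^{\nu/2}P_j\vp(t)}^2=\n{\mathds{1}_{[0,R_j]}(\ab{\bx})\,r^{\nu/2}P_j\vp(t)}^2+\n{\mathds{1}_{[R_j,\infty)}(\ab{\bx})\,r^{\nu/2}P_j\vp(t)}^2 .
\]

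On the inner region one has $r^{\nu/2}\le R_j^{\nu/2}=c_-^{\nu/2}|j|^{\zeta\nu/(2\sigma_-)}$, so summing the inner terms over $j$ produces at most
\[
  c_-^\nu\sum_{j\in\Z}|j|^{\zeta\nu/\sigma_-}\n{P_j\vp(t)}^2=c_-^\nu\,\SP{\vp(t)}{|J|^{\zeta\nu/\sigma_-}\vp(t)},
\]
which is precisely the second term on the right-hand side. For the outer region I would first use that $\vp=E_I(H)\vp$ and that $U(t)=e^{-itH}$ is a function of $H$, hence $\vp(t)=E_I(H)\vp(t)$ and $P_j\vp(t)=P_jE_I(H)\vp(t)$. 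Factoring $r^{\nu/2}=\big(r^{\nu/2}e^{-\delta_- r^{\zeta\sigma_-}}\big)\,e^{\delta_- r^{\zeta\sigma_-}}$ and noting that, since $\zeta\sigma_->0$, the function $g(r):=r^{\nu/2}e^{-\delta_- r^{\zeta\sigma_-}}$ is bounded on $\R^+$ by some $M=M(\nu,\delta_-,\zeta,\sigma_-)$, I get
\[
  \n{\mathds{1}_{[R_j,\infty)}(\ab{\bx})\,r^{\nu/2}P_j\vp(t)}\le M\,\n{\mathds{1}_{[R_j,\infty)}(\ab{\bx})\,e^{\delta_- r^{\zeta\sigma_-}}P_jE_I(H)}\,\n{\vp(t)} .
\]
Using $\n{\vp(t)}=\n{\vp}$, squaring, summing over $j$ and invoking \eqref{lemma 1.1 b} bounds the total outer contribution by $M^2C_-\n{\vp}^2$, where $C_-<\infty$ is the value of the convergent series in Theorem~\ref{thm-tunnelling}(ii). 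Taking $C:=\max\{c_-^\nu,M^2C_-\}$ gives the claim for all $t\in\R$.

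The genuine content of the theorem is hidden in the exterior tunnelling estimate \eqref{lemma 1.1 b}; granting it, the argument above is essentially bookkeeping, so I do not expect a serious obstacle here. The only points that need a little care are the interchange of the sum over $j$ with the a priori possibly infinite expectations — handled by keeping every quantity nonnegative and appealing to monotone convergence — and the elementary fact that the stretched-Gaussian weight $e^{-\delta_- r^{\zeta\sigma_-}}$ dominates any fixed power $r^{\nu/2}$ uniformly in $r\ge 0$, which is where $\sigma_->1$ (in fact only $\zeta\sigma_->0$) enters. Note also that the resulting $C$ depends on $\nu$ only through $M$ and the factor $c_-^\nu$, consistently with the statement.
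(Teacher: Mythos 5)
Your proposal is correct and follows essentially the same route as the paper: decompose into angular momentum sectors, split each $P_j$ component at the radius $c_-|j|^{\zeta/\sigma_-}$, bound the inner part by the $|J|^{\zeta\nu/\sigma_-}$ expectation, and bound the outer part via the exterior tunnelling estimate \eqref{lemma 1.1 b} after absorbing $r^{\nu/2}$ into the stretched-exponential weight. Your extra care about orthogonality, monotone convergence, and reading the identity in $[0,\infty]$ is sound and consistent with the paper's (slightly terser) presentation.
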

\begin{remarks}
\begin{remarklist}
\item \label{rt1}
   Assume further that $W$ is rotationally symmetric. Then, the time evolution $U(t)$
   commutes with $\ab{J}$ (i.e., angular momentum is conserved), hence Theorem \ref{thm 1} implies dynamical
   localization for any $\varphi\in \calD(|J|^{\zeta \nu/(2\sigma_-)})$, i.e., 
\begin{align*}
  {\rm sup}_{t\ge 0} \SP{\varphi(t)}{|\bx|^{\nu }\vp(t)}<\infty\,.
\end{align*}
\item \label{rt2}
The main point of Theorem \ref{thm 1} is the dynamical estimate when
$E_0\ge 0$. If $E_0<0$ then one can easily prove dynamical
localization since the spectrum of $H$ below zero is discrete.
	
\end{remarklist}
\end{remarks}
In order to formulate the next theorem we define the symmetric and
non-symmetric parts of the potential $W$ by writting
\begin{align*}
  W=W_{\rm s}+W_{\rm ns}\,,
\end{align*}
where $W_{\rm s}$ is the radially  symmetric part of $W$ given, for
almost all $r>0$, by
\begin{align*}
  W_{\rm s}(r):=\frac{1}{{2\pi}} \int_0^{2\pi} W(r,\theta) \t{d} \theta\,.
\end{align*}
If we assume some further  decay in space for $W_{\rm ns}$, 
we  obtain bounds for the expectation  of the angular momentum in
time. We state our results for two different classes of decay {  of $W_{\rm ns}$}.
\begin{theorem}[{Bounds on $J(t)$}]
	\label{thm 2}
Assume that conditions \ref{con3} and
\ref{con1} are satisfied.
\begin{enumerate}
\item[{\rm(}i{\rm)}] Suppose that for $p>\sigma_+/\zeta$   
\begin{equation} \label{powerlike1}
W_{\rm ns} (\t{\textbf{x}})  =  \mathcal{O}\(\frac{1}{|\bx|^p}\), \qquad |\bx|\rightarrow \infty\,.
\end{equation}
Then, for any $0<\beta <(\zeta p-\sigma_+)/\sigma_+$ there exists $C>0$ such that for all {  initial states $\vp \in  E_{I}(H) L^2(\R^2)$} we have 
\begin{equation}
\||J|^{\beta/2} \vp(t) \|^2 \leq \| |J|^{\beta/2} \vp   \|^2 + C\, t^{ \gamma\beta   }\n{\vp}^2\,, \qquad \qquad t>1\,,
\end{equation}
where $\gamma = \frac{\sigma_+}{\zeta p-\sigma_+}$.
\item[{\rm(}ii{\rm)}] Suppose that there exists $\mu>0$ and $s>0$ such that
\begin{equation}\label{gaussian}
W_{\rm ns}(\bx) = \mathcal{O} \big(\exp(-\mu |\bx|^{s})  \big), \qquad  |\bx|\rightarrow \infty\,.
\end{equation}
Then,  for any $\beta >0$,  there exists $C >0$ such that for all {  initial states $\vp\in E_{I}(H) L^2(\R^2)$} we have
\begin{equation} 
\||J|^{\beta/2} \vp(t) \|^2 \leq \| |J|^{\beta/2} \vp   \|^2  + C \,
\big( \ln (t) ^{\theta \beta } +1 \big)   \n{\vp}^2\,, \qquad \qquad t>1\,,
\end{equation} 
where $\theta = 1/{\rm min}\{\zeta,\zeta s / { \sigma_+}\}$.
\end{enumerate}
\end{theorem}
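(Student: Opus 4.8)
\textbf{Proof plan for Theorem~\ref{thm 2}.}
The strategy is to estimate the time derivative of $\||J|^{\beta/2}\vp(t)\|^2$ using Heisenberg's equation and the commutator $[H,|J|^\beta]$. Since $H_0$ and $|J|$ commute (the effective potential decomposition \eqref{eq:12} is diagonal in $j$), only $W_{\rm ns}$ contributes, so $\tfrac{d}{dt}\SP{\vp(t)}{|J|^\beta\vp(t)} = i\SP{\vp(t)}{[W_{\rm ns},|J|^\beta]\vp(t)}$ (the symmetric part $W_{\rm s}$ also commutes with $|J|$). In the angular-momentum basis the off-diagonal matrix element of this commutator between the $j$ and $k$ sectors carries the factor $\bigl||j|^\beta - |k|^\beta\bigr|$ times $\widehat{W_{\rm ns}}(r,j-k)$, and by Condition~\ref{con3} the latter is bounded by $b(r)e^{-a|j-k|^\zeta}$. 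The key point is that $W_{\rm ns}$ is supported (effectively) far from the origin where the tunnelling estimate \eqref{lemma 1.1 a} of Theorem~\ref{thm-tunnelling} forces $P_jE_I(H)$ to be tiny for $r \lesssim |j|^{\zeta/\sigma_+}$: when the decay hypothesis \eqref{powerlike1} or \eqref{gaussian} is combined with the bound $\n{\mathds{1}_{[0,c_+|j|^{\zeta/\sigma_+}]}(|\bx|)P_jE_I(H)}\le Ce^{-\delta_+|j|^\zeta}$ from Remark~\ref{tunn estimate}, the contribution of each sector to the commutator is controlled by the tail of $W_{\rm ns}$ at scale $|j|^{\zeta/\sigma_+}$, i.e.\ by $|j|^{-\zeta p/\sigma_+}$ in case (i) and by $\exp(-\mu' |j|^{\zeta s/\sigma_+})$ in case (ii).

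First I would set up the differential inequality rigorously: introduce $N_\beta := |J|^\beta$, work on a dense domain, and justify differentiating $t\mapsto \SP{\vp(t)}{N_\beta\vp(t)}$ by a standard approximation/regularization argument (e.g.\ replacing $N_\beta$ by $N_\beta(1+\epsilon N_\beta)^{-1}$ and removing the cutoff at the end, using that $\vp(t)=E_I(H)\vp(t)$ stays in a fixed spectral subspace). Next I would expand the commutator expectation as a double sum over $j,k\in\Z$ and use $\bigl||j|^\beta-|k|^\beta\bigr|\le C(|j|^{\beta-1}+|k|^{\beta-1})|j-k|$ for $\beta\ge 1$ (and a Hölder-type bound $\bigl||j|^\beta-|k|^\beta\bigr|\le |j-k|^\beta$ for $\beta\le1$) to separate the ``angular-momentum gain'' factor from the decaying Fourier coefficient $e^{-a|j-k|^\zeta}$; the sum over the difference $m=j-k$ then converges and leaves a bound of the form $\sum_j (\text{weight}_j)\,\n{g(|\bx|)P_jE_I(H)}\n{P_j\vp(t)}$ where $g$ encodes the spatial decay of $W_{\rm ns}$. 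Splitting the radial integral at $r=c_+|j|^{\zeta/\sigma_+}$, the inner part is handled by the tunnelling bound and the outer part by the explicit decay of $W_{\rm ns}$; in both regimes one obtains $\tfrac{d}{dt}\SP{\vp(t)}{N_\beta\vp(t)} \le C\SP{\vp(t)}{N_{\beta-\delta}\vp(t)}^{1/2}\n{\vp}$ for a suitable $\delta>0$, or more directly a bound $\le C\,h(t)$ after using $\n{\vp(t)}=\n{\vp}$ and summing, where $h(t)$ is polynomially (case i) or $\mathrm{polylog}$ (case ii) growing once one optimizes the truncation scale against time.

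Finally I would close the estimate by a bootstrap/Grönwall-type argument: in case (i) the natural move is to balance, for each fixed time, the ``large-$j$'' sectors (cut off at $|j|\sim R(t)$) which are controlled by the $W_{\rm ns}$ decay giving a factor $R(t)^{\beta-\zeta p/\sigma_+}$, against the ``small-$j$'' sectors bounded by $R(t)^\beta\n\vp^2$, and then integrate the resulting differential inequality for $\SP{\vp(t)}{N_\beta\vp(t)}$; choosing $R(t)$ optimally yields the exponent $\gamma\beta$ with $\gamma=\sigma_+/(\zeta p-\sigma_+)$. In case (ii) the exponential decay of $W_{\rm ns}$ means the truncation scale can be taken $\sim (\ln t)^{1/s}$ (or $(\ln t)^{\sigma_+/(\zeta s)}$ after matching with the tunnelling exponent $|j|^\zeta$), producing the $(\ln t)^{\theta\beta}$ growth with $\theta = 1/\min\{\zeta,\zeta s/\sigma_+\}$. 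The main obstacle I anticipate is twofold: (a) making the commutator computation and the time-differentiation fully rigorous given that $W_{\rm ns}$ is only form-bounded and $|J|^\beta$ is unbounded—this requires care with domains and a regularization that is compatible with both $H$ and $J$; and (b) carrying out the optimization over the truncation scale so that the tunnelling exponent $e^{-\delta_+|j|^\zeta}$ and the spatial-decay exponent genuinely combine to give the sharp powers stated, rather than a weaker exponent. The restriction $p>\sigma_+/\zeta$ (resp. the appearance of $\min\{\zeta,\zeta s/\sigma_+\}$) is exactly what guarantees the relevant sums converge and the optimization produces an exponent $<1$ (resp. a genuine $\mathrm{polylog}$).
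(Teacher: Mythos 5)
Your proposal identifies the right ingredients (Heisenberg evolution, the observation that only $W_{\rm ns}$ matters, Condition~\ref{con3} for the Fourier coefficients, the interior tunnelling bound of Remark~\ref{tunn estimate}, and a time-dependent truncation scale in $j$), but the route you sketch is noticeably heavier than the paper's, and a couple of the steps as written would be hard to push through. The paper's key move, which you don't quite hit, is to put the truncation \emph{before} Heisenberg: bound $\||J|^{\beta/2}\vp(t)\|^2 \le M^\beta\|\vp\|^2 + \sum_{|j|>M}|j|^\beta\|P_j\vp(t)\|^2$ for arbitrary $M>0$, and then apply Heisenberg's equation to each \emph{bounded} observable $\|P_j\vp(t)\|^2$ (with commutator $[W_{\rm ns},P_j]$), rather than to the unbounded $|J|^\beta$. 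This sidesteps entirely the domain/regularization issue you flag in (a), avoids the double sum over $(j,k)$ and the factor $\bigl||j|^\beta-|k|^\beta\bigr|$ (the weight $|j|^\beta$ stays outside the commutator as a single sum), and makes the optimization over $M=M(t)$ a one-line balance rather than a Gr\"onwall/bootstrap. Your ``differentiate then truncate at $R(t)$'' plan has a concrete pitfall: if the cutoff scale is time-dependent, differentiating the truncated quantity produces boundary terms that you haven't accounted for; and if you keep the cutoff fixed while integrating, the bound $\tfrac{d}{dt}\SP{\vp(t)}{N_\beta\vp(t)}\le C$ (after summing the tails) only gives \emph{linear} growth in $t$, not the sublinear $t^{\gamma\beta}$ claimed. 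The paper gets the sharp exponent precisely because the $M$-truncation is applied to the \emph{integrated} inequality $\|P_j\vp(t)\|^2 = \|P_j\vp\|^2 + i\int_0^t\SP{\vp(s)}{[W_{\rm ns},P_j]\vp(s)}\,\d s$, so the whole estimate is $\le M^\beta\|\vp\|^2 + t\cdot(\text{tail sum depending on }M)\|\vp\|^2$, and one then optimizes $M(t)$ at the very end. Once you replace the differential-inequality framework by this integrated version, the rest of your plan (tunnelling on $r\lesssim c_+|j|^{\zeta/\sigma_+}$, decay of $W_{\rm ns}$ on the complement, choose $M(t)=t^{\sigma_+/(\zeta p-\sigma_+)}$ in case (i) and $M(t)\sim(\ln t)^{1/\kappa}$ with $\kappa=\min\{\zeta,\zeta s/\sigma_+\}$ in case (ii)) lines up with the paper's.
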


The proofs of 
  	Theorem~\ref{thm 1} and Theorem~\ref{thm 2} are given 
  	in Section \ref{section thm 2}. Let us emphasize that 
  	 while Theorem~\ref{thm 1} uses Condition \ref{con1b} through the exterior  tunnelling estimate \eqref{lemma 1.1 b}, Theorem~\ref{thm 2} requires   Condition \ref{con1} 
  	in order to apply   \eqref{lemma 1.1 a}.
  The next result is a direct combination of  Theorem~\ref{thm 1} and
Theorem~\ref{thm 2}.

\begin{theorem}[{Bounds on $x(t)$}]
	\label{coro 1}
 Assume that conditions \ref{con3} and
 \ref{con1}, and \ref{con1b} are satisfied with $0<\zeta\le 1$, 
 $1< \sigma_-\le \sigma_+$. Then 
\begin{enumerate}
  \item[(i)]  Assume that  
     $ W_{\rm ns} (\t{\textbf{x}})  =  \mathcal{O}(\tfrac{1}{|\bx|^p}),$ as $|\bx|\rightarrow \infty$,  for some $p>\sigma_+/\zeta$.  
     Then, for any $0< \nu <   \frac{\sigma_-}{\zeta}( \frac{\zeta p - \sigma_+}{\sigma_+})$, 
     there exists $C>0$ such that, for all $\vp\in E_{I}(H) L^2(\R^2)$, we have
\begin{equation}
\SP{\vp(t)} { |\bx|^{\nu} \vp(t)} 
\le C \left(   \SP{\vp}{|J|^{ \zeta \nu/\sigma_-} \vp}    +  t^{ \, \varepsilon_p  }   \n{\vp}^2     \right) \,, \qquad t>1\, .
\end{equation}
  where $\varepsilon_p =  \frac{\zeta }{\sigma_-} (\frac{\sigma_+ }{\zeta p - \sigma_+} ) \nu < 1 . $
  
  \item[(ii)] Suppose that there exists $\mu>0$ and $s>0$ such that
$
W_{\rm ns}(\bx) = \mathcal{O} (\exp(-\mu |\bx|^{s}),$ as $ |\bx|\rightarrow \infty\,.
$
Then,  for any $\nu>0$, there exists $C>0$ such that, for all $\vp\in E_{I}(H) L^2(\R^2)$, we have
\begin{equation}
 \SP{\vp(t)} { |\bx|^{\beta} \vp(t)} \le C \left(
  \SP{\vp}{|J|^{ \zeta \nu /\sigma_-} \vp}  +  \ln(t)^{\theta_s  }   \n{\vp}^2     \right)\,, \qquad t>1\,,
\end{equation}
where $\theta_s  = \frac{1 }{\sigma_- \min \{ 1 , s/\sigma_+  \}} \, \nu    $.
\end{enumerate}
\end{theorem}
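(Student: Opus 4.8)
The plan is to obtain Theorem~\ref{coro 1} as a direct composition of Theorem~\ref{thm 1} with Theorem~\ref{thm 2}, essentially by matching the exponents. First I would fix an initial state $\vp\in E_I(H)L^2(\R^2)$; since $U(t)$ commutes with $E_I(H)$ the state $\vp(t)$ stays in the same spectral subspace for all $t$, so both earlier theorems apply with the same interval $I=[e_0,E_0]$. In case (i), given $\nu$ in the stated range, I would apply Theorem~\ref{thm 1} with this $\nu$ to get
\begin{equation*}
\SP{\vp(t)}{|\bx|^{\nu}\vp(t)} \le C\bigl(\n{\vp}^2 + \SP{\vp(t)}{|J|^{\zeta\nu/\sigma_-}\vp(t)}\bigr),
\end{equation*}
and then control the angular-momentum term by Theorem~\ref{thm 2}(i) applied with $\beta := \zeta\nu/\sigma_-$. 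The key bookkeeping point is that the constraint $0<\beta<(\zeta p-\sigma_+)/\sigma_+$ in Theorem~\ref{thm 2} translates precisely into $0<\nu<\tfrac{\sigma_-}{\zeta}\cdot\tfrac{\zeta p-\sigma_+}{\sigma_+}$, which is exactly the range assumed in Theorem~\ref{coro 1}(i). With that choice, Theorem~\ref{thm 2}(i) gives $\SP{\vp(t)}{|J|^{\beta}\vp(t)} = \||J|^{\beta/2}\vp(t)\|^2 \le \||J|^{\beta/2}\vp\|^2 + C\,t^{\gamma\beta}\n{\vp}^2$ with $\gamma=\sigma_+/(\zeta p-\sigma_+)$, so substituting back and noting $\||J|^{\beta/2}\vp\|^2 = \SP{\vp}{|J|^{\zeta\nu/\sigma_-}\vp}$ yields the claimed bound with $\varepsilon_p = \gamma\beta = \tfrac{\sigma_+}{\zeta p-\sigma_+}\cdot\tfrac{\zeta\nu}{\sigma_-}$; the condition $\nu<\tfrac{\sigma_-}{\zeta}\cdot\tfrac{\zeta p-\sigma_+}{\sigma_+}$ is exactly what makes $\varepsilon_p<1$. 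For (ii) the argument is identical: apply Theorem~\ref{thm 1} with the given $\nu$, then Theorem~\ref{thm 2}(ii) with $\beta=\zeta\nu/\sigma_-$ (no range restriction is needed there), giving the $\ln(t)$-type growth with $\theta\beta = \tfrac{1}{\min\{\zeta,\zeta s/\sigma_+\}}\cdot\tfrac{\zeta\nu}{\sigma_-} = \tfrac{\nu}{\sigma_-\min\{1,s/\sigma_+\}} = \theta_s$, and I would absorb the constant $\|\,|J|^{\beta/2}\vp\|^2$ into the first term on the right. (I would also flag the apparent typo in the statement of (ii), where $|\bx|^{\beta}$ should read $|\bx|^{\nu}$.)

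A subtle point worth checking explicitly is that both theorems require conditions \ref{con3}, \ref{con1}, and \ref{con1b} simultaneously, with the compatibility $1<\sigma_-\le\sigma_+$: Theorem~\ref{thm 1} needs \ref{con1b} (exterior flux growth), Theorem~\ref{thm 2} needs \ref{con1} (interior flux bound), and here we simply impose both, so nothing further is needed — the proof is purely a chaining of two black boxes. I would spend one line verifying that the $|J|^{\beta/2}\vp$ term on the right of Theorem~\ref{thm 2} is finite, which holds because $\vp=E_I(H)\vp$ lies in the domain of $|J|^{s}$ for every $s>0$ by the interior tunnelling estimate \eqref{lemma 1.1 a} (equivalently, Remark~\ref{tunn estimate}): summing $e^{\delta_+|j|^\zeta}\|\mathds 1_{[0,c_+|j|^{\zeta/\sigma_+}]}(|\bx|)P_jE_I(H)\|^2<\infty$ together with the complementary large-$r$ decay from \eqref{lemma 1.1 b} forces $\sum_j |j|^{\beta}\|P_j E_I(H)\vp\|^2<\infty$.

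There is no serious obstacle here — the content of Theorem~\ref{coro 1} is entirely in Theorems~\ref{thm 1} and \ref{thm 2}, and the only real work is making sure the exponent arithmetic in the ranges of $\nu$, $\beta$, $\gamma$, $\varepsilon_p$, $\theta$ and $\theta_s$ is internally consistent. The one place to be careful is the translation of the open interval for $\beta$ in Theorem~\ref{thm 2}(i) into the interval for $\nu$, and the verification that this is precisely the condition guaranteeing $\varepsilon_p<1$; a sign or inversion error there would be the likeliest mistake. I would present the proof in two short paragraphs, one per case, each consisting of "apply Theorem~\ref{thm 1}, then Theorem~\ref{thm 2}, collect constants."
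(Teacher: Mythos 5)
Your main argument is correct and is precisely the paper's approach: the paper offers no separate proof of Theorem~\ref{coro 1}, stating only that it is ``a direct combination of Theorem~\ref{thm 1} and Theorem~\ref{thm 2}'', and the exponent bookkeeping you lay out (set $\beta=\zeta\nu/\sigma_-$, translate the range for $\beta$ into the stated range for $\nu$, compute $\varepsilon_p=\gamma\beta$ and $\theta_s=\theta\beta$) is exactly the intended chaining. You are also right that $|\bx|^\beta$ in part~(ii) of the statement is a typo for $|\bx|^\nu$.

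One side remark in your proposal is wrong, though it does not affect the validity of the proof. You claim that the tunnelling estimates \eqref{lemma 1.1 a} and \eqref{lemma 1.1 b} force $\vp=E_I(H)\vp$ to lie in $\calD(|J|^{s})$ for every $s>0$, i.e.\ that $\sum_j|j|^\beta\|P_jE_I(H)\vp\|^2<\infty$. This does not follow: the tunnelling estimates control only the \emph{spatial support} of $P_jE_I(H)$ (showing that $P_jE_I(H)$ is exponentially small inside $r\lesssim|j|^{\zeta/\sigma_+}$ and outside $r\gtrsim|j|^{\zeta/\sigma_-}$), but say nothing about the \emph{magnitude} of $\|P_jE_I(H)\vp\|$ on the intermediate annulus, where the operator norm $\|\mathds 1_{[c_+|j|^{\zeta/\sigma_+},c_-|j|^{\zeta/\sigma_-}]}P_jE_I(H)\|$ can be of order one. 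A generic $\vp\in E_I(H)L^2$ need not decay in the angular-momentum variable at all. The correct reading of Theorems~\ref{thm 1}, \ref{thm 2}, and \ref{coro 1} is simply that the right-hand sides may be $+\infty$ for some initial data, in which case the bounds are vacuous; no extra verification is needed, and the one you propose would fail. Everything else in your write-up is fine.
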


\section{From tunnelling estimates to dynamical bounds}\label{section thm 2}
{  We start with the proof of Theorem \ref{thm 2} and consider  Theorem \ref{thm 1} at the end of this section. The proof of Theorem \ref{thm 2} is based upon certain dynamical bounds, that use Heisenberg's equation for $J(t)$, combined with the tunnelling estimates given in  Remark \ref{tunn estimate}. Before proceeding with the proof of Theorem \ref{thm 2} we establish the following. }
\begin{lemma}\label{Lns}
Assume that conditions \ref{con3} and \ref{con1} are
satisfied. Then, there is a constant $C_{\rm ns}\in (0,\infty)$ such
that
\begin{align*}
  \norm{E_I(H) W_{\rm ns}}\le C_{\rm ns}\,.
\end{align*}
\end{lemma}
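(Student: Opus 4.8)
The plan is to show that $W_{\rm ns}$ is bounded from the classically allowed region of every angular channel, while the tunnelling estimate \eqref{lemma 1.1 a} controls the (exponentially large number of) remaining pieces. First I would decompose the operator by angular momentum: writing $\vp = \sum_j \vp_j$ with $\vp_j = P_j\vp$, one has $W_{\rm ns}\vp = \sum_{j,k}\widehat{W}_{\rm ns}(\cdot,j-k)\vp_k$ projected onto channel $j$, so $P_j W_{\rm ns} E_I(H) = \sum_{k\neq j}\widehat{W}(\cdot,j-k)\,P_k E_I(H)$, using that the $k=j$ term is the symmetric part and hence absent from $W_{\rm ns}$. (Here I use $|\widehat{W}_{\rm ns}(r,m)|\le b(r)e^{-a|m|^\zeta}$ for $m\neq 0$, which follows from Condition \ref{con3}.)

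Next I would split each channel at the scale $r_j := c_+ |j|^{\zeta/\sigma_+}$ appearing in Theorem \ref{thm-tunnelling}(i). On the inner region $\{|\bx|\le r_j\}$ I would use the tunnelling bound in the form of Remark \ref{tunn estimate}, namely $\|\mathds{1}_{[0,r_j]}(|\bx|)\,P_j E_I(H)\|\le C e^{-\delta_+|j|^\zeta}$, together with $\|\widehat{W}(\cdot,m)\|_{L^\infty\to L^\infty}\le \|b\|_\infty$ from the $L^\infty$-piece of $v$ (the $L^2$-piece is handled by the diamagnetic/relative boundedness of Remark \ref{rel-bdd}, applied to $E_I(H)$ which maps into $\calD(H_0)$ with norm controlled by $|I|$ and $E_0$). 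On the outer region $\{|\bx|> r_j\}$ the decay of $b$ is not yet available in general, but the key point is the angular Gevrey decay: summing $\sum_{k\neq j} e^{-a|j-k|^\zeta}$ gives a finite constant independent of $j$, so the outer contribution is controlled by a fixed multiple of $\sup_j \|\mathds{1}_{(r_j,\infty)}(|\bx|)\,\|b\|\, \|P_k E_I(H)\|$, hence uniformly bounded.

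To assemble the bound on $\|E_I(H) W_{\rm ns}\|$ I would estimate, for arbitrary $\psi$, the quantity $\|E_I(H)W_{\rm ns}\psi\|^2 = \sum_j \|P_j E_I(H) W_{\rm ns}\psi\|^2$ by Cauchy--Schwarz in the $k$-sum against the Gevrey weights $e^{-a|j-k|^\zeta}$, producing a double sum of the form $\sum_{j,k} e^{-a|j-k|^\zeta}\|\,\cdots P_k E_I(H)\,\|^2 \|\psi\|^2$; summing first in $j$ the weight is bounded, and summing in $k$ the inner pieces give $\sum_k e^{-2\delta_+|k|^\zeta}<\infty$ while the outer pieces give $\sum_k \|P_k E_I(H)\|^2\cdot(\text{const})$ — but the latter need not converge, so I would instead keep the $b$-weight paired with the full projection and only split off $\mathds{1}_{[0,r_j]}$ when invoking the exponential bound, estimating the complementary indicator times $b$ times $P_k E_I(H)$ simply by $\|b\|_\infty \|P_k\| \le \|b\|_\infty$ and absorbing the sum over $j$ via the Gevrey weight with $k$ fixed.

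The main obstacle is bookkeeping the two-index sum so that the Gevrey factor $e^{-a|j-k|^\zeta}$ is always spent on the sum over the index that is \emph{not} carrying the exponentially small tunnelling factor, and verifying that the crossover scale $r_j=c_+|j|^{\zeta/\sigma_+}$ from Theorem \ref{thm-tunnelling} is matched to the exponent $\zeta$ in Condition \ref{con3} so that $e^{-a|j-k|^\zeta}$ dominates $e^{+\text{(growth of }b\text{ or of the number of channels)}}$ — this is exactly why the hypothesis $p>\sigma_+/\zeta$ never enters here but the angular analyticity $\zeta$ does. The relative boundedness from Remark \ref{rel-bdd} and the boundedness of $E_I(H)H_0$ on the finite interval $I$ take care of the $L^2$-part of $v$; everything else is the decomposition and the convergent geometric-type sums sketched above.
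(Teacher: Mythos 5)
Your proposal goes in a very different, and much more complicated, direction than the paper, and it has gaps that you yourself begin to notice mid-argument.

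The paper's own proof is a short soft argument that uses none of the tunnelling machinery. From Condition~\ref{con3} with $j=0$ one gets $|W_{\rm s}|\le b$, hence $W_{\rm ns}=W-W_{\rm s}$ inherits the infinitesimal $H_0$-bound of Remark~\ref{rel-bdd}, so $\mathcal{D}(W_{\rm ns})\supseteq \mathcal{D}(H_0)=\mathcal{D}(H)$. The Closed Graph Theorem then gives boundedness of $W_{\rm ns}(H+\lambda)^{-1}$ for $\lambda>-e_0$, and since $I$ is a bounded interval, $(H+\lambda)E_I(H)$ is bounded; the product bound finishes the proof. Note that you actually gesture at exactly this mechanism in a parenthetical aside (``the $L^2$-piece is handled by the diamagnetic/relative boundedness of Remark~\ref{rel-bdd}, applied to $E_I(H)$ which maps into $\calD(H_0)$ with norm controlled by $|I|$''), but you do not observe that this aside already proves the whole lemma and renders the rest of your argument unnecessary.

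The rest of your plan has genuine problems if taken on its own terms. You invoke $\|b\|_\infty$, but Condition~\ref{con3} only gives $v\in L^2+L^\infty$, so $b$ need not be essentially bounded; and nothing forces $b$ to decay at large radius, so the ``outer'' region cannot be controlled the way you describe. You then run into the double-sum bookkeeping where the Gevrey weight $e^{-a|j-k|^\zeta}$ is needed in one index while the tunnelling factor $e^{-\delta_+|k|^\zeta}$ sits in the other, and you correctly flag that $\sum_k\|P_kE_I(H)\|^2$ need not converge. This is not a small technical wrinkle: in general $\|P_kE_I(H)\|$ does not go to zero without using the tunnelling estimate, and your fix of ``keeping the $b$-weight paired with the full projection'' does not produce a convergent sum. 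Finally, from a structural standpoint, deploying Theorem~\ref{thm-tunnelling} (whose proof is a substantial part of the paper) to prove this preparatory lemma inverts the difficulty hierarchy: Lemma~\ref{Lns} is meant to be elementary input for the dynamical bounds, not a consequence of the deepest estimates in the paper.
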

\begin{proof}
   In view of Equation \eqref{asbe}, for $j=0$, and Remark~\ref{rel-bdd} we
  see that the inequality
  \eqref{eq:3} holds for $W_{\rm ns}$ as well as for $W_{\rm s}$. This
  implies that
  $\calD(W_{\rm ns})\supseteq \calD(H_0)=
  \calD(H)$.
  By the Closed Graph Theorem, we have that
  $ W_{\rm ns} (H+\lambda)^{-1}$ is a bounded operator, for some
  {  $\lambda >-e_0$}.  We conclude by observing that
  \begin{align*}
     \norm{ W_{\rm ns} E_I(H)}\le \norm{W_{\rm ns} (H+\lambda)^{-1}}
    \,\norm{ (H+\lambda) E_I(H)}< \infty\,. 
  \end{align*}
\end{proof}
\begin{proof}[Proof of Theorem \ref{thm 2}]
For any $\vp\in  E_{I}(H)L^2(\R^2)$ and  $M>0$ we have
\begin{equation}\label{estimate 1}
\n{ |J|^{\beta/2}  \vp(t)   }^2 \leq M^\beta \n{\vp}^2 + \sum\nolimits_{|j|>M} |j|^\beta \n{P_j \vp (t)}^2.
\end{equation}
Using Heisenberg's evolution equation we get
$$ \n{P_j \vp(t)}^2 = \n{P_j \vp}^2 + i \int_0^t \< \vp(s), [W,P_j] \, \vp(s)   \>  \d s\,.$$
Notice that $[W,P_j]=[W_{\rm ns},P_j]$. The above equation combined
with \eqref{estimate 1} yields
\begin{equation}\label{Jbeta}
\begin{split} 
\n{ |J|^{\beta/2}  \vp(t)   }^2 &\le \n{ |J|^{\beta/2}  \vp }^2 +
M^\beta \n{\vp}^2   + \sum_{|j|>M} |j|^\beta \int_0^t  |\<\vp(s),
[W_{\rm ns},P_j] \, \vp(s)   \>| \, \d s\\
&\le  \n{ |J|^{\beta/2}  \vp }^2 +
M^\beta \n{\vp}^2   + 2  \sum_{|j|>M} |j|^\beta \int_0^t |\<\vp(s),
W_{\rm ns}P_j \vp(s)   \>| \, \d s\\
&\le  \n{ |J|^{\beta/2}  \vp }^2 +
M^\beta \n{\vp}^2   + 2t  \sum_{|j|>M} |j|^\beta 
\norm{E_I(H)W_{\rm ns}P_j E_I(H)} \norm{\vp}^2 \,.
\end{split}
\end{equation}
Using Lemma~\ref{Lns} we can estimate the norm appearing in the last
sum  as 
\begin{equation}
  \label{eq:7}
\begin{split}
& \norm{E_I(H)W_{\rm ns}P_j E_I(H)} \\
   &\quad \le  \norm{E_I(H)W_{\rm ns}
   \mathds{1}_{ (0,c_+ |j|^{ \zeta/ \sigma_+} ) }(\ab{\bx}) P_j E_I(H)}
  + \norm{E_I(H)W_{\rm ns}  \mathds{1}_{(c_+ |j|^{ \zeta/ \sigma_+}, \infty)}(\ab{\bx})
    P_j E_I(H)} \\
  &\quad \le  \norm{E_I(H)W_{\rm ns}}\,
     \norm{ \mathds{1}_{ (0,c_+ |j|^{ \zeta/ \sigma_+} )  }(\ab{\bx}) 
       P_j E_I(H)}
        + \norm{W_{\rm ns}  
             \mathds{1}_{ (c_+|j|^{ \zeta/ \sigma_+}, \infty)}(\ab{\bx})}\\
&\quad \le  C  \norm{E_I(H)W_{\rm ns}} e^{-\delta \ab{j}^{\zeta}}+
 \norm{W_{\rm ns}  \mathds{1}_{ (c_+ |j|^{\zeta/ \sigma_+}, \infty)} (\ab{\bx})}\,,
\end{split} 
\end{equation}
where in the last step we used Remark~\ref{tunn estimate}.

We now study separately the two cases depending on the decay rate of
the potential.\\
{\it Case {\rm(}i\,{\rm)}}: Assume that $W_{\rm ns}$ decays as in \eqref{powerlike1}.
Then for all sufficiently large  $M>0$ 
\begin{equation}
  \label{eq:9}
  \norm{W_{\rm ns}  \mathds{1}_{ (c_+ |j|^{ \zeta/ \sigma_+}, \infty)}
    (\ab{\bx})}
      \lesssim |j|^{-{\zeta p/ \sigma_+}}\,, \quad \ab{j}>M\,. 
\end{equation}
Combining this bound with \eqref{eq:7}, using 
that $e^{-\delta \ab{j}^\zeta}\le |j|^{-{ \zeta p/ \sigma_+}}$ for large 
$\ab{j}$ this implies 
\begin{equation} \label{eq:10}
  \sum_{|j|>M} |j|^\beta \norm{E_I(H)W_{\rm ns}P_j E_I(H)}
    \lesssim   
      \sum_{|j|>M} |j|^{ \beta-{\zeta p/ \sigma_+} } 
    \lesssim  
      M^{{ \beta+1-\zeta p/\sigma_+ } }\,. 
\end{equation}
Since $\beta-\zeta p/\sigma_+<-1$. 
This together with \eqref{Jbeta} implies that there exists a constant $C>0$ such that
\begin{equation*}
 \n{ |J|^{\beta/2}  \vp(t)   }^2 \leq \n{ |J|^{\beta/2}  \vp   }^2 +
M^\beta \n{\vp}^2 + C t M^{{ \beta+1-\zeta p/\sigma_+ } } \n{\vp}^2\,. 
\end{equation*}
The theorem holds provided we pick $M(t) = t^{{ \frac{\sigma_+}{\zeta p-\sigma_+}}} $.
\\
{\it Case {\rm(}ii\,{\rm)}}: We now turn  to the case when $W_{\rm ns}$ decays as
in \eqref{gaussian}. Then analogously as above we conclude for all $\ab{j}$ large enough 
\begin{align*}
  \norm{E_I(H)W_{\rm ns}P_j E_I(H)} 
    &\lesssim  
        \exp(- \delta|j|^\zeta) +  \exp ( - \mu (c_+)^s |j|^{s { \zeta/\sigma_+} })
       \lesssim  
         \exp(- \eta |j|^{\kappa})\,,
\end{align*}
for $\kappa:={\rm min}\{\zeta, { s\zeta/\sigma_+} \}$ and $\eta= \min\{\delta,  \mu (c_+)^s \}>0$. Hence 
\begin{equation}
  \label{eq:10-2}
  \sum_{|j|>M} |j|^\beta 
\norm{E_I(H)W_{\rm ns}P_j E_I(H)}\lesssim  \exp(- \tfrac{\eta}{2} M^{\kappa})\,. 
\end{equation}
for all large enough $M>0$. 
This together with \eqref{Jbeta} yields
\begin{align*}
  \n{ |J|^{\beta/2}  \vp(t)   }^2 \leq \n{ |J|^{\beta/2}  \vp   }^2 +
  M^\beta \n{\vp}^2  + 2 C t \exp(- \tfrac{\eta}{2} M^{\kappa})\n{\vp}^2\,. 
\end{align*}
for some constant $C>0$. 
Finally, the claim follows by picking $M(t) = (2 \ln(t)/\eta)^{1/\kappa}$.
\end{proof}
\begin{remark}
 { The proof of Theorem \ref{thm 1} is a rigorous implementation of  
  the intuition that the component of the wave--function with angular 
  momentum $j$, $P_j\varphi$, moves under the influence of an effective potential $V_j$ whose classical region $\{x\in \R^2 \, :\,  V_j(x) \leq E \}$ is concentrated inside an annular region of inner radius  
  $\sim |j|^{{ 1 /\sigma_+}}$ and outer radius    $\sim |j|^{{ 1 /\sigma_-}}$ 
  (see Theorem \ref{thm-tunnelling} and Remark \ref{tunn estimate2}).}
\end{remark}

\begin{proof}[Proof of Theorem \ref{thm 1}:]
	Let $t\ge 0$ and consider the following splitting
\begin{equation*}
\n{ |\bx|^{\nu/2} \vp(t)   }^2 
  = \sum_{j\in\Z} 
      \n{ \mathds{1}_{[0,c_- |j|^{\zeta /\sigma_-} ]} \, |\bx|^{\nu/2} P_j E_I(H) \vp(t)   }^2 
    + \sum_{j\in\Z} \n{ \mathds{1}_{[c_- |j|^{\zeta /\sigma_-} ,\infty)} \, |\bx|^{\nu/2} P_j E_I(H) \vp(t)    }^2
\end{equation*}
where $c_-$ is the constant given by {  Theorem} \ref{thm-tunnelling}. 
For notational simplicity, we will drop the argument in the 
indicator function and simply  write
$\mathds{1}_{A} \equiv \mathds{1}_{A}(\ab{\bx})$ in the following. 
The first sum may be estimated in terms of the expectation values of the angular momentum as
$$ \sum_{j\in\Z} \n{ \mathds{1}_{[0,c_- |j|^{\zeta/\sigma_-} ]} \, |\bx|^{\nu/2}
  P_j E_I(H) \vp(t) }^2 \leq c_-^\nu \n{ |J|^{{ \nu\zeta/(2 \sigma_-)}}
  \vp(t)}^2.  $$
As for the second sum, we may use Equation \eqref{lemma 1.1 b} from
{  Theorem} \ref{thm-tunnelling} to obtain
\begin{align*}
\sum_{j\in\Z} & 
  \n{ \mathds{1}_{[c_- |j|^{\zeta/\sigma_-} \!, \infty)} \,
  |\bx|^{\nu/2} P_j E_I(H) \, \vp(t)    }^2  \\  
 &\le    
   \n{|\bx|^{\nu/2} e^{-\delta |\textbf{x}|^{ \zeta\sigma_-}}  }^2
     \sum_{j\in\Z} \| 
       \mathds{1}_{ [c_- |j|^{{\zeta/\sigma_-}} ,\infty )  } \, e^{\delta|\t{\textbf{x}}|^{\zeta\sigma_-} } \, P_j E_I(H)\|^2 \n{\vp}^2.
\end{align*}
This completes the proof of  the theorem.
\end{proof}

\section{Exponential decay of the energy projections}\label{edcay}
{  An essential tool in our approach are certain exponential decay
estimates for the spectral projections $E_I(H)$ in the variables $j$ and $r$. They enable  us  to control the tunnelling effect away from
the classically allowed region. Note that since the perturbation $W$ is not rotationally symmetric, $H$ and $J$ cannot be simultaneously diagonalized. This combined with the fact that the unperturbed operator has dense point spectrum makes such decay bounds trickier to deal with.  

It turns out that it suffices to work with the classical region associated to the unperturbed magnetic operator $H_0$, which is well defined for 
fixed angular momentum $j$.
The analysis in this section is given for general magnetic fields which are rotationally symmetric such that the magnetic vector potential 
vector potential in the Poincar\'e gauge is locally square integrable.

For a given energy $E>0$ and fixed $j\in \Z$ we define the {\it classically allowed region} for angular
momentum $j\in \Z$ as the set 
\begin{align}
\label{class allowed}
\mathcal{C}_j(E):=\{r\in \R^+\,:\,V_j(r)\le E\}\,.
\end{align}
where $V=(V_j)_{j\in\Z}$ is the effective potential. 
Moreover, let $\chi_j(E): \R^+\to [0,1]$, with $\chi_j(E)=1$ on
$\mathcal{C}_j(E)$ and $\chi_j(E)=0$ otherwise, be the indicator
function on $\mathcal{C}_j(E)$. We also  set $\chi_j^\perp (E):=1-\chi_j(E)$.

Recall the constants
 $a>0$ and $\zeta\in (0,1]$ are defined through Condition~\ref{con3}
 and write $ \xi(a,\zeta)=\sum_{m\in\Z} e^{-\frac{a}{2}|m|^\zeta}$. 
 Since $v$ is a radial potential, Lemma \ref{lem radial energy boundedness} in Appendix \ref{app-polar coordinates} shows 
 that for any $a>0$ and $0<\zeta\le 1$ there exist a $c_0$, such that 
 \begin{align}
  \label{eq:13}
  \SP{\partial_r\vp}{\partial_r\vp}- \xi(a,\zeta) \SP{\vp}{v\vp}\ge -c_0\n{\vp}^2\,,
\end{align}
for all $\vp$ in the quadratic form domain of the magnetic Schr\"odinger operator.

We denote by $PC^1_\text{bd} (\R^+,\R )$ the set of \emph{bounded} functions $f: \R^+
\rightarrow \R$ that are  continuous and piecewise continuously differentiable.
We say that a sequence of non-negative functions
	$F=(F_j)_{j\in\Z}$ is in $PC^1_\text{bd} (\R^+,\R )$ whenever
	$F_j\in PC^1_\text{bd} (\R^+,\R )$ and 
	\begin{align}
		\|F\|_\infty\coloneqq \sup_{j\in\Z}\sup_{r>0} |F_j(r)| <\infty\, .
	\end{align}

	For such sequences we write $e^{\pm F}=\sum\nolimits_{j\in\Z} e^{\pm F_j} P_j$, which are bounded operators satisfying $\| e^{ \pm F} \| \leq e^{\|F\|_{\infty }} $. Moreover, Lemma \ref{lem-twisted q_0} shows that if also $F'$ is bounded, i.e., 
	\begin{align}
				\|F'\|_\infty\coloneqq \sup_{j\in\Z}\sup_{r>0} |F_j'(r)| <\infty
	\end{align}
	then $e^{\pm F}\vp$ is in $\calD(q_0)=\calD(q)$, the domain of the magnetic Schr\"odinger, whenever $\vp\in \calD(q_0)$.
	We also set 
	$\chi(\tilde{E})  
	 \coloneqq \sum\nolimits_{j\in\Z} \chi_j(\tilde{E}) P_j$.
	
\begin{theorem}[Exponential decay of energy projections]\label{thm 3}
 Let $H$ be the perturbed magnetic Schr\"odinger operator defined by the quadratic form \eqref{def-magn form}. 
 Given $\delta_0>0$ and $E_0\ge 0$ put 
\begin{equation}
\label{E}
\tilde{E}:=E_0+c_{0}+\delta_0\,.
\end{equation}
with $c_0$ from \eqref{eq:13} above.  
 Then, for any sequence of weight functions  $F=(F_j)_{j\in\Z}$ in $PC^1_{\text{bd}} (\R^+,\R )$ that satisfy 
 \begin{align}
 	\label{F'} &(F')^2 \le V-\wti{E}\chi^\perp(\wti{E}) \, ,\\
 	\label{Fbounded}&\|e^F\chi(\wti{E})\|_\infty <\infty\, , \\
 		 \label{efw} 
	& \sup_{r>0}\ab{F_j(r) - F_k(r)}\leq \tfrac{a}{2}\ab{j-k}^\zeta,   \t{
          for all  }   j,k\in\Z\,, 
 \end{align}
there exists $C = C (\delta_0)>0$ such that 
 \begin{equation}\label{eqexp}
	\norm{e^F \,E_I(H)} \le C \|e^F\chi ( {\wti{E}} ) \|_\infty \, .
	\end{equation} 
 If $W=0$ the bound \eqref{eqexp}  holds without the requirement of \eqref{efw}. 
 
 Moreover, the bound \eqref{eqexp} extends to all  $F=(F_j)_{j\in\Z}$ in $ PC^1(\R^+,\R )$ satisfying conditions \eqref{F'}, \eqref{Fbounded}, and \eqref{efw} without requiring that  $F$ is bounded.  
	\end{theorem}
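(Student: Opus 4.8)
The plan is to prove the core estimate \eqref{eqexp} for bounded $F$ first, then remove the boundedness hypothesis by a truncation/limiting argument, and finally observe that the case $W=0$ simplifies because $H_0$ commutes with all $P_j$. For the bounded case I would use the classical Agmon--Combes--Thomas-type commutator trick adapted to the sequence setting: pick $\vp = E_I(H)\psi$ with $\psi\in L^2(\R^2)$, set $u = e^F\vp$ (which lies in $\calD(q)$ by the discussion around Lemma \ref{lem-twisted q_0} since $F$ and $F'$ are bounded), and compute the ``twisted'' quadratic form $q(e^{-F}u, e^F u)$. Using the representation \eqref{def-magn form} and the fact that conjugating $\partial_r$ by $e^{\mp F}$ produces $\partial_r \mp F'$ termwise in $j$, the real part of this twisted form should come out as
\begin{equation*}
\re\, q(e^{-F}u, e^F u) = \SP{\partial_r u}{\partial_r u} + \SP{u}{(V - (F')^2)u} + \re\SP{u}{W u}\,,
\end{equation*}
while the left side, since $\vp = E_I(H)\vp$ and $q(\vp,\vp)\le E_0\n{\vp}^2$, can be bounded using $e^{\mp F}$ mapping back and forth; the key algebraic point is that $e^{-F}u = \vp$ and $e^F(e^{-F}u) = u$, so $q(e^{-F}u, e^F u) = q(\vp, e^{2F}\vp) = \SP{\vp}{H e^{2F}\vp}$, and one estimates this by $E_0 \n{e^F \vp}\,\n{e^{-F}\cdot(\text{something})}$ — more carefully, by splitting $e^{2F}\vp = e^F(\chi(\te)e^F\vp) + e^F(\chi^\perp(\te)e^F\vp)$ and controlling the first piece by $\|e^F\chi(\te)\|_\infty\n{\vp}$.

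The heart of the matter is the lower bound on the right-hand side. On the classically forbidden region $\chi^\perp(\te)$, hypothesis \eqref{F'} gives $V - (F')^2 \ge \te\chi^\perp(\te) = \te$ there; on the allowed region $\chi(\te)$ we have no such gain but the allowed region contributes the ``finite'' term $\|e^F\chi(\te)\|_\infty$. The potential $W$ is handled exactly via \eqref{efw}: the off-diagonal decay $|\widehat W(r,j-k)|\le b(r)e^{-a|j-k|^\zeta}$ from Condition \ref{con3} combined with the Gevrey-type Lipschitz bound \eqref{efw} means the conjugated perturbation $e^F W e^{-F}$ still has matrix elements bounded by $b(r)e^{-\frac{a}{2}|j-k|^\zeta}$, hence (by Schur's test with the summable weight $\xi(a,\zeta)$) defines a form bounded by $\xi(a,\zeta)\SP{u}{v\,u}$ up to a term $\SP{u}{v\,u}$-type error; then \eqref{eq:13} absorbs this into $\SP{\partial_r u}{\partial_r u}$ at the cost of the constant $c_0$. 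Collecting everything, the forbidden-region part of $\n{u}^2$ gets multiplied by $(\te - E_0 - c_0 - \text{(absorbed}\ W\ \text{terms)}) = \delta_0 - o(1) > 0$ after choosing the $\veps$ in the infinitesimal bounds small, which closes the estimate and yields $\n{e^F E_I(H)} \le C(\delta_0)\|e^F\chi(\te)\|_\infty$. The main obstacle I anticipate is bookkeeping the $W$-terms: one must conjugate $W$ by $e^F$ \emph{before} estimating, verify the Schur bound survives with the factor $\tfrac12$ in the exponent (this is precisely why \eqref{efw} has $\tfrac{a}{2}$ and not $a$), and make sure the resulting bound is in terms of $\SP{u}{v\,u}$ so that \eqref{eq:13} applies with the \emph{same} $\xi(a,\zeta)$; getting the constants to line up so that only $\delta_0$ survives is the delicate part.

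For the extension to unbounded $F$: given $F=(F_j)$ satisfying \eqref{F'}, \eqref{Fbounded}, \eqref{efw} but unbounded, set $F^{(N)}_j := \min\{F_j, N\}$ (truncation at height $N$), or more smoothly $F^{(N)}_j(r) := F_j(r)/(1 + F_j(r)/N)$; each $F^{(N)}$ is bounded, is still in $PC^1_{\text{bd}}$, still satisfies \eqref{F'} (since $|(F^{(N)})'|\le |F'|$) and \eqref{efw} (truncation is $1$-Lipschitz in $F$, so differences only shrink), and $\|e^{F^{(N)}}\chi(\te)\|_\infty \le \|e^F\chi(\te)\|_\infty$. Applying the bounded case gives $\n{e^{F^{(N)}} E_I(H)} \le C(\delta_0)\|e^F\chi(\te)\|_\infty$ uniformly in $N$; letting $N\to\infty$ and using monotone convergence (or Fatou) on $\n{e^{F^{(N)}}E_I(H)\psi}^2 = \sum_j \int |e^{F^{(N)}_j}\varphi_j|^2\, r\,dr$ for each fixed $\psi$ yields \eqref{eqexp} for $F$. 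Finally, when $W=0$ the operator $H_0$ is block-diagonal in $j$, so one can argue fiberwise on each $P_j\H$; the cross-terms that \eqref{efw} was designed to control simply vanish, and the one-dimensional Agmon estimate on $L^2(\R^+, r\,dr)$ for $-\partial_r^2 + V_j$ goes through with no constraint relating different $F_j$'s, giving the claim without \eqref{efw}.
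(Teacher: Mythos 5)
There is a genuine gap at the heart of the bounded-$F$ step. The Agmon twist identity you write down,
\begin{equation*}
\re\, q(e^{-F}u, e^{F}u) = \SP{\partial_r u}{\partial_r u} + \SP{u}{(V-(F')^2)u} + \re\SP{e^{F}\vp}{We^{-F}\vp}\,,
\end{equation*}
and the lower bound you extract from it via \eqref{F'}, \eqref{eq:13}, and Lemma \ref{w-and-ef}, are all correct and give $\re\,q(e^{-F}u,e^{F}u) \ge (\te - c_0)\n{\chi^\perp(\te)u}^2 - (\text{controlled})$ — this is precisely the coercivity input in the paper's Lemma \ref{lemm 3}. But the corresponding \emph{upper} bound on the left side is where the argument breaks: for an eigenfunction $H\vp=E\vp$ one has $H_F u = Eu$ and hence $\re\,q(e^{-F}u,e^{F}u) = E\n{u}^2$, but for $\vp = E_I(H)\psi$ the left side is $\re\SP{H\vp}{e^{2F}\vp}$, and the best one can say is $|\re\SP{H\vp}{e^{2F}\vp}| \le C\n{\vp}\n{e^{2F}\vp}$. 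The split $e^{2F}\vp = e^{F}\chi(\te)u + e^{F}\chi^\perp(\te)u$ produces $\n{e^{2F}\vp}\le \n{e^{F}\chi(\te)}_\infty\n{u} + \n{e^{F}\chi^\perp(\te)u}$, and the second term $\n{e^{F}\chi^\perp(\te)u} = \n{e^{2F}\chi^\perp(\te)\vp}$ carries an extra, uncontrolled factor of $e^{F}$ on the forbidden region — it cannot be absorbed by the left-hand side $\n{\chi^\perp(\te)u}^2$ and the estimate does not close. This is a structural obstacle, not a bookkeeping issue: the direct Agmon inequality is an identity about a single eigenfunction, and for a spectral projection onto an interval that may contain continuous spectrum there is no analogous pointwise equation $H_F u = E u$ to exploit.

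The paper avoids this by a resolvent argument in the style of Combes--Thomas/Griesemer. It introduces the comparison operator $\wti{H}\coloneqq H + \wti{E}\chi(\wti{E})$, which coincides with $H$ on the forbidden region but is boosted on the classical region, and shows (Lemma \ref{lemm 3}) that $\wti{H}\ge E_0+\delta_0$ and that the \emph{twisted resolvent} obeys $\|e^{F}(z-\wti{H})^{-1}e^{-F}\|\le 2/\delta_0$ uniformly for $z\in\supp\,\tilde{g}_\triangle$ — this is exactly where your coercivity computation belongs, applied to $\wti{H}_F$ rather than to a quadratic form of $\vp$. Then, replacing $E_I(H)$ by a smooth cutoff $g_\triangle(H)$ and writing it via the almost-analytic (Helffer--Sj\"ostrand) formula \eqref{almost}, the key point is that $g_\triangle(\wti{H})=0$, so the resolvent identity yields \eqref{zero} with the difference $\wti{H}-H = \wti{E}\chi(\wti{E})$ sitting in the middle. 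Conjugating by $e^{F}$ inserts $e^{F}\chi(\wti{E})$ as a bounded multiplier between the two resolvents, and the estimate \eqref{eqexp} follows by pairing the twisted-resolvent bound with the elementary $\n{(z-H)^{-1}}\le 1/|\im z|$. Your handling of the $W$-term via the Gevrey--Schur argument of Lemma \ref{w-and-ef}, the truncation $F_{j,n}=F_j/(1+\tfrac{1}{n}F_j)$ for unbounded $F$, and the observation that \eqref{efw} becomes superfluous when $W=0$ are all correct and are used in the paper essentially verbatim; what is missing is the comparison-operator/Helffer--Sj\"ostrand scaffolding that turns resolvent coercivity into decay of the spectral projection.
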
 
\begin{remarks}
\begin{remarklist}
	\item Note that the right hand side of \eqref{eqexp} stays finite as long as 
	$e^F$ is bounded on $\supp(\chi(\wti{E})) =\cup _{j\in \Z } \, \supp (\chi_j (\wti{E} ) )  $.  Thus, we may approximate 
	unbounded weight functions $F$, which may grow in  both variables 
	$r>0$ and $j\in \Z$, by bounded ones and deduce from \eqref{eqexp} 
	that $\norm{e^F \,E_I(H)} $ is \emph{finite} as long as $F$ is bounded on the support of $\chi(\wti{E})$.  
		In particular, we will choose a family $(F_j)$ such that it
	provides exponential decay estimates of $E_I(H)$ away from the classical regions
	$\mathcal{C}_j(\wti{E})$. The difference  $\wti{E}-E_0>0$,  is a price we pay for defining the classical region with respect  to the operator  $H_0$
	instead of $H$.
  \item \label{delta-rmk}
  In the case when $W=0$ we may choose $c_0=0$ and
  we have almost optimality in the energy
  $\wti{E}=E_0+\delta_0$. 
 \item  
	Our proof of  Theorem \ref{thm 3} borrows ideas from
\cite{Griesemer2004} which are, in turn, inspired by the proof of
exponential decay in QED systems given in  \cite{BFS1998b} and 
the beautiful approach to exponential bounds for eigenfunctions in 
\cite{agmon}.   
\end{remarklist}
\end{remarks}
As a first step we define a more convenient (smooth) version of the spectral projection $E_I$, denoted by $g_{\triangle}(H)$, which is given in formula \eqref{zero} below. The constants $\wti{E}, E_0, \delta_0$ and $c_0$ are as in Theorem $\ref{thm 3}$.

Set
$\triangle:=[e_0-\delta_0/2,E_0+\delta_0 /2]$, where $e_0=\inf\sigma(H)$.
Consider $g_{\triangle}\in
C_0^{\infty}(\R,[0,1])$ such that $\supp \, g_{\triangle}\subseteq
\triangle$ and $g_{\triangle}|_{I} =1 $.
Since $E_I(H)= g_{\triangle}(H)E_I(H)$ it is enough to prove the bound 
\eqref{eqexp} with  $E_I(H)$  replaced by $g_{\triangle}(H)$. 

Next, we use  the almost analytic functional calculus (see
\cite{Davies1995}) to write $g_{\triangle}(H)$ in terms of an integral
over the resolvent of $H$. We denote by $\tilde{g}_{\triangle}$
an almost analytic extension of $g_{\triangle}$ with the property that 
${\rm supp}(\tilde{g}_{\triangle})$ is a compact subset of 
$\triangle+i\R $ and 
\begin{align}
\label{eq:2}
\left| \partial_{\bar{z}}\tilde{g}_\triangle (z)\right|
=\mathcal{O}(|{\rm Im}(z)|), \quad {\rm Im}(z)\to 0\,.
\end{align}
One can give a straightforward explicit construction of  
$\tilde{g}_\triangle$ with the above properties, however, 
see \cite{Davies1995,Griesemer2004} for details.  
Then, we have the formula 
\begin{equation}
\label{almost}
g_{\triangle}(H)=-\frac{1}{\pi}\int (z-H)^{-1} \,\partial_{\bar{z}}\tilde{g}_\triangle(z)  \,\d x \, \d y\,.
\end{equation}
for any self-adjoint operator $H$.  
We work with the comparison operator, more precisely, with the quadratic form corresponding to 
\begin{align*}
&\wti{H} \coloneqq H + \wti{E} \chi(\wti{E})\,,
\end{align*}
Notice that $\wti{H}$ is just $H$ except that 
it is boosted by $\wti{E}$ in (a neighborhood of) the classical region, i.e., as quadratic forms 
\begin{align}\label{eq-superdooper}
	\SP{\vp}{\wti{H}\vp} 
	  = \SP{\partial_r\vp}{\partial_r\vp} 
	     + \SP{\vp}{(V+ \wti{E} \chi(\wti{E}))\vp} + \SP{\vp}{W\vp}
\end{align}
for all $\vp\in \calD(q_0)$
 
It is easy to check, see Lemma \ref{lemm 3} below, 
that $\wti{H}\ge  E_0+\delta_0$. Since $\supp g_{\triangle} \subset \triangle$, we have that
$g_{\triangle}(\wti{H})=0$ and, therefore, by the resolvent
identity  and \eqref{almost} 
\begin{equation}
\label{zero}
g_{\triangle}(H) = g_{\triangle} (H)-g_{\triangle}(\wti{H}) =
-\frac{1}{\pi}\int  (z-\wti{H})^{-1} (\wti{H} - H) (z-H)^{-1}    \,\frac{\partial\tilde{g}_\triangle}{\partial \bar{z}}     \,\d x \, \d y\,.
\end{equation} 

\begin{lemma}\label{lemm 3}
	For the operator $\wti{H}$ defined above one has a quadratic forms 
	\begin{align}
	\label{lemm 3.1}
	\wti{H} \ge  E_0+\delta_0\,.
	\end{align}
	Furthermore, if the sequence of non-negative functions
	$F=(F_j)_{j\in\Z}$ in  $PC^1_\text{bd} (\R^+,\R )$ satisfies   \eqref{F'}, \eqref{efw}, and $F'$ is bounded, then   
	\begin{align}
	\label{lemm 3.2}
	\sup_{z\in\supp
		(\tilde{g}_{\triangle}) }  
	\| e^{F}(z-\wti{H})^{-1}e^{-F} \| \le 2/\delta_0\,. 
	\end{align}
	where $\delta_0>0$ is the parameter from Theorem \ref{thm 3}.
\end{lemma}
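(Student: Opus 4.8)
\textbf{Proof plan for Lemma \ref{lemm 3}.}

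The plan is to treat the two statements separately. For the lower bound \eqref{lemm 3.1}, I would use the representation \eqref{eq-superdooper} of $\wti{H}$ as a quadratic form and split the kinetic/potential term according to the support of $\chi(\wti{E})$. On the set $\cup_j \supp\chi_j(\wti{E})$ the added term $\wti{E}\chi(\wti{E})$ dominates, and combined with the kinetic energy $\SP{\partial_r\vp}{\partial_r\vp}\geq 0$ and the lower bound $\SP{\vp}{V\vp}\geq 0$ (since $V_j\geq 0$) together with the form-boundedness of $W$ one wants to recover $\wti{E}=E_0+c_0+\delta_0$ on that piece. On the complement, where $V_j(r)>\wti E$, the effective potential itself beats $\wti{E}$. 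More carefully, I would write $V + \wti{E}\chi(\wti{E}) \geq \wti{E}$ pointwise (on the classical region, the constant wins; off it, $V$ itself exceeds $\wti E$), so $\SP{\vp}{\wti H\vp}\geq \SP{\partial_r\vp}{\partial_r\vp} + \wti{E}\n{\vp}^2 + \SP{\vp}{W\vp}$. Then I would invoke \eqref{eq:13}, which controls $\SP{\vp}{v\vp}$ — and hence, via Condition \ref{con3} with $j=0$ and the form bound \eqref{eq:5} applied to $W_{\rm s}$ and $W_{\rm ns}$ — by $\SP{\partial_r\vp}{\partial_r\vp}$ up to the constant $c_0$; this yields $\SP{\vp}{W\vp}\geq -\SP{\partial_r\vp}{\partial_r\vp}-c_0\n{\vp}^2$, and adding, the kinetic terms cancel and one is left with $\SP{\vp}{\wti H\vp}\geq (\wti{E}-c_0)\n{\vp}^2 = (E_0+\delta_0)\n{\vp}^2$, as desired.

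For the resolvent bound \eqref{lemm 3.2}, the key device is the exponential conjugation identity for quadratic forms. For $F$ bounded with bounded derivative, Lemma \ref{lem-twisted q_0} guarantees $e^{\pm F}\vp\in\calD(q)$ when $\vp\in\calD(q)$, so the conjugated form $\tilde q_F(\vp,\psi):=q(e^{-F}\vp, e^F\psi)$ (with the $\wti{E}\chi(\wti{E})$ term added) makes sense. The standard computation — expanding $\SP{\partial_r(e^{-F}\vp)}{\partial_r(e^F\psi)}$ and using that $e^{\pm F}$ commutes with $V$, $\chi(\wti E)$ and acts diagonally in $j$ — produces $\re\, \tilde q_F(\vp,\vp) = q_0^{\wti E}(\vp,\vp) - \SP{\vp}{(F')^2\vp} + \re\SP{\vp}{W_F\vp}$ where $W_F$ is the conjugated perturbation and $q_0^{\wti E}$ is the boosted unperturbed form. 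Condition \eqref{F'} gives $(F')^2\leq V - \wti E\chi^\perp(\wti E)$, which is exactly what is needed so that $q_0^{\wti E}(\vp,\vp)-\SP{\vp}{(F')^2\vp}\geq \SP{\partial_r\vp}{\partial_r\vp}+\wti E\n\vp^2 - (\text{something absorbed into }c_0)$, mirroring the first part; condition \eqref{efw} controls the off-diagonal matrix elements $\SP{\vp_j}{(e^{F_j-F_k}-1)\widehat W(\cdot,j-k)\psi_k}$ via the Gevrey bound \eqref{asbe} and a Schur test, so that the conjugation only perturbs the $W$-term by a form-bounded amount. Putting this together one gets $\re\,\tilde q_F(\vp,\vp)\geq (E_0+\delta_0)\n\vp^2 = (\wti E - c_0)\n\vp^2$, i.e. $\wti H_F := $ "$e^{-F}\wti H e^F$" still satisfies the same lower bound $E_0+\delta_0$. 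Since $z\in\supp\tilde g_\triangle$ has $\re z\leq E_0+\delta_0/2$, the distance from $z$ to $[\wti H_F\text{'s numerical range}]$ is at least $\delta_0/2$, whence $\|(z-\wti H_F)^{-1}\|\leq 2/\delta_0$; and $e^F(z-\wti H)^{-1}e^{-F} = (z-\wti H_F)^{-1}$ as bounded operators (the conjugation being well-defined because $F$ is bounded), giving \eqref{lemm 3.2}.

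The main obstacle I expect is making the manipulation $e^F(z-\wti H)^{-1}e^{-F}=(z-\wti H_F)^{-1}$ rigorous at the level of forms rather than operators: one must verify that the sesquilinear form $\tilde q_F$ is closed and sectorial, that it defines an m-sectorial operator $\wti H_F$, and that this operator is genuinely the conjugate of $\wti H$ (so that the resolvents are intertwined by $e^{\pm F}$). This is where boundedness of $F$ and $F'$ is used essentially — it ensures $e^{\pm F}$ are bounded invertible operators preserving the form domain, so the identification is automatic — and it is also why the unbounded case in Theorem \ref{thm 3} must be handled by a separate approximation argument rather than applying this lemma directly. A secondary technical point is bookkeeping the constant $c_0$ correctly through the conjugated form: one has to check that the term $\xi(a,\zeta)\SP{\vp}{v\vp}$ from \eqref{eq:13} still dominates $\re\SP{\vp}{W_F\vp}$ after conjugation, which follows because the diagonal part of $W_F$ equals the diagonal part of $W$ (conjugation acts trivially on $\widehat W(\cdot,0)$) and the off-diagonal part is controlled, uniformly in $F$ satisfying \eqref{efw}, by the same Gevrey sum $\xi(a,\zeta)$.
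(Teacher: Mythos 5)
Your proposal is correct and follows essentially the same route as the paper: for \eqref{lemm 3.1} you use the pointwise bound $V+\wti{E}\chi(\wti{E})\ge \wti{E}$ (equivalently $V-\wti{E}\chi^\perp(\wti{E})\ge 0$) together with \eqref{eq:13} and the form bound on $W$; for \eqref{lemm 3.2} you pass to the twisted form $\wti{H}_F$, invoke \eqref{F'} to absorb $(F')^2$ and \eqref{efw} to control the conjugated $W$, deduce $\re\SP{\vp}{(\wti H_F-z)\vp}\ge\tfrac{\delta_0}{2}\|\vp\|^2$ on $\supp\wti{g}_\triangle$, and then intertwine resolvents. The only cosmetic deviation is that you split the conjugated $W$ into its unchanged diagonal part and an off-diagonal error $(e^{F_j-F_k}-1)\widehat W(\cdot,j-k)$, whereas the paper bounds $\SP{e^F\vp}{We^{-F}\vp}$ in one shot via Lemma~\ref{w-and-ef}; both give the same constant $\xi(a,\zeta)$, so this is a bookkeeping choice rather than a different argument.
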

\begin{remark}
  Again, it is essential, that the right hand side of \eqref{lemm 3.2} does not depend on any a-priori bound on $\|F\|_\infty$. 
\end{remark}
\begin{proof}

	
	In order to show \eqref{lemm 3.1}
	notice that from \eqref{eq-superdooper} we get 
  \begin{equation}
  \begin{split}	\label{eq:15}
	\SP{\vp}{\wti{H}\vp} 
	  &= \SP{\partial_r\vp}{\partial_r\vp} 
	     + \SP{\vp}{(V+ \tilde{E} \chi(\tilde{E}))\vp} + \SP{\vp}{W\vp} \\
	  &\ge \SP{\partial_r\vp}{\partial_r\vp}  + \SP{\vp}{(V+ \tilde{E} \chi(\tilde{E}))\vp} - \xi(2a,\zeta)\SP{\vp}{v\vp} \\
	  &\ge    \SP{\vp}{(V+ \tilde{E} \chi(\tilde{E})-c_0)\vp}  
	       =     \SP{\vp}{(V- \tilde{E} \chi^\perp(\tilde{E}) +\wti{E}-c_0)\vp}  
  \end{split}
  \end{equation}
  	where we also used the bound \eqref{relbound} from
	Lemma~\ref{w-and-ef}  and then  \eqref{eq:13}. Since by assumption 
	$V\ge   \wti{E}\chi_j^{\perp}$ and $\wti{E}-c_0=\delta_0$, the bound 
	\eqref{lemm 3.1} follows. 
	
	Next we turn to the proof of \eqref{lemm 3.2}. 
 Since $\|F\|_\infty, \|F'\|_\infty<\infty$, we know that $e^{\pm F}$ 
 are bounded operators $L^2(\R^2)$ which by Lemma \ref{lem-twisted q_0} also leave the form domain of $H$ invariant.

Moreover, the operator $\tilde H_F:=e^F\wti{H} e^{-F}$, or better its 
associated quadratic form, is well defined, see the discussion in 
Appendix \ref{app-twisted}. It is easy to check that $ \wti{H}_F-z$ 
is invertible if $z$ is such that 
	\begin{equation}
	\label{eq:16}
	\textrm{Re} \,  \sps{\varphi}{( \wti{H}_F-z)\varphi}\ge \tfrac{\delta_0}{2} \norm{\varphi}^2\,, \qquad \varphi \in \calD( \wti{H}_F)\,,
	\end{equation}
	and that then also 
	\begin{align*}
		\n{(\wti{H}_F-z)^{-1}}\le \frac{2}{\delta_0}
	\end{align*}
  since clearly $ \textrm{Re} \,  \sps{\varphi}{( \wti{H}_F-z)\varphi}
	\le\norm{( \wti{H}_F-z)\varphi} \norm{\varphi} $.
  If \eqref{eq:16} holds, then it is also straightforward to show 
  	\begin{align*} 
	e^F (\wti{H}-z)^{-1} e^{-F}=( \wti{H}_F-z)^{-1} \,.
	\end{align*}
Thus, 	to show \eqref{lemm 3.2} it suffices to prove that, for any $z\in {\rm
		supp}\tilde{g}_{\triangle} $, \eqref{eq:16} holds. For this 
		we use the exponentially twisted version of 
		\eqref{eq-superdooper} provided by \eqref{eq-superdooper2} 
		in Appendix \ref{app-twisted} which shows that as 
		quadratic forms  
	\begin{align*}
	\re \sps{\varphi}{e^F H_0 e^{-F}\varphi}
	  &= \SP{\partial_r\vp}{\partial_r\vp} + \SP{\vp}{(V+\wti{E}\chi(\wti{E})-(F')^2)\vp} 
		      + \re \SP{e^F\vp}{We^{-F}\vp}  \\
	  &\ge \SP{\partial_r\vp}{\partial_r\vp} + \SP{\vp}{(V+\wti{E}\chi(\wti{E})-(F')^2))\vp} 
		      -   \xi(a,\zeta) \SP{\vp}{v\vp} \\
	  &\ge  \SP{\vp}{(V+\wti{E}\chi(\wti{E})-(F')^2-c_0)\vp}  \\
	  & =  \SP{\vp}{(V-\wti{E}\chi^\perp(\wti{E})+\wti{E}-(F')^2-c_0)\vp}
	\end{align*}
	where we also used 	Lemma~\ref{w-and-ef}  and  \eqref{eq:13}.
	Thus, using also \eqref{F'} we get 
  \begin{align*}
	\re \sps{\varphi}{e^F H_0 e^{-F}\varphi}
	  & \ge   \SP{\vp}{(V-\wti{E}\chi^\perp(\wti{E})+\wti{E}-(F')^2-c_0)\vp} \\
	  &= \SP{\vp}{(V-\wti{E}\chi^\perp(\wti{E})-(F')^2+E_0+\delta_0)\vp}\\ 
	  &\ge (E_0+\delta_0)\|\vp\|^2
	\end{align*}
 which implies \eqref{eq:16} since for all $z\in \supp(\wti{g}_\Delta)$ we have 
 $\textrm{Re} \,z\le E_0+\delta_0/2$. 
	This finishes the proof of the lemma.
\end{proof}
\begin{proof}[Proof of Theorem \ref{thm 3}]
Using equation \eqref{zero} we may write 
\begin{align}\label{eg}
e^{F} g_{\triangle}(H)=-\frac{\wti{E}}{\pi}\int  e^{F} (z-\wti{H})^{-1}e^{-F}\, e^{F}\chi{\wti{E}}  (z-H)^{-1}    \,\frac{\partial\tilde{g}_\triangle}{\partial \bar{z}}     \,\d x \, \d y\,,
\end{align}
so 
\begin{align}\label{eg2}
  	\n{e^{F} g_{\triangle}(H)}
  	  &\le \frac{\wti{E}}{\pi}\n{e^{F}\chi(\wti{E})} \int  \n{e^{F} (z-\wti{H})^{-1}e^{-F}}\, \n{  (z-H)^{-1}}    \,
  	  		|\partial_{\bar{z}}\tilde{g}_\triangle(z)| \,\d x \, \d y\,, 
  	  		\\
  	  &\le \n{e^{F}\chi(\wti{E})}   \frac{2\wti{E}}{\pi\delta_0} \int \frac{|\partial_{\bar{z}}\tilde{g}_\triangle(z)|}{|y|}\, dx dy
\end{align}
where we also used the  standard bound $\norm{ (z-H)^{-1} }\le 1/|{\rm Im}(z)|$. Since for the almost analytic extension the integral above is finite,  this  proves  \eqref{eqexp} when $F$ and $F'$ are bounded. 

Now assume that $F$ is positive but unbounded and satisfies \eqref{F'}, \eqref{Fbounded}, \eqref{efw}, and $F'$ bounded.  
Define  the family  of bounded functions $F_n=(F_{j,n})_{j\in\Z}$ given by 
\begin{align}
\label{eq:18}
F_{j,n}:=\frac{F_j}{1+\tfrac{1}{n} F_j}\,,\quad n \in \N \,, j\in \Z\,.
\end{align}
Notice that $|F_{j,n}'|= (1+\tfrac{1}{n}F_j)^{-2}|F_{j,n}'|$ is bounded uniformly in $j\in\Z$.  
Clearly, for each $j\in \Z$, the sequence $(F_{j,n})$ is
increasing in $n\in\N$ and converges to $F_j$ 
pointwise. Moreover,
one checks that  for each $n\in \N$ we also have $\ab{F_{j,n}-F_{k,n}}\le
\ab{F_j-F_k}$, hence $(F_{j,n})_{j\in \Z}$ satisfies conditions  \eqref{F'},
\eqref{Fbounded} and \eqref{efw} for each $n\in\N$. 
Moreover, notice that for any $n\in \N$, the estimate $F_{j,n}\le n$  holds uniformly in $j\in \Z$. 

By what we just showed, this implies that the bound \eqref{eqexp} holds when $F$ is replaced by $F_n$, but since the right hand side of 
\eqref{eqexp} is uniform in $n\in\N$, the Monotone Convergence Theorem shows that  for any $\vp\in L^2(\R^2)$, 
\begin{align*}
\norm{e^F g_{\triangle}(H) \varphi}^2&=\lim_{n\to \infty} \norm{e^{F^{(n)}}g_{\triangle}(H) \varphi}^2 
\le C_{\delta_0}^2\norm{\varphi}^2\,.
\end{align*}
This finishes the proof. 
\end{proof}
}
\section{The tunnelling bounds}\label{aplications}
In this section we apply Theorem \ref{thm 3} to derive the interior and exterior tunnelling bounds from Theorem  \ref{thm-tunnelling}.
To do so, we need to construct  suitable sequences of weights 
$(F_j)_{j\in\Z}$ that satisfy the requirements of Theorem \ref{thm 3}.

In order to verify  \eqref{F'} it is important to estimate the value of the effective potential $V_j(r)=( j-\Phi(r))^2/r^2$ in the classically forbidden  regions. One can get an intuition by considering the case $\Phi(r)=r^{\sigma}$, $\sigma>1$. It is easy to see that  the classically allowed region is either empty, or it is contained in an interval $[r_-|j|^{1/\sigma} , r_+|j|^{1/\sigma}]$, for some $r_+>r_->0$.  Conditions~\ref{con1} and \ref{con1b} allow us to obtain estimates for $V_j-E$, to the left and to the right of the classically allowed region of the corresponding effective potential, respectively. This is shown in the next lemma, for any energy $E>0$.
{
\begin{lemma}\label{cfr}
	\begin{enumerate}[i{\rm)}]
		\item\label{ec1}  Under Condition 
			\ref{con1} there exists a constant 	$j_0>0$ and, given $E>0$, 
			a constant  $\varepsilon_E>0$ such that,  for any $j\in \Z$ with 
			$|j|\ge j_0$ and $r\le  \varepsilon_E |j|^{1/\sigma_+}$ 
		  \begin{align}\label{bound1}
		  V_j(r) -E\ge  |j|^{2\frac{\sigma_+-1}{\sigma_+}}\,.
		  \end{align}
		\item\label{ec2}  Under Condition \ref{con1b} 
		there exist for any $E>0$, a constant 
		$\eta_E>1$ such that, for any $j\in \Z$ and $r
		\ge \eta_E (1+|j|)^{1/\sigma_-}$,  
		 \begin{align}\label{bound2}
		V_j(r) -E\ge  \lambda_-^2 r^{2(\sigma_--1)}\,,
		\end{align}
		where $\sigma_\mp, \lambda_\mp$ are parameters defined in  Condition \ref{con1} and \ref{con1b}.
	\end{enumerate}
\end{lemma}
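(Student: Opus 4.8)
\textbf{Proof proposal for Lemma \ref{cfr}.}

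The plan is to estimate $V_j(r) = (j-\Phi(r))^2/r^2$ from below by controlling $|j - \Phi(r)|$ on each side of the classically allowed region, using the two flux bounds. For part \ref{ec1}, I would work in the interior region $r \le \varepsilon_E |j|^{1/\sigma_+}$. Condition \ref{con1} gives $|\Phi(r)| \le \lambda_+(1+r^{\sigma_+})$, so on this region $|\Phi(r)| \le \lambda_+(1 + \varepsilon_E^{\sigma_+}|j|)$. Choosing $\varepsilon_E$ small enough (depending only on $\lambda_+$, and for $|j| \ge j_0$ large enough to absorb the additive constant $\lambda_+$), we get $|\Phi(r)| \le \tfrac12 |j|$, hence $|j - \Phi(r)| \ge \tfrac12 |j|$. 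Therefore
\begin{align*}
V_j(r) - E \ge \frac{|j|^2}{4 r^2} - E \ge \frac{|j|^2}{4 \varepsilon_E^2 |j|^{2/\sigma_+}} - E = \frac{1}{4\varepsilon_E^2}\, |j|^{2(\sigma_+-1)/\sigma_+} - E\, .
\end{align*}
Since $\sigma_+ > 1$ the exponent $2(\sigma_+-1)/\sigma_+ > 0$, so by shrinking $\varepsilon_E$ further and enlarging $j_0$ (both depending on $E$) the right-hand side dominates $|j|^{2(\sigma_+-1)/\sigma_+}$, which is \eqref{bound1}.

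For part \ref{ec2}, I would work in the exterior region $r \ge \eta_E(1+|j|)^{1/\sigma_-}$ with $r \ge r_0$ (enlarge $\eta_E$ so that $\eta_E(1+|j|)^{1/\sigma_-} \ge r_0$, possible since $r_0$ is fixed). Condition \ref{con1b} gives $|\Phi(r)| \ge \lambda_- r^{\sigma_-}$, and on this region $\lambda_- r^{\sigma_-} \ge \lambda_- \eta_E^{\sigma_-}(1+|j|) \ge \lambda_- \eta_E^{\sigma_-} |j|$, so choosing $\eta_E$ large enough (depending on $\lambda_-$) forces $|\Phi(r)| \ge 2|j|$, whence $|j - \Phi(r)| \ge |\Phi(r)| - |j| \ge \tfrac12 |\Phi(r)| \ge \tfrac12 \lambda_- r^{\sigma_-}$. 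Therefore
\begin{align*}
V_j(r) = \frac{(j-\Phi(r))^2}{r^2} \ge \frac{\lambda_-^2 r^{2\sigma_-}}{4 r^2} = \frac{\lambda_-^2}{4} r^{2(\sigma_--1)}\, .
\end{align*}
To pass from this to \eqref{bound2} with the clean constant $\lambda_-^2$ (and the subtraction of $E$), I would enlarge $\eta_E$ once more so that $\tfrac14 \lambda_-^2 r^{2(\sigma_--1)}$ exceeds $\lambda_-^2 r^{2(\sigma_--1)} + E$ on the region — using again that the exponent $2(\sigma_--1) > 0$ so that $r^{2(\sigma_--1)}$ is bounded below by a large constant there; a slightly cleaner route is to split off a fixed fraction of the quadratic lower bound to beat $E$ and keep the rest to produce $\lambda_-^2 r^{2(\sigma_--1)}$ up to adjusting $\eta_E$.

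The arguments are elementary; there is no serious obstacle. The only points requiring a little care are: (i) tracking the dependence of the constants — $j_0$ should be chosen independently of $E$ (it only needs to absorb the $+\lambda_+$ in Condition \ref{con1}), while $\varepsilon_E$ and $\eta_E$ legitimately depend on $E$; and (ii) making sure that in part \ref{ec2} the threshold $\eta_E(1+|j|)^{1/\sigma_-}$ simultaneously guarantees $r \ge r_0$ (so Condition \ref{con1b} applies) and $|\Phi(r)| \ge 2|j|$, which is why the lemma uses $1+|j|$ rather than $|j|$ — this handles $j=0$ and small $|j|$ uniformly. I would also note that when the right-hand side bound is used later via \eqref{F'}, only the stated inequalities matter, so I need not optimize the constants.
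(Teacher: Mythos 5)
Your proposal is correct and follows essentially the same route as the paper: bound $|\Phi(r)|$ above by $\tfrac12|j|$ in the interior region using Condition \ref{con1}, and below by $2|j|$ in the exterior region using Condition \ref{con1b}, then square. The discomfort you express at the end about recovering the ``clean constant'' $\lambda_-^2$ in part \ref{ec2} is well founded: the argument (yours and the paper's) only yields $V_j(r)-E \ge \big(\tfrac{\lambda_-^2}{4} - \tfrac{E}{\eta}\big)\, r^{2(\sigma_--1)}$, and the paper's own proof explicitly concludes with this smaller constant $\lambda_0^2 = \lambda_-^2/4 - E/\eta$; the $\lambda_-^2$ in the displayed statement of the lemma appears to be a slip, harmless downstream since only positivity of the coefficient is used in the tunnelling construction.
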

} 
\begin{proof}
\ref{ec1}): From Condition \ref{con1} one sees that for any  
  $\varepsilon>0  $ we have  for all $r\le \varepsilon |j|^{1/\sigma_+}$
\begin{align}
|\Phi(r)|\le \lambda_+(1+r^{\sigma_+}) \le  \lambda_+(1+ \varepsilon^{\sigma_+} |j|\big)  \,. 
\end{align}
Thus 
\begin{align*}
  \sqrt{V_j(r)}
    &=\frac{1}{r} \left|j-\Phi(r) \right|
       \ge \frac{1}{r} \left(  |j|-|\Phi(r)| \right) 
		\ge \frac{1}{r}\big(|j|- \lambda_+(1+\varepsilon^{\sigma_+}|j|\big)\big) \\
	&= \frac{1}{r}\big( |j|(1-\lambda_+\varepsilon^{\sigma_+})-\lambda_+ \big)
\end{align*}
for all $0<r\le \varepsilon |j|^{1/\sigma_+}$. 
Thus, if we choose $\varepsilon $ so small that $1-\lambda_+\varepsilon{^\sigma_+}\ge 1/2$ we get 
\begin{align}
  \sqrt{V_j(r)}
   	&\ge  \frac{1}{r}\big( \frac{1}{2}|j|-\lambda_+ \big) 
   	      \ge \frac{|j|}{4r} \ge \frac{1}{4\varepsilon} |j|^{1-\frac{1}{\sigma_+}}
\end{align}
whenever $0\le r\le \varepsilon |j|^{1/\sigma_+}$ and $|j|\ge 4\lambda_+$. So by making $\veps$ so small that also 
$16\varepsilon^2\le  1/(E+1)$ we get, for any $|j|\ge 4\lambda_+>0$ and 
$r\le \varepsilon |j|^{1/\sigma_+}$, 
\begin{align}
{V_j(r)}-E\ge |j|^{2\frac{\sigma_+-1}{\sigma_+}}\left( \frac{1}{16\varepsilon^2}-E\right)\ge  |j|^{2\frac{\sigma_+-1}{\sigma_+}}\,.
\end{align}
This shows the claim.
\medskip

\noindent
\ref{ec2}):
Let $\eta\ge \max\{r_0, (2/\lambda_-)^{1/\sigma_-}\}$ and $j\in \Z$. 
 Using Condition  \ref{con1b}, we see that  for any $r\ge \eta(1+|j|)^{1/\sigma_-}$
\begin{align}
  |\Phi(r)|\ge  \lambda_- \eta^{\sigma_-}(1+ |j|)\ge 2|j|\,.
\end{align}
Therefore, 
 \begin{align}
 \sqrt{V_j(r)}=\frac{1}{r} \left|\Phi(r)-j \right|\ge \frac{1}{r} \left( |\Phi(r)|-|j|\right)\ge 
 \frac{|\Phi(r)|}{2r}\ge \frac{\lambda_-}{2}  r^{\sigma_--1}\,.
 \end{align}
 Thus, for any  $r\ge \eta(1 +|j|)^{1/\sigma_-}$,
  \begin{align}
 V_j(r)-E\ge  r^{2(\sigma_--1)} (\lambda_-^2/4-E/\eta)\,, 
 \end{align}
 so claim follows with the choice $\lambda_0^2= \lambda_-^2/4-E/\eta$ by 
 choosing $\eta>1$ sufficiently large.
\end{proof}
\begin{remark}\label{obs}
The following simple observation is useful: If for any $j\in \Z$,  the sets $M_j$ are subsets of $ \R^+$, possibly empty, and $V_j-E>0$ on $M_j$, then 
  \begin{align}\label{clas}
&  \mathds{1}_{M_j}(r ) \chi_j(E)=0\,, \quad\mbox{and}\\\label{clas1}
   & (V_j(r)-E)\chi^\perp_j(E)\ge (V_j(r)-E)\mathds{1}_{M_j}(r )\,.
   \end{align} 
\end{remark}

Now we come to the 
\begin{proof}[Proof of Theorem~\ref{thm-tunnelling}]
	In order to show  \eqref{lemma 1.1 a} and \eqref{lemma 1.1 b} we construct  two different sequences and verify that they    satisfy the requirements of Theorem
	 \ref{thm 3}, equations   \eqref{F'}, \eqref{Fbounded} and \eqref{efw}. Throughout this proof we abbreviate  $\wti{E}\equiv E$ and $\chi_j(\wti{E})\equiv \chi_j$.
	
	\vspace{5mm}
	\noindent
	Proof of  \eqref{lemma 1.1 a}. Let $\varepsilon\in (0,\varepsilon_E]$ be 
	a constant to be fixed below. 
	In the following  $\varepsilon_E$ and $j_0>0$ are the parameters 
	from the first part of  Lemma \ref{cfr}. 
	We define, for any $j\in \Z$ with $|j|\ge j_0+1$,    
	\begin{equation}
	  F_j(r) =|j|^{\zeta(1-\sigma_+^{-1})} (\veps  |j|^{\zeta/\sigma_+} - r)_+\,, \qquad r>0\,,
	\end{equation}
    where  $x_+ = \max\{x,0  \}$, and $F_j(r)=0$ for all $r>0$ 
	when $|j|\le  j_0$. Note that $F_j$ is  piecewise continuously differentiable and its derivative is given by
	\begin{equation}\label{Fj'}
	|F_j ' (r)| =   |j|^{\zeta(1-\sigma_+^{-1})}\mathds{1}_{(0,\varepsilon |j|^{\zeta/\sigma_+})}(r)\,,
	\end{equation}
	when $|j|\ge j_0+1$ and its derivative vanishes when $|j|\le j_0$. 
	
	In view of \eqref{bound1} and \eqref{clas1} this choice of $F_j$ clearly 
	satisfies \eqref{F'},  since on $\R^+$ we have
		\begin{equation*}
(V_j-E)\chi_j^\perp\ge (V_j-E) \mathds{1}_{(0,\varepsilon |j|^{\zeta/\sigma_+})}\ge  |j|^{2\zeta( 1-\sigma_+^{-1})}\mathds{1}_{(0,\varepsilon |j|^{\zeta/\sigma_+})}\ge|F_j'|^2\,.
\end{equation*}
Moreover, using Remark \ref{obs},  one sees $F_j\chi_j=0$, for all $j\in \Z$, and hence we also have \eqref{Fbounded}.

 To show that $F$ satisfies  \eqref{efw}, it is enough to assume $|j|>|k|$, by symmetry. Also, if $|j|,|k|\le j_0$, then $F_j=F_k=0$, so \eqref{efw} trivially holds in this case.

 In the case    
 $|j| \geq |k| \ge j_0+1$ we  argue as follows: 
 If  $r\le \varepsilon |k|^{\zeta/{\sigma_+}}$, then we have  
 $$\(|j|^{\zeta(1-{\sigma_+}^{-1})} - |k|^{\zeta(1-{\sigma_+}^{-1})} \) r\le  \ve\: \big(|j|^\zeta-|k|^\zeta\big)\le \veps\: |j-k|^\zeta\, , $$ 
 since $0<\zeta\le 1$ and the map $\Z\ni j\to |j|^\zeta$ obeys the triangle inequality -- this is recalled in Lemma \ref{lem triangle inequality} in the appendix. Thus 
\begin{align*}
 |F_j(r) - F_k(r)| 
    &=  \veps\: \big(|j|^\zeta-|k|^\zeta\big) 
         - \(|j|^{\zeta(1-{\sigma_+}^{-1})} - |k|^{\zeta(1-{\sigma_+}^{-1}) } \) r \leq \veps |j-k|^\zeta\,.
		\end{align*}	
When  $r\in [\ve |k|^{\zeta/\sigma_+}, \ve |j|^{\zeta/\sigma_+}] $ then  
\begin{align*}
	 |F_j(r) - F_k(r)| 
	  &=  \ve \:|j|^\zeta - |j|^{\zeta(1-{\sigma_+}^{-1})}r 
	      \leq \ve |j|^\zeta - \ve |j|^{\zeta(1-{\sigma_+}^{-1})} |k|^{\zeta/\sigma_+} \\
	  &\le \ve (|j|^\zeta-|k|^\zeta) \le \veps |j-k|^\zeta  \,.
	\end{align*}
Moreover,  for $r> \ve |j|^{\zeta/\sigma_+}$ both $F_j$ and $F_k$ vanish, thus \eqref{efw} will hold for all $|j|, |k|\ge j_0+1$, provided we pick $\ve < a/2 $.

If $|j|\ge j_0+1$ and $|k|\le j_0$, then  $|j-k|\ge 1$, so 
\begin{equation*}
	|j| \le  |j-k|+|k|\le |j-k|+ j_0 \le |j-k|+ j_0|j-k| 
	   = (j_0+1)|j-k|
\end{equation*}
Thus, in this case 
\begin{align*}
	|F_j(r)- F_k(r)|= |F_j(r)| 
	  \le \veps |j|^\zeta
	  \le \veps (j_0+1)^\zeta|j-k|^\zeta \, .
\end{align*}
Choosing $\veps= a/(2(j_0+1)^\zeta)\le  a/2 $ 
one sees that \eqref{F'}, \eqref{Fbounded} and \eqref{efw} are 
satisfied by $F$.

Therefore, we may apply Theorem \ref{thm 3} to conclude 
	$$ \sum_{j\in\Z} \| e^{F_j} P_j E_I(H)\|^2  < \infty .$$
 Finally, notice that $F_j(r) \geq \frac{\ve  }{2} \: |j|^\zeta$ whenever $r\in (0,\frac{\ve}{2} \: |j|^{\zeta/\sigma_+})$. Hence, 
\begin{equation*}
 e^{F_j} \geq  \mathds{1}_{(0, \frac{\ve}{2} |j|^{\zeta /\sigma_+} )} e^{F_j}\geq \mathds{1}_{(0, \frac{\ve}{2}  |j|^{\zeta /\sigma_+} )} e^{\frac{\ve }{2} |j|^\zeta}, 
\end{equation*}
for all $j\in\Z$ which yields \eqref{lemma 1.1 a}. 
\smallskip 
	
\noindent    
Proof of  \eqref{lemma 1.1 b}:   We consider another sequence of functions defined, for any $j\in \Z$,  by
\begin{equation}
G_j(r) =c[ \, r^{ \zeta{\sigma_-}}   - \eta^{\zeta{\sigma_-}} \, \( 1+|j|   \)^\zeta  ]_+\,, \quad r>0\,,
\end{equation}
for  constants $c>0$ and  $\eta\ge\eta_E$ to be fixed below, 
and $0<\zeta\le 1$. 
Using Lemma \ref{cfr}.\eqref{ec2} we have 
\begin{equation}
\abs{G^\prime_j}^2 =\left(c\zeta{\sigma_-}\right)^2\, r^{2( \zeta{\sigma_-}-1)}\mathds{1}_{ ( \eta \( 1+|j|   \)^{1/{\sigma_-}}, \infty)}
\le (V_j-E) \mathds{1}_{ ( \eta \( 1+|j|   \)^{1/{\sigma_-}}, \infty)}\,,
\end{equation}
provided $c \sigma_-\le \lambda_-$. Thus, in view of Remark \ref{obs} this choice of $G_j$ satisfies the requirements 
 \eqref{F'} and \eqref{Fbounded}.
 
 In a similar but easier fashion as above for $F$, one can check 
 \begin{align*}
 	\abs{G_j-G_k}\le c\eta^{\zeta{\sigma_-}}\abs{j-k}^\zeta, 
 \end{align*}
for any $j,k\in \Z$. Hence, for the choice 
 $\eta=\eta_E $ and
  $c=\min\{ \lambda_-/\sigma_-, (a/(2\eta_E))^{1/(\zeta\sigma_-)}\}$, 
  one sees  that \eqref{F'}--\eqref{efw} are 
  satisfied by $G_j$. 
  This implies $ \sum_{j\in\Z} \| e^{G_j} P_j E_I(H)\|^2  < \infty $.

Finally, we note that for $ r\ge  \eta \,[2( 1+|j|)]^{1/{\sigma_-}} $ 
we have  $G_j(r)\ge \frac{c_2}{2} r^{\zeta{\sigma_-}}$.  
Therefore, 
\begin{equation*} 
	e^{G_j(r)} 
	  \geq  
	    \mathds{1}_{ ( \eta \,[2( 1+|j|)]^{\zeta/{\sigma_-}} ,\infty )} e^{G_j(r)} 
	  \geq
        \mathds{1}_{ ( \eta \,(4|j|)^{\zeta/{\sigma_-}} ,\infty )} \, e^{\frac{c_2}{2} r^{\zeta{\sigma_-}}}\,. 
\end{equation*}
This concludes the proof of \eqref{lemma 1.1 b}.

\end{proof}

%
%
%
%


 \section{A remark on a model with  mobility edge}\label{med}
 So far in the article, nothing has been said about the limiting case $\sigma_- = \sigma_+  = 1$. On this subsection, we give results on the localization of particles moving under such magnetic fields when no electric field is  present. More precisely, we consider the situation in which the magnetic flux is given by
 \begin{equation}\label{mobility flux}
 \Phi(r) = \lambda \, r \, , \qquad r>0 \, 
 \end{equation}
 for some $\lambda>0,$ with the Hamiltonian $H_0$ being defined through \eqref{eq:22}. The spectral quality of this operator has already been determined, see \cite{SimonMiller} or {\cite[Theorem 6.2]{CFKS}}. In particular, it is proven that $\sigma (H_0) = [0,\infty)$ and the spectrum is dense pure point in $[0,\lambda^2)$ and absolutely continuous in $(\lambda^2,\infty)$.
 For energies above $\lambda^2$ one may use the absolute continuity of the spectrum, similarly as it was first done in \cite{Last1996} for Schr\"odinger operators and then adapted in \cite{MSBallistic2015} to Dirac particles, with rotational symmetry to show ballistic dynamics. The long time dynamics for high energies is therefore understood; we now settle the question about dynamics for low, positive energies.
 
 
 First, note that the rotational symmetry of $H_0$ makes the dynamics of $J$ trivial. Therefore, to obtain an estimate on $|\textbf{x}(t)|$ it suffices to adapt Theorem \ref{thm 1} to the present case.  One may go through its proof and realize that the exterior tunnelling estimate \eqref{lemma 1.1 b} is all that is needed. We state both of these adapted results in the following.
 
 \begin{theorem}\label{thm6}
 	Let $H_0$ be the Hamiltonian associated to the quadratic form \eqref{eq:22} with $A\equiv\lambda>0 $, as given in \eqref{rotgauge}. Let $E\in (0,\lambda^2) $ and $I=[0,E]$. Then,  
 	there exist constants $c_- \in (1,\infty)$ and $\delta>0$ such that 
 	\begin{align} 
 	\sum_{j\in\Z}  \| \mathds{1}_{ [c_- |j|  ,\infty ) }(\ab{\bx}) \, e^{\delta |\t{\textbf{x}}|  } \, P_j E_I(H_0)\|^2 < \infty\,.
 	\end{align}
 	Consequently, for every $\nu>0$ there exists a constant $C>0$ such
 	that for all $\vp\in E_{I}(H_0) L^2(\R^2)$
 	\begin{equation}
 	\SP{\varphi(t)}{|\bx|^{\nu }\vp(t)} \leq C \( \n{\vp}^2  +    \SP{\vp }{|J|^{\nu } \vp } \)\,, \quad \textnormal{for}\quad t\geq 0\,,
 	\end{equation}
 	holds.
 \end{theorem}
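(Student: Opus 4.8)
The plan is to follow the proofs of \eqref{lemma 1.1 b} and of Theorem~\ref{thm 1}, specialised to $\zeta=\sigma_-=1$ and $W=0$. Since $W=0$, Condition~\ref{con3} and the constant $c_0$ from \eqref{eq:13} do not enter: by Remark~\ref{delta-rmk} we may take $c_0=0$, so that for the energy $E_0=E$ and a parameter $\delta_0>0$ we have $\wti E=E+\delta_0$, and moreover Theorem~\ref{thm 3} applies without condition \eqref{efw}. Also, $H_0$ commutes with $J$, with every $P_j$ and with $E_I(H_0)$. The one genuinely new ingredient compared with the regime $\sigma_->1$ is the behaviour of the effective potential $V_j(r)=(j-\lambda r)^2/r^2=(\lambda-j/r)^2$ far from the origin: as $r\to\infty$ one has $V_j(r)\to\lambda^2$ uniformly in $j$, and since $E<\lambda^2$ this produces a genuine exterior classically forbidden region on which $V_j-\wti E$ is bounded below by a \emph{positive constant} (rather than by a growing power of $r$), which is exactly what makes a weight growing linearly in $r$ admissible in Theorem~\ref{thm 3}.

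\emph{Step 1 (exterior effective-potential bound, analogue of Lemma~\ref{cfr}(ii)).} Fix $\delta_0\in(0,\lambda^2-E)$ and put $\wti E:=E+\delta_0<\lambda^2$. I claim there are $\eta>1$ and $\lambda_0>0$ with
\begin{equation*}
	V_j(r)-\wti E\ge \lambda_0^2\qquad\text{for all }j\in\Z\text{ and all }r\ge \eta(1+|j|)\,.
\end{equation*}
Indeed, for such $r$ one has $|j|/r\le 1/\eta$, hence $\sqrt{V_j(r)}=|\lambda-j/r|\ge\lambda-|j|/r\ge\lambda-1/\eta$; choosing $\eta\ge 1+\max\{1,\,2/(\lambda-\sqrt{\wti E})\}$ gives $\lambda-1/\eta\ge(\lambda+\sqrt{\wti E})/2$, so $V_j(r)-\wti E\ge \tfrac14(\lambda+\sqrt{\wti E})^2-\wti E=\tfrac14(\lambda-\sqrt{\wti E})(\lambda+3\sqrt{\wti E})=:\lambda_0^2>0$ because $\sqrt{\wti E}<\lambda$. (For $j=0$ this is trivial since $V_0\equiv\lambda^2$.)

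\emph{Step 2 (weights and application of Theorem~\ref{thm 3}).} Take $G_j(r):=c\,[\,r-\eta(1+|j|)\,]_+$ with $c:=\lambda_0$, which is the $\zeta=\sigma_-=1$ specialisation of the weight used in the proof of \eqref{lemma 1.1 b}. Then $G=(G_j)_{j\in\Z}$ lies in $PC^1(\R^+,\R)$, is non-negative, and $|G_j'|=c\,\mathds{1}_{(\eta(1+|j|),\infty)}\le c$. By Step~1 and Remark~\ref{obs} (with $M_j:=(\eta(1+|j|),\infty)$, on which $V_j-\wti E>0$) we get $(G_j')^2\le (V_j-\wti E)\mathds{1}_{M_j}\le (V_j-\wti E)\chi_j^\perp(\wti E)\le V_j-\wti E\chi_j^\perp(\wti E)$, i.e.\ \eqref{F'}; and $\mathds{1}_{M_j}\chi_j(\wti E)=0$, so $e^{G_j}\chi_j(\wti E)=\chi_j(\wti E)$ and hence $\|e^G\chi(\wti E)\|_\infty\le 1$, i.e.\ \eqref{Fbounded}. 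Since $W=0$, Theorem~\ref{thm 3} (including its extension to unbounded $F$ with $F'$ bounded) yields $\|e^G E_I(H_0)\|\le C$ for some $C=C(\delta_0)>0$.

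\emph{Step 3 (tunnelling sum and dynamical bound).} For $|j|\ge 1$ and $r\ge 4\eta|j|$ one has $\eta(1+|j|)\le 2\eta|j|\le\tfrac12 r$, so $G_j(r)=c\bigl(r-\eta(1+|j|)\bigr)\ge \tfrac c4 r+c\eta|j|$, which gives the pointwise bound $\mathds{1}_{[4\eta|j|,\infty)}(r)\,e^{(c/4)r}\le e^{-c\eta|j|}e^{G_j(r)}$ and therefore, using $e^{G_j}P_j=P_j e^G$,
\begin{equation*}
	\bigl\|\mathds{1}_{[4\eta|j|,\infty)}(|\bx|)\,e^{(c/4)|\bx|}\,P_jE_I(H_0)\bigr\|\le e^{-c\eta|j|}\,\|P_j e^G E_I(H_0)\|\le C\,e^{-c\eta|j|}\,;
\end{equation*}
the $j=0$ term is finite by a direct estimate (split $\mathds{1}_{[0,\infty)}=\mathds{1}_{[0,2\eta)}+\mathds{1}_{[2\eta,\infty)}$ and note $G_0(r)\ge\tfrac c2 r$ for $r\ge 2\eta$). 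Summing the squares over $j$ proves the first assertion of the theorem with $c_-:=4\eta>1$ and $\delta:=c/4>0$. The dynamical estimate now follows as in Theorem~\ref{thm 1}: for $\vp\in E_I(H_0)L^2(\R^2)$ we have $\vp(t)=E_I(H_0)\vp(t)$, $\|\vp(t)\|=\|\vp\|$ and $\||J|^{\nu/2}\vp(t)\|=\||J|^{\nu/2}\vp\|$ by commutation, and splitting $\||\bx|^{\nu/2}\vp(t)\|^2$ at radius $c_-|j|$ bounds the inner part by $c_-^\nu\||J|^{\nu/2}\vp(t)\|^2=c_-^\nu\SP{\vp}{|J|^\nu\vp}$ and the outer part by $\n{|\bx|^{\nu/2}e^{-\delta|\bx|}}^2\,\|\vp\|^2\sum_{j}\bigl\|\mathds{1}_{[c_-|j|,\infty)}(|\bx|)e^{\delta|\bx|}P_jE_I(H_0)\bigr\|^2$, which is finite by the first part. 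The main (and essentially only) obstacle is Step~1: the strict inequality $E<\lambda^2$ is precisely what is needed to obtain a \emph{uniform positive} lower bound $\lambda_0^2$ on $V_j-\wti E$ in the exterior region; once this is in place the rest is a routine transcription of the $\sigma_->1$, $W=0$ arguments.
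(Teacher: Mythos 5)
Your proposal is correct and follows essentially the same route as the paper's own proof: both reduce the problem to a uniform positive lower bound $V_j-\widetilde E\gtrsim 1$ in the exterior region (using $E<\lambda^2$), feed the resulting linear-in-$r$ weight $c(r-\eta(1+|j|))_+$ into Theorem~\ref{thm 3} with $W=0$, and then transcribe the argument of Theorem~\ref{thm 1}. Your version is, if anything, slightly more careful than the paper's about distinguishing $E$ from $\widetilde E=E+\delta_0$ and about spelling out the passage from $e^{G_j}$ to the localized exponential weight.
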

 
 We conclude that dynamical localization holds for energies $E\in(0,\lambda^2)$, provided the initial data is sufficiently regular in the angular variable, that is,
 \begin{align*}
 {\rm sup}_{t\ge 0} \SP{\varphi(t)}{|\bx|^{\nu }\vp(t)}<\infty\,.
 \end{align*}
 for all $\vp\in E_{I}(H_0) L^2(\R^2)\cap D (|J|^{\nu/2})$.

 \begin{proof}[Proof of Theorem 6]
 	We adapt the argument used to prove Theorem \ref{thm-tunnelling}, i.e. we construct an explicit sequence of functions satisfying \eqref{F'} and \eqref{Fbounded} and apply Theorem \ref{thm 3}. Since we assume $W=0$, we can omit \eqref{efw} and take $\epsilon_0 = c_0 = 0$ during the proof. Let $\delta_0 >0$ and $E \equiv \tilde{E}$. First, notice that the classically allowed regions are simplified to
 	\begin{equation}\label{Cj 0}
 	C_j(E) =    
 	\begin{cases} 
 	[ \, \frac{j}{\lambda  + E^{1/2}}  , \frac{j}{ \lambda - E^{1/2} } \, ] &  j>0   , \\
 	\emptyset &  j \leq 0   . \\
 	\end{cases}
 	\end{equation}
 	Note that the proof breaks down for $E>\lambda^2$ since then this structure is lost; one has in turn $C_j(E) = [ \frac{j}{\lambda  + E^{1/2}} , \infty)$ for $j>0$. Now, pick $\eta_1 = \eta_1 (\lambda, E)>\frac{1}{\lambda^2 - E}$ big enough such that
 	\begin{equation}\label{eta 1}
 	\frac{2\lambda}{\eta_1} \leq \frac{1}{2} ( \lambda^2 - E )
 	\end{equation}
 	holds. Then, let $\delta_1 = \delta_1 (\lambda, E, \eta_1) <\min\{\sqrt{\frac{\lambda^2 - E}{2}}, \frac{a}{2\eta_1} \} $ and define $H_j(r) = \delta_1 (r - \eta_1 |j|)_+$. We estimate for $r\in (\eta_1 |j|,\infty)$
 	$$ V_j(r) -E  = \lambda^2 \(1 - \frac{j}{\lambda r}\)^2 -E \geq \lambda^2 \( 1   -  \frac{2}{\lambda \eta_1} \) - E =(\lambda^2 - E) - \frac{2\lambda}{\eta_1} \geq \frac{\lambda^2 - E}{2} $$ 
 	where the last inequality follows from \eqref{eta 1}. Since $|H_j'|^2 = \delta_1^2\,  \mathds{1}_{(\eta_1 |j|,\infty)} \leq \frac{\lambda^2 - E}{2} \mathds{1}_{(\eta_1 |j|,\infty)}$ we have that \eqref{F'} is satisfied. \eqref{Fbounded} is fulfilled in view of $\eta_1 > \frac{1}{\lambda^2 - E}$ and \eqref{Cj 0}. We finish the proof using Theorem \ref{thm 3}, putting $c_- = 2 \eta_1$ together with $\delta = \min \{ \delta_1/2, \delta_0 \} $ and arguing as in the end of the proof of Theorem \ref{thm-tunnelling}.
 \end{proof}

 \appendix

\section{The magnetic Schr\"odinger operator $H_0$}\label{app-polar coordinates}
Here we want to review the form definition of the magnetic Schr\"odinger 
operator $H_0$. We have to be a little bit careful, since we want to be able to handle rotationally symmetric, but possibly singular, magnetic fields $B$. 
Recall that we choose the vector potential $A$ in the Poincar\'e gauge given by \eqref{rotgauge}. 
\begin{lemma} \label{lem-square-integrable}
If the magnetic field $B:\R^2 \to \R $ is rotationally symmetric and locally square integrable, then the function  
\begin{equation*}
	\R^2\ni x\mapsto \Phi(|x|)/|x|= \frac{1}{|x|}\int_0^{|x|} B(s) s ds
\end{equation*}
is locally square integrable. In particular, the magnetic vector potential $A$ 
given by  \eqref{rotgauge}  is in $ L^2_{\rm loc} (\R^2) $. 	
\end{lemma}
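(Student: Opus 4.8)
The plan is to reduce the statement to a one-dimensional weighted integral and then invoke the Cauchy--Schwarz inequality with the weight $s\,\d s$ coming from polar coordinates. Writing $B(x)=B(|x|)$ by rotational symmetry, local square integrability of $B$ on $\R^2$ is, in polar coordinates, exactly the statement $\int_0^R |B(s)|^2\, s\,\d s<\infty$ for every $R>0$, i.e.\ $B\in L^2_{\rm loc}(\R^+, s\,\d s)$; in particular $B\in L^1_{\rm loc}(\R^+, s\,\d s)$, so $\Phi(r)=\int_0^r B(s)\,s\,\d s$ is finite for every $r>0$. Setting $r=|x|$, the vector potential in \eqref{rotgauge} satisfies $|A(x)|=|\Phi(r)|/r$, so that $A\in L^2_{\rm loc}(\R^2)$ is equivalent to $\int_0^R |\Phi(r)|^2\, r^{-1}\,\d r<\infty$ for all $R>0$; moreover the map $x\mapsto \Phi(|x|)/|x|$ has the same modulus as $A$ up to a rotation of the target, so it is handled simultaneously.

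The first step would be a pointwise bound on $\Phi$. Splitting $s=s^{1/2}\cdot s^{1/2}$ and applying Cauchy--Schwarz in $L^2((0,r),s\,\d s)$ gives
\[
   |\Phi(r)|^2 = \left|\int_0^r B(s)\, s\,\d s\right|^2 \le \left(\int_0^r |B(s)|^2\, s\,\d s\right)\left(\int_0^r s\,\d s\right) = \frac{r^2}{2}\int_0^r |B(s)|^2\, s\,\d s\,.
\]
Dividing by $r$ and using that $r\mapsto \int_0^r |B(s)|^2\, s\,\d s$ is non-decreasing, one gets $|\Phi(r)|^2/r \le \frac{r}{2}\int_0^R |B(s)|^2\, s\,\d s$ for $0<r\le R$. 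Integrating this over $(0,R)$ would then yield $\int_0^R |\Phi(r)|^2\, r^{-1}\,\d r \le \frac{R^2}{4}\int_0^R |B(s)|^2\, s\,\d s<\infty$, and multiplying by $2\pi$ and undoing the polar-coordinate change gives $\int_{|x|\le R}|A(x)|^2\,\d x<\infty$, which is the claim.

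I do not expect a genuine obstacle; the one point deserving care is the behaviour near the origin, where the weight $r^{-1}$ is singular. The Cauchy--Schwarz bound is tailored exactly for this: it shows $|\Phi(r)|^2/r=\mathcal{O}(r)$ as $r\to 0^+$, so the singularity of $r^{-1}$ is integrable and no separate neighbourhood of the origin needs to be treated. The argument uses nothing beyond $B\in L^2_{\rm loc}(\R^+, s\,\d s)$, which is precisely the hypothesis, so no further regularity or decay of $B$ is required.
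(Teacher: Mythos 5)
Your proof is correct and is essentially the paper's argument: the paper phrases the key pointwise estimate on $\Phi(r)^2$ as Jensen's inequality with the probability measure $\tfrac{2}{r^2}\,s\,\d s$ on $(0,r)$, which for the square function is precisely your Cauchy--Schwarz bound $|\Phi(r)|^2\le \tfrac{r^2}{2}\int_0^r |B(s)|^2\,s\,\d s$. The only cosmetic difference is that you then replace $\int_0^r$ by $\int_0^R$ before integrating in $r$, while the paper keeps the inner integral and swaps the order of integration; both routes give the same conclusion with comparable constants.
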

\begin{proof} This is s simple consequence of Jensen's inequality. Since 
$\frac{2}{r^2}\int_0^r s ds =1$, Jensen's inequality shows 
\begin{equation*}
		\Big(\frac{2\Phi(r)}{r^2}\Big)^2=\Big(\frac{2}{r^2} \int_0^r B(s) sds\Big)^2\le \frac{2}{r^2} \int_0^r B(s)^2 sds
\end{equation*}
for all $r>0$. Thus 
\begin{align}
	\int_{|x|\le R} \Big( \frac{\Phi(|x|)}{|x|} \Big)^2 \, dx 
	  &= 2\pi \int_0^R \frac{r^2}{4}		\Big(\frac{2\Phi(r)}{r^2}\Big)^2 rdr 
	     \le \pi \int_0^R \int_0^r B(s)^2 s ds\, rdr \\
  &= \frac{\pi}{2} \int_0^R B(s)^2 (R^2-s^2) sds <\infty
\end{align}
for all $R>0$. By the definition \eqref{rotgauge} we have $|A(x)| =\frac{\Phi(|x|)}{|x|}$, so also the magnetic vector potential in the 
Poincar\'e gauge $A$ is locally square integrable.  
\end{proof}

Denote by $p  = - i \nabla$ the usual momentum operator. We will need a representation of the magnetic Schr\"odinger operator 
$(p-A)^2$, when $A$ is in the Poincar\'e gauge and the magnetic field is 
rotationally symmetric. This is well--known, but we want to include singular magnetic fields, so we have to be a bit careful.   
\begin{lemma}\label{lem-representation} 
  The quadratic form $q_0$ of the free magnetic Schr\"odinger operator $(p-A)^2$ is given by 
  \begin{align}\label{eq-representation with L}
  	q_0(\varphi,\varphi)
  	  &= \SP{(p-A)\vp}{(p-A)\vp} 
  	    =  \SP{\partial_r\vp}{\partial_r\vp}+ \SP{\frac{1}{r}(\Phi-L)\vp}{\frac{1}{r}(\Phi-L)\vp}
  \end{align}
 for all $\vp\in \calD(q_0)$. Here  $L= x_1 p_2 -x_2 p_1 $ is the generator of rotations, i.e., the angular momentum operator,  in $L^2(\R^2)$, $r=|x|$, and the radial derivative is given by $\partial_r= \frac{x}{|x|}\cdot\nabla$. 
 In particular, $\calD(q_0)= \calD(\partial_r)\cap \calD(\frac{1}{r}(\Phi-L))$. 
\end{lemma}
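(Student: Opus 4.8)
The plan is to reduce the identity \eqref{eq-representation with L} to a pointwise decomposition of the covariant gradient $(p-A)\vp$ into its radial and angular components, using that in the Poincaré gauge $A = A(r)\hat\theta$ is purely azimuthal. First I would work on the core $C_0^\infty(\R^2)$, where all manipulations are classical and no integrability subtleties arise, and only at the end extend to all of $\calD(q_0)$ by a density/closure argument. On $C_0^\infty$, decompose $\nabla = \hat r\,\partial_r + \tfrac1r\hat\theta\,\partial_\theta$ in polar coordinates; since $L = -i\partial_\theta$ acts as $\tfrac1r\hat\theta\cdot$ on the angular part and $A(\bx) = \tfrac{\Phi(r)}{r}\hat\theta$, one gets the orthogonal (pointwise, in the $\hat r,\hat\theta$ frame) splitting
\begin{align*}
(p-A)\vp = \hat r\,(-i\partial_r)\vp + \hat\theta\,\tfrac1r\big(L - \Phi(r)\big)\vp\,.
\end{align*}
Taking $|\cdot|^2$ pointwise and integrating, the cross term vanishes because $\hat r \perp \hat\theta$, which yields \eqref{eq-representation with L} on $C_0^\infty(\R^2)$ together with the fact that each of the two terms on the right is separately finite there.

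Next I would pass from $C_0^\infty$ to the full form domain. Here the key input is that $C_0^\infty(\R^2)$ is a form core for $q_0$ (quoted from \cite{Simon1979} in the text) and that, by Lemma~\ref{lem-square-integrable}, $A \in L^2_{\rm loc}(\R^2)$, so $A\vp \in L^2_{\rm loc}$ and the distributions $\partial_r\vp$ and $\tfrac1r(\Phi-L)\vp$ make sense for $\vp\in\calD(q_0)$. For $\vp\in\calD(q_0)$ pick $\vp_n\in C_0^\infty$ with $\vp_n\to\vp$ in the form norm, i.e. $\vp_n\to\vp$ in $L^2$ and $(p-A)\vp_n\to(p-A)\vp$ in $L^2(\R^2,\C^2)$. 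Because the decomposition above is an \emph{orthogonal} (isometric, frame-wise) splitting, the two component sequences $\partial_r\vp_n$ and $\tfrac1r(\Phi-L)\vp_n$ are each Cauchy in $L^2(\R^+,r\,dr\,d\theta)$; identifying their limits with the corresponding distributions on $\vp$ shows $\vp\in\calD(\partial_r)\cap\calD(\tfrac1r(\Phi-L))$ and that \eqref{eq-representation with L} holds for $\vp$ by continuity of the norm. Conversely, if $\vp\in\calD(\partial_r)\cap\calD(\tfrac1r(\Phi-L))$, reassembling the two components gives $(p-A)\vp\in L^2$, hence $\vp\in\calD(q_0)$; this proves the claimed equality of domains $\calD(q_0)=\calD(\partial_r)\cap\calD(\tfrac1r(\Phi-L))$.

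The main obstacle I anticipate is not the algebra but the care needed near $r=0$ and at the level of distributions: one must make sure that, for merely $L^2_{\rm loc}$ vector potential, the formal polar-coordinate identity $\nabla = \hat r\partial_r + \tfrac1r\hat\theta\partial_\theta$ is legitimate when applied to $\vp\in\calD(q_0)$ (which need not be smooth), and that no boundary term at the origin is lost in the integration. The clean way around this is exactly the strategy above — prove everything on the smooth core, where the identity is elementary, and transport it to $\calD(q_0)$ purely by the form-core property and the isometry of the radial/angular splitting, so that the singular behavior of $B$ (hence of $\Phi(r)/r$) is never differentiated but only integrated against the controlled quantity $|(p-A)\vp|^2$. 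A secondary point to check is that $L$ and $\partial_r$ genuinely commute with the orthogonal projections onto $\hat r$ and $\hat\theta$ in the sense needed, i.e. that the cross term $\re\SP{\partial_r\vp}{\tfrac1r(\Phi-L)\vp}$ really is zero and not merely formally so; this again follows on $C_0^\infty$ from $\hat r\perp\hat\theta$ and then extends by the same limiting argument.
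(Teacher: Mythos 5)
Your argument is correct, and the overall strategy (establish the identity on $C_0^\infty(\R^2)$, then extend to $\calD(q_0)$ by density and positivity of the two terms) matches the paper's. The computational core, however, is genuinely different. You decompose the covariant gradient $(p-A)\vp$ pointwise into its radial and angular components in the orthonormal frame $\{\hat r,\hat\theta\}$, observe that the Poincar\'e gauge places $A$ entirely along $\hat\theta$, and conclude that the squared norm splits cleanly into $|\partial_r\vp|^2 + r^{-2}|(\Phi-L)\vp|^2$ at each point away from the origin, so the cross term vanishes before you even integrate. The paper instead expands $\SP{(p-A)\vp}{(p-A)\vp}$ algebraically, computes the mixed term $\SP{A\vp}{p\vp}$ explicitly to identify it with $\SP{\tfrac{\Phi}{r}\vp}{\tfrac{1}{r}L\vp}$, then invokes the separate Lemma~\ref{lem-P^2 angular momentum decomposition} (a $d$-dimensional identity for $p^2$, proved via commutator algebra and an integration by parts in $r$) to split $\SP{p\vp}{p\vp}$, and finally completes the square. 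Your route is more elementary and shorter in $d=2$, and it makes the orthogonality of the two pieces geometrically transparent, which is exactly the point used in the density step; the price is that it does not produce the general-$d$ statement of Lemma~\ref{lem-P^2 angular momentum decomposition} as a by-product, which the paper also records and uses elsewhere (e.g.\ in Lemma~\ref{lem radial energy boundedness}). One small imprecision: for $\vp\in\calD(q_0)$ you can only say $A\vp\in L^1_{\rm loc}$ (by Cauchy--Schwarz with $A\in L^2_{\rm loc}$, $\vp\in L^2$), not $A\vp\in L^2_{\rm loc}$; this is enough for the distributional statements you need and does not affect the argument.
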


Before we show this, we collect one more result, which is needed 
\begin{lemma}\label{lem-P^2 angular momentum decomposition}
	The quadratic form corresponding to the kinetic energy $p^2$ in dimension $d\ge 2$ is given by 
	\begin{equation}\label{eq-representation P^2}
		\SP{p\vp}{p\vp} 
		  = \SP{\partial_r\vp}{\partial_r\vp} 
		     + \sum_{1\le j<k\le d} \SP{\frac{1}{r}L_{j,k}\vp}{\frac{1}{r}L_{j,k}\vp}
	\end{equation}
	for all $\vp\in H^1(\R^d)$, the form domain of $p^2$, 
 where $r=|x|$, $\partial_r=\frac{x}{|x|}\cdot \nabla$ is the radial derivative on $\R^d$ and $L_{j,k}= x_jp_k-x_kp_j$, $1\le j<k\le d$ are the angular momentum generators.  
 
  In particular, $H^1(\R^d)= \calD(\partial_r)\cap \cap_{1\le j<k\le d}\calD(\frac{1}{r}L_{j,k})$.
\end{lemma}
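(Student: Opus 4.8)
The plan is to use the same angular-momentum decomposition that underlies the rest of the appendix, reducing the $d$-dimensional statement to a one-dimensional radial computation together with spherical harmonics. First I would pass to spherical coordinates $x=r\omega$ with $r=|x|>0$ and $\omega\in\S^{d-1}$, so that $L^2(\R^d)\cong L^2(\R^+,r^{d-1}\,\d r)\otimes L^2(\S^{d-1})$. It suffices to prove \eqref{eq-representation P^2} on the dense subspace $C_0^\infty(\R^d\setminus\{0\})$, since $C_0^\infty(\R^d\setminus\{0\})$ is a form core for $p^2$ in dimension $d\ge 2$ (the origin has capacity zero), and then extend by density once both sides are shown to be closed quadratic forms with the same domain.

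On $C_0^\infty(\R^d\setminus\{0\})$ the key identity is the pointwise decomposition of the gradient into its radial and tangential parts,
\begin{equation*}
|\nabla\vp|^2 = |\partial_r\vp|^2 + \frac{1}{r^2}\,|\nabla_{\S^{d-1}}\vp|^2\,,
\end{equation*}
where $\nabla_{\S^{d-1}}$ is the gradient on the unit sphere. Integrating over $\R^d$ and using $\d x = r^{d-1}\,\d r\,\d\omega$ gives
$\SP{p\vp}{p\vp} = \SP{\partial_r\vp}{\partial_r\vp} + \SP{\tfrac1r \nabla_{\S^{d-1}}\vp}{\tfrac1r\nabla_{\S^{d-1}}\vp}$. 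The remaining step is to identify the angular Dirichlet energy with the sum over the $L_{j,k}$: since $L_{j,k}=x_jp_k-x_kp_j = -i(x_j\partial_k-x_k\partial_j)$ is a vector field tangent to every sphere and these vector fields span the tangent space at each point of $\S^{d-1}$, one has the classical identity $\sum_{1\le j<k\le d}|L_{j,k}\vp|^2 = |\nabla_{\S^{d-1}}\vp|^2$ pointwise (equivalently, $\sum_{j<k}L_{j,k}^2$ is the spherical Laplacian $-\Delta_{\S^{d-1}}$, which one checks by a direct computation expanding $-\Delta = -\partial_r^2 - \tfrac{d-1}{r}\partial_r + \tfrac{1}{r^2}(-\Delta_{\S^{d-1}})$ and comparing with $p^2 = -\Delta$). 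Combining the two displays yields \eqref{eq-representation P^2} on the core.

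For the domain statement, I would argue that each of the three operators $\partial_r$ and $\tfrac1r L_{j,k}$ is closable on $C_0^\infty(\R^d\setminus\{0\})$ and that the quadratic form $\vp\mapsto \SP{\partial_r\vp}{\partial_r\vp} + \sum_{j<k}\SP{\tfrac1r L_{j,k}\vp}{\tfrac1r L_{j,k}\vp}$ is closed with form domain exactly $\calD(\overline{\partial_r})\cap\bigcap_{j<k}\calD(\overline{\tfrac1r L_{j,k}})$; since this closed form agrees with the (closed) form of $p^2$ on a common core, the two forms and their domains coincide, giving $H^1(\R^d)=\calD(\partial_r)\cap\bigcap_{1\le j<k\le d}\calD(\tfrac1r L_{j,k})$.

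\textbf{Main obstacle.} The genuinely delicate point is the behavior at the origin: one must ensure that no boundary term at $r=0$ is lost when integrating by parts and that $C_0^\infty(\R^d\setminus\{0\})$ is indeed a form core for $p^2$ when $d\ge 2$ — this is where the hypothesis $d\ge 2$ enters (it fails for $d=1$). I expect this to be handled by a standard capacity/cutoff argument: approximate $\vp\in H^1(\R^d)$ by $\vp\chi_\varepsilon$ with $\chi_\varepsilon$ a radial cutoff vanishing near $0$, and use that $\||\nabla\chi_\varepsilon|\,\vp\|\to 0$ because $\log(1/r)$-type cutoffs (for $d=2$) or power cutoffs (for $d\ge 3$) have vanishing Dirichlet energy against an $H^1$ function. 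Everything else is the routine spherical-harmonics computation above.
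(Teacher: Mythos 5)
Your proposal is correct, and it takes a genuinely more elementary route than the paper's proof. The paper first computes the second-order operator identity $\sum_{1\le j<k\le d}(x_j\partial_k-x_k\partial_j)^2=|x|^2\Delta-(x\cdot\nabla)^2-(d-2)x\cdot\nabla$, divides by $|x|^2$, and then uses a radial integration by parts (in the measure $r^{d-1}\d r$) to convert $-\SP{\vp}{\tfrac{1}{|x|^2}(x\cdot\nabla)^2\vp}$ into $\SP{\partial_r\vp}{\partial_r\vp}$ plus a first-order correction that cancels; the hypothesis $d\ge 2$ is used precisely to kill the boundary term $r^{d-1}\overline{\psi}\,\partial_r\psi|_{r=0}$. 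You bypass the operator computation and the integration by parts entirely by invoking the first-order pointwise algebraic identity
\begin{equation*}
\sum_{1\le j<k\le d}|L_{j,k}\vp(x)|^2
  = |x|^2|\nabla\vp(x)|^2 - |x\cdot\nabla\vp(x)|^2\,,
\end{equation*}
which is an immediate Lagrange-type expansion, and then simply integrate. This is shorter and avoids any boundary-term bookkeeping.

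One remark, though: the ``main obstacle'' you flag is a red herring for your own argument. Because you never integrate by parts, you do not need $C_0^\infty(\R^d\setminus\{0\})$ or any capacity/cutoff discussion at the origin. The pointwise identity above holds for every $\vp\in C_0^\infty(\R^d)$ at all $x\ne 0$, the quantity $\tfrac{1}{r}L_{j,k}\vp$ stays bounded near $0$ (since $L_{j,k}\vp=O(|x|)$ there), and $\{0\}$ is a Lebesgue null set, so you may integrate over $\R^d$ directly. Thus you can take the form core to be $C_0^\infty(\R^d)$, exactly as the paper does, and the passage to $H^1(\R^d)$ is the same standard density argument (nonnegativity of each summand plus closedness). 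The restriction $d\ge 2$ is needed for the lemma's statement (in $d=1$ there are no $L_{j,k}$'s and the formula is trivially different), but in your route it plays no analytic role beyond that.
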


\begin{proof}[Proof of Lemma \ref{lem-representation}]
  We assume that $\vp \in \mathcal{C}_0^\infty(\R^2)$, by density. Then since $A$ is locally square integrable, 
  \begin{align*}
  	\SP{\vp}{H_0\vp}
  	  &= \SP{(p-A)\vp}{(p-A)\vp} \\
  	  &= \SP{p\vp}{p\vp} -2 \re \SP{A_1\vp}{p_1\vp}
  	       -2 \re \SP{A_2\vp}{p_2\vp} +\SP{A\vp}{A\vp} 
  \end{align*} 
   Using the explicit form of vector potential in the gauge \eqref{rotgauge}, one also sees 
   \begin{align*}
   	  \SP{A_1\vp}{p_1\vp} 
   	    = -  \SP{x_2\frac{\Phi}{r^2}\vp}{p_1\vp} 
   	      = -  \SP{\frac{\Phi}{r}\vp}{\frac{1}{r}x_2p_1\vp} 
   \end{align*}
   and similarly 
   \begin{align*}
   	  \SP{A_2\vp}{p_2\vp} 
   	    =  \SP{\frac{\Phi}{r}\vp}{\frac{1}{r}x_1p_2\vp} 
   \end{align*}
where all terms are well defined, since  Lemma \ref{lem-square-integrable} shows that $\frac{\phi(|x|)}{|x|}$ is locally square integrable over $\R^2$. Thus 
  \begin{align*}
  	\SP{A_1\vp}{p_1\vp} + \SP{A_2\vp}{p_2\vp} 
  	  = \SP{\frac{\Phi}{r}\vp}{\frac{1}{r}(x_1p_2 -x_2p_1)\vp} 
  	  =  \SP{\frac{\Phi}{r}\vp}{\frac{1}{r}L\vp} 
  \end{align*}
 with $L= x_1p_2 -x_2p_1$. Since also 
   \begin{align*}
  	\SP{A\vp}{A\vp} =  \SP{\frac{\Phi}{r}\vp}{\frac{\Phi}{r}\vp} 
  \end{align*}
 this yields 
 \begin{align*}
 	 \SP{(p-A)\vp}{(p-A)\vp} 
 	   &=  \SP{p\vp}{p\vp} + 2\re \SP{\frac{\Phi}{r}\vp}{\frac{1}{r}L\vp} 
 	       + \SP{A\vp}{A\vp} \\
 	   &=  \SP{p\vp}{p\vp} + 2\SP{\frac{\Phi}{r}\vp}{\frac{1}{r}L\vp} 
 	       +   \SP{\frac{\Phi}{r}\vp}{\frac{\Phi}{r}\vp} 
 \end{align*}
 since the angular momentum commutes with rotationally symmetric functions, 
 so $\SP{\frac{\Phi}{r}\vp}{\frac{1}{r}L\vp} $ is real.  
 Moreover, by Lemma \ref{lem-P^2 angular momentum decomposition}, and again using that $L$ commutes with multiplication by rotationally symmetric functions, this gives 
 \begin{align*}
   \SP{(p-A)\vp}{(p-A)\vp} 
 	   &=  \SP{\partial_r\vp}{\partial_r\vp} + \SP{\frac{1}{r}L\vp}{\frac{1}{r}L\vp} + 2 \SP{\frac{\Phi}{r}\vp}{\frac{1}{r}L\vp} 
 	       + \SP{\frac{\Phi}{r}\vp}{\frac{\Phi}{r}\vp} \\
 	   &= \SP{\partial_r\vp}{\partial_r\vp} + \SP{\frac{1}{r}(\Phi-L)\vp}{\frac{1}{r}(\Phi-L)\vp} 
 \end{align*}
 which proves  \eqref{eq-representation with L} when $\vp\in \calC^\infty_0(\R^2)$. 
 Since $\calC^\infty_0(\R^2)$ is dense in the domain of $q_0$ \cite{Simon1979}  
 and all terms on the right hand side of \eqref{eq-representation with L} 
 are non--negative, a standard density argument shows that the domain of 
 $\calD(q_0)$ is equal to the intersection of $\calD(\partial_r)$ and 
 $\calD(\tfrac{1}{r}(\Phi-L))$. This proves Lemma \ref{lem-representation}.

\end{proof}

\begin{proof}[Proof of Lemma \ref{lem-P^2 angular momentum decomposition} ]
    We can use the same density argument as above to see that it is enough to assume that $\vp\in \calC_0^\infty(\R^d)$. Then 
  \begin{align*}
  	\sum_{1\le j<k\le d} \SP{L_{j,k}\vp}{L_{j,k}\vp} 
  	  = - \sum_{1\le j<k\le d} \SP{\vp}{(x_j\partial_k-x_k\partial_j)^2\vp}
  \end{align*}
  Moreover, 
  \begin{align*}
  	\sum_{1\le j<k\le d} & (x_j\partial_k-x_k\partial_j)^2 
  	  = \frac{1}{2}\sum_{ j\neq k } (x_j\partial_k-x_k\partial_j)^2 \\
  	  &= \frac{1}{2}\sum_{ j\neq k } 
  	        (x_j\partial_k x_j\partial_k - x_j\partial_k x_k\partial_j- x_k\partial_j x_j\partial_k  + x_k\partial_j x_k\partial_j) \\
  	  &=  \frac{1}{2}\sum_{ 1\le j, k \le d} 
  	        (x_j^2\partial_k^2 +  x_k^2\partial_j^2 - \partial_k x_kx_j\partial_j   -x_j\partial_j
  	          - \partial_j x_j x_k\partial_k  -x_k\partial_k) \\
  	   &\phantom{==}     - \sum_{ j } 
  	        (x_j^2\partial_j^2 - \partial_j x_j^2\partial_j - x_j\partial_j) \\
  	  &= |x|^2 \Delta -(\nabla\cdot x)(x\cdot\nabla)  + 2x\cdot\nabla 
  	    =  |x|^2 \Delta -(x\cdot\nabla)^2  -(d- 2)x\cdot\nabla 
  \end{align*}
  Thus 
  \begin{align*}
  	p^2= -\Delta = -\frac{1}{|x|^2}(x\cdot\nabla)^2  -\frac{(d- 2)}{|x|^2}x\cdot\nabla + \frac{1}{|x|^2}\sum_{1\le j<k\le d} L_{j,k}^2\, ,
  \end{align*}
  that is, as quadratic forms 
  \begin{align}\label{eq-representation 1}
  	\SP{p\vp}{p\vp} 
  	  = -\SP{\vp}{\frac{1}{|x|^2}(x\cdot\nabla)^2\vp} 
  	       -\SP{\vp}{\frac{d-2}{|x|^2}(x\cdot\nabla)\vp} 
  	       + \sum_{1\le j<k\le d}\SP{\vp}{\frac{1}{|x|^2}L_{j,k}^2\vp}
  \end{align}
 at least when $\vp\in\calC_0^\infty(\R^d)$. 
 For such $\vp$ we set 
 \begin{align*}
 	\psi(r,\omega) \coloneqq \vp(r\omega)\, ,
 \end{align*}
 when $r\ge 0$ and $|\omega|=1$. 
 That is, $\vp(x)= \psi(|x|, x/|x|)$. Then clearly, $\partial_r\psi(r,\omega)= \omega\cdot\nabla\vp(r\omega)$, so 
 \begin{align}
 	x\cdot\varphi(x) = r\partial_r\psi(r, \omega)
 \end{align}
 with $r=|x|$ and $\omega=x/|x|\in S^{d-1}$. Then 
 \begin{align}
 	-\SP{\vp}{\frac{1}{|x|^2}(x\cdot\nabla)^2\vp} 
 	  &= - \int_{S^{d-1}} d\omega \int_0^\infty dr r^{d-1} \, \overline{\psi(r,\omega)} r^{-2}(r\partial_r)^2 \psi(r,\omega) 
 \end{align}
 Now for fixed $\omega$, we have 
 \begin{align*}
 	 -\int_0^\infty dr r^{d-1} &\, \overline{\psi(r,\omega)} r^{-2}(r\partial_r)^2 \psi(r,\omega) \\
 	   & =  - \left[ r^{d-1} \, \overline{\psi(r,\omega)} \partial_r\psi(r,\omega) \right]_{r=0}^{r=\infty} 
 	   		+ \int_0^\infty dr \overline{(\partial_r(r^{d-2} \, \psi(r,\omega)))} (r\partial_r) \psi(r,\omega) \\
 	   &=  (d-2)\int_0^\infty dr r^{d-3}\overline{ \psi(r,\omega))} \, r\partial_r \psi(r,\omega) 
 	       +  \int_0^\infty dr r^{d-1}|\partial_r\psi(r,\omega))|^2  \, .
 \end{align*}
 the first term in the second line above vanishes, since $\vp$ has compact support and $d\ge 2$. 
 Thus 
 \begin{align*}
 	-\SP{\vp}{\frac{1}{|x|^2}(x\cdot\nabla)^2\vp} 
 	  &=  (d-2)\SP{\vp}{\frac{1}{|x|^2}(x\cdot\nabla)\vp} 
 	    + \SP{\partial_r\vp}{\partial_r\vp} 
 \end{align*}
 with $\partial_r= \frac{x}{|x|}\cdot\nabla$. Using this in \eqref{eq-representation 1} shows  
 \begin{align*}
 	  	\SP{p\vp}{p\vp} 
  	  &= \SP{\partial_r\vp}{\partial_r\vp} 
  	       + \sum_{1\le j<k\le d}\SP{\vp}{\frac{1}{|x|^2}L_{j,k}^2\vp} \\
  	  &=   \SP{\partial_r\vp}{\partial_r\vp} 
  	       + \sum_{1\le j<k\le d}\SP{\frac{1}{|x|}L_{j,k}\vp}{\frac{1}{|x|}L_{j,k}\vp} 
 \end{align*}
 since $L_{j,k}$ commutes with multiplication with radial functions. This proves 
 \eqref{eq-representation P^2}, at least when $\vp\in\calC_0^\infty(\R^d)$. 
 On the other hand, since all the terms on the right hand side of \eqref{eq-representation P^2} are positive and $\calC_0^\infty(\R^d)$ is dense in the Sobolev space $H^1(\R^d)$, the form domain of $p^2$, it is an easy exercise to show that then $\partial_r\vp, \frac{1}{|x|}L_{j,k}\vp\in L^2(\R^d)$ for all $\vp\in H^1(\R^d)$ and \eqref{eq-representation P^2} holds for all $\vp\in H^1(\R^d)$. 
\end{proof}

To work in polar coordinates, we identify the Hilbert space $L^2(\R^2)$ 
with the Hilbert space  $\H=L^2(\R^+ \times \S^1, r \d r\, \d\theta)$ (see \eqref{eq:8} above) 
with  the scalar product 
\begin{align*}
  \< f,g\>_{\H}\equiv \< f,g\>=\int_{\R^+\times [0,2\pi)} \overline{f(r,\theta)}
  g(r,\theta)\, r\d r  \d\theta\,.
\end{align*}
It is well-known and easy to see that the map $\calU:L^2(\R^2)\to \H$, 
defined first for $\vp\in\calC_0^\infty(\R^2)$ by 
\begin{align}
	\widetilde{\vp}(r,\theta)=(\calU\vp)(r,\theta)
	  \coloneqq \vp(r\cos\theta,r\sin\theta)\, ,
\end{align}
extend to a unitary operator $L^2(\R^2)\to\H$.  One easily checks 
\begin{align}
	-i\partial_\theta \widetilde{\vp}(r,\theta)
	  = ((x_1p_2-x_2p_1)\vp)(r\cos\theta,r\sin\theta) 
	  = (\calU(L_{1,2}\vp))(r,\theta)
\end{align}
so 
\begin{align}
  J\coloneqq -i\frac{\partial}{\partial\theta} = \calU L_{1,2}\calU^*	
\end{align}
is the self--adjoint angular momentum operator in the $r,\theta$ coordinates. 

Since the complex exponentials $e^{ij\theta}$, $m\in\Z$ and $0\le \theta\le 2\pi$, are an orthonormal basis 
for $L^2([0,2\pi])$, which we identify with $L^2(\S^1)$,  one can expand every $\widetilde{\vp}\in \H$ as 
\begin{align}\label{eq-expansion}
	\widetilde{\vp}(r,\theta) = \sum_{j\in\Z} \vp_j(r) e^{ij\theta}
\end{align}
where $(\vp_j)_{m\in\Z}\in L^2(\R_+,rdr)$ and $\|\widetilde{\vp}\|_{\H}^2= \sum_{j\in\Z}\|\vp_j\|_{L^2(\R_,rdr)}^2$. Then 
\begin{align}
	J\widetilde{\vp}(r,\theta) = \sum_{j\in\Z} \vp_j(r) je^{ij\theta}\, .
\end{align}
Thus the family of corresponding eigen--projections  $(P_j)_{j\in\Z} $ 
of the angular momentum operator $J$ given by  
$(P_j\widetilde{\vp})(r,\theta)= \vp_{j}(r)e^{ij\theta}$ 
decomposes the underlying Hilbert space. We will often write $\vp_j= P_j\calU\vp$, when $\vp\in L^2(\R^2)$.  
In these coordinates we have  
\begin{proposition}\label{prop-angular momentum decomposition}
	Let $\vp$ be in the domain of the quadratic form $q_0$ corresponding to 
	$(P-A)^2$, and expand 
	$\widetilde{\vp}=\calU\vp$ as in \eqref{eq-expansion}. Then 
	\begin{align}
		q_0(\varphi,\varphi) 
		  = \sum_{j\in\Z}
		        \Big( \SP{\partial_r\vp_j}{\partial_r\vp_j}_{L^2(\R_+,rdr)}  
		               + \SP{\vp_j}{\frac{1}{r^2}(\Phi(r)-j)^2\vp_j}_{L^2(\R_+,rdr)}
		        \Big) 
	\end{align}
\end{proposition}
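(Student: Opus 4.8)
The plan is to combine the two representation lemmas already established, namely Lemma~\ref{lem-representation} (which writes $q_0$ in terms of $\partial_r$ and $\tfrac1r(\Phi-L)$ on $L^2(\R^2)$) and Lemma~\ref{lem-P^2 angular momentum decomposition}, together with the unitary equivalence $\calU:L^2(\R^2)\to\H$ and the Fourier expansion in the angular variable. First I would transport the identity \eqref{eq-representation with L} through $\calU$: since $\calU$ is unitary, $\SP{\partial_r\vp}{\partial_r\vp}=\SP{\partial_r\widetilde\vp}{\partial_r\widetilde\vp}_\H$ and $\SP{\tfrac1r(\Phi-L)\vp}{\tfrac1r(\Phi-L)\vp}=\SP{\tfrac1r(\Phi-J)\widetilde\vp}{\tfrac1r(\Phi-J)\widetilde\vp}_\H$, using $J=\calU L\calU^*$ and the fact that $\partial_r$ and multiplication by the radial function $\Phi(r)/r$ commute with $\calU$ in the obvious sense. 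This already gives the $\H$-side form \eqref{eq:24}.

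Next I would insert the expansion $\widetilde\vp(r,\theta)=\sum_{j\in\Z}\vp_j(r)e^{ij\theta}$. The two operators appearing, $\partial_r$ and $\tfrac1r(\Phi(r)-J)$, both act diagonally on the angular Fourier modes: $\partial_r$ because it does not touch $\theta$, and $\tfrac1r(\Phi(r)-J)$ because $J e^{ij\theta}=je^{ij\theta}$, so $\tfrac1r(\Phi-J)$ sends $\vp_j(r)e^{ij\theta}$ to $\tfrac1r(\Phi(r)-j)\vp_j(r)e^{ij\theta}$. Using orthogonality of $\{e^{ij\theta}\}$ in $L^2(\S^1,\d\theta)$ and the tensor structure of the measure $r\,\d r\,\d\theta$, the $\H$ inner products split into sums over $j$ of $L^2(\R_+,r\,\d r)$ inner products, with no cross terms. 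This yields
\[
q_0(\vp,\vp)=\sum_{j\in\Z}\Big(\SP{\partial_r\vp_j}{\partial_r\vp_j}_{L^2(\R_+,rdr)}+\SP{\vp_j}{\tfrac{1}{r^2}(\Phi(r)-j)^2\vp_j}_{L^2(\R_+,rdr)}\Big),
\]
which is the claimed formula.

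The only genuine subtlety — and the step I expect to be the main obstacle — is the domain bookkeeping: one must justify that the termwise manipulation is legitimate on all of $\calD(q_0)$, not merely on a dense core. Here I would use the characterization from Lemma~\ref{lem-representation} that $\calD(q_0)=\calD(\partial_r)\cap\calD(\tfrac1r(\Phi-L))$ and that both quadratic-form expressions are non-negative. For $\vp\in\calC_0^\infty(\R^2)$ the identities are immediate from Lemma~\ref{lem-representation} and the finite-sum version of the Fourier expansion; for general $\vp\in\calD(q_0)$ one approximates by $\calC_0^\infty$ functions (a form core by \cite{Simon1979}) and passes to the limit, using that all terms on the right are non-negative so that the partial sums are monotone and Fatou/monotone convergence applies to each component $\vp_j$. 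This also shows that each $\vp_j$ lies in the form domain of the one-dimensional operator $-\partial_r^2+V_j$ on $L^2(\R_+,r\,\d r)$, closing the loop. I would phrase the write-up so as to reuse, essentially verbatim, the density argument already given at the end of the proof of Lemma~\ref{lem-representation}.
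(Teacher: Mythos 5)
Your proposal is correct and follows essentially the same route the paper takes: transport \eqref{eq-representation with L} through $\calU$ to $\H$, observe that $\partial_r$ and $\tfrac1r(\Phi-J)$ act diagonally on the angular Fourier modes, use orthogonality of $\{e^{ij\theta}\}$ to split the $\H$-norms into sums over $j$, and handle the domain via density of $\calC_0^\infty$ together with non-negativity of every summand (the same argument used at the end of Lemma~\ref{lem-representation}). The paper in fact presents the proposition without a separate proof, treating it as an immediate consequence of exactly this chain of observations, so your write-up would serve as a faithful fleshing-out of what is left implicit there.
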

So the eigen--spaces corresponding to $P_j$ are invariant subspaces for 
the unperturbed magnetic Schr\"odinger operator with a rotationally symmetric magnetic field, when the magnetic vector potential is in the Poincar\'e gauge \eqref{rotgauge}. 
 
 Because of the above identity, it is convenient to recall the defintion of the {effective potential}, namely,
\begin{equation}
\label{eff pot}
V_j(r)\coloneqq \frac{1}{r^2} (\Phi(r)-j)^2\,. 
\end{equation}
By polarization, Proposition  \ref{prop-angular momentum decomposition} 
shows that when $\vp,\psi$ are in the domain of the form $q_0$ 
corresponding to $(p-A)^2$ and 
$\widetilde{\vp}=\calU\vp, \widetilde{\psi}=\calU\psi$ are expanded as in  \eqref{eq-expansion} then
\begin{equation}\label{eq-q_0 in polar coordinates}
	q_0(\vp,\psi) 
	  = \sum_{j\in\Z} 
	      \Big( 
	         \SP{\partial_r\vp_j}{\partial_r\psi_j}_{L^2(\R_+,rdr)}  
		               + \SP{\vp_j}{V_j\psi_j}_{L^2(\R_+,rdr)}
          \Big)
\end{equation}

We need one more result, concerning the form boundedness of potentials $W$ satisfying Condition \ref{con3} with respect to the radial kinetic energy.  
\begin{lemma}\label{lem radial energy boundedness}
  Assume that $v$ is a rotationally symmetric potential which is 
  form bounded with respect to $p^2$, that is, for any $0<\veps $ there exists $C(\veps)<\infty$ with 
  \begin{align}\label{eq-ass1}
  	|\SP{\vp}{v\vp}|\le \veps \n{\nabla\vp}^2 +C(\veps)\n{\vp}^2
  \end{align}
  for all $\vp\in\calD(p)$. Then also 
  \begin{align}
  	|\SP{\vp}{v\vp}|\le \veps \n{\partial_r\vp}^2 +C(\veps)\n{\vp}^2
  \end{align}
  for all $\vp\in\calD(\partial_r)$, where $\partial_r=\tfrac{x}{|x|}\cdot\nabla$ is the radial derivative.
\end{lemma}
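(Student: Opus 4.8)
The plan is to reduce the claim to the hypothesis \eqref{eq-ass1} by comparing a given $\vp$ with a suitable \emph{radial} function; note that the naive inequality $\n{\partial_r\vp}\le\n{\nabla\vp}$ (which follows from Lemma~\ref{lem-P^2 angular momentum decomposition}) goes the wrong way, so an actual argument is needed. It is enough to prove the estimate for $\vp\in C_0^\infty(\R^2)$ and then extend it to $\calD(\partial_r)$ by density. So fix such a $\vp$, write $\widetilde\vp=\calU\vp$ in polar coordinates, set
\[
	h(r):=\frac{1}{2\pi}\int_0^{2\pi}|\widetilde\vp(r,\theta)|^2\,\d\theta\,,
\]
and define the radial comparison function $\Psi(x):=\sqrt{h(|x|)}$. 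Since $\Psi$ and $\vp$ have the same angular average of the modulus squared and $v$ is rotationally symmetric, Fubini's theorem gives at once, with all norms and inner products taken in $L^2(\R^2)$,
\[
	\n{\Psi}=\n{\vp}\,,\qquad
	\SP{\Psi}{v\Psi}=2\pi\!\int_0^\infty \!v(r)\,h(r)\,r\,\d r=\SP{\vp}{v\vp}\,.
\]

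The decisive step is the gradient bound $\n{\nabla\Psi}\le\n{\partial_r\vp}$. Writing $g:=\sqrt h$, we have $h'(r)=\tfrac1\pi\int_0^{2\pi}\re\big(\overline{\widetilde\vp}\,\partial_r\widetilde\vp\big)(r,\theta)\,\d\theta$, so the Cauchy--Schwarz inequality on $L^2(\mathbb{S}^1)$ yields $|h'(r)|\le 2\sqrt{h(r)}\big(\tfrac1{2\pi}\int_0^{2\pi}|\partial_r\widetilde\vp(r,\theta)|^2\,\d\theta\big)^{1/2}$. Using the chain rule for the square root of a nonnegative Sobolev function (so that $g'=h'/(2\sqrt h)$ where $h>0$ and $g'=0$ a.e.\ on $\{h=0\}$), this gives $|g'(r)|^2\le\tfrac1{2\pi}\int_0^{2\pi}|\partial_r\widetilde\vp(r,\theta)|^2\,\d\theta$ for a.e.\ $r>0$; multiplying by $2\pi r$, integrating over $r>0$, and using that $|\nabla\Psi|=|\partial_r\Psi|$ pointwise because $\Psi$ is radial, the claimed bound follows. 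Because $g,g'\in L^2(\R^+,r\,\d r)$ and a point has zero $H^1$-capacity in $\R^2$, the function $\Psi$ lies in $H^1(\R^2)=\calD(p)$. Applying \eqref{eq-ass1} to $\Psi$ and inserting the three relations above,
\[
	|\SP{\vp}{v\vp}|=|\SP{\Psi}{v\Psi}|\le\veps\,\n{\nabla\Psi}^2+C(\veps)\n{\Psi}^2\le\veps\,\n{\partial_r\vp}^2+C(\veps)\n{\vp}^2\,,
\]
with the \emph{same} $\veps$ and $C(\veps)$; this is the desired bound on $C_0^\infty(\R^2)$.

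I expect the only genuine technical point to be the regularity bookkeeping behind the gradient bound: checking that $h\in W^{1,1}_{\mathrm{loc}}(\R^+)$ with $h'$ as written, the chain rule for $\sqrt h$ across the zero set of $h$, and $H^1$-membership of $\Psi$ including at the origin. All of this is classical once $\vp$ is smooth and compactly supported, which is exactly why it is convenient to argue first on $C_0^\infty(\R^2)$; the passage to general $\vp\in\calD(\partial_r)$ is then routine, since the comparison functions $\Psi_n$ attached to an approximating sequence $\vp_n\to\vp$ converge in $L^2(\R^2)$ and are bounded in $H^1(\R^2)$, so the inequality is stable under the limit. (In the setting used in the body of the paper $v\in L^2(\R^2)+L^\infty(\R^2)$, whence $|v|$ is $p^2$-form bounded as well and these limiting manipulations present no difficulty.)
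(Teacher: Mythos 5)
Your proof is correct, but it takes a genuinely different route from the paper's. The paper decomposes $\vp$ by angular momentum, $\vp=\sum_j P_j\vp$, observes that since $v$ is radial the form $\SP{\vp}{v\vp}$ is diagonal in $j$, lifts each component $\vp_j$ to a radial function on $\R^2$ (so $\n{\nabla\vp_j}=\n{\partial_r\vp_j}$), applies the hypothesis \eqref{eq-ass1} to each $\vp_j$ separately and then sums, using that the $P_j$ commute with $\partial_r$. That argument is purely algebraic once the decomposition is in place and incurs no regularity issues. You instead construct a \emph{single} radial comparison function $\Psi=\sqrt{h}$ with $h$ the angular average of $|\widetilde\vp|^2$, establish the identities $\n{\Psi}=\n{\vp}$, $\SP{\Psi}{v\Psi}=\SP{\vp}{v\vp}$, and the decisive pointwise gradient bound $\n{\nabla\Psi}\le\n{\partial_r\vp}$ via Cauchy--Schwarz on $L^2(\S^1)$ together with the chain rule for $\sqrt{\,\cdot\,}$, and apply \eqref{eq-ass1} once. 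This is essentially a Kato-type (or ``diamagnetic-flavored'') symmetrization argument rather than a fiber-by-fiber one. Your approach is conceptually appealing because it exhibits an explicit radial function that ``beats'' $\vp$ in all three quantities simultaneously; the trade-off, which you rightly flag, is that it requires Sobolev-regularity bookkeeping for $\sqrt{h}$ across its zero set and at the origin — overhead that the paper's angular-momentum decomposition avoids entirely. Both approaches yield the same constants $\veps,C(\veps)$, which is the crucial feature for Remark \ref{rem:diamagnetic}.
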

\begin{proof}
  We expand $\wti{\vp}= \calU\vp= \sum_{j\in\Z}\vp_j e_j$, where $\vp_j$ are purely radial functions and $e_j$ are the basis of complex exponentials. Then for a radial potential $v$ we have 
  \begin{align*}
  	\SP{\vp}{v\vp}
  	  &=   \sum_{j\in\Z} \SP{P_j\vp}{v\vp} 
  	    = \sum_{j\in\Z} \SP{P_j^2\vp}{v\vp} 
  	    = \sum_{j\in\Z} \SP{P_j\vp}{vP_j\vp} 
        = \sum_{j\in\Z} \SP{\vp_j}{v\vp_j}_{L^2(\R_+,rdr )}
  \end{align*}
  with the angular momentum projections $P_j$. 
  Lifting each $\vp_j$ back to $L^2(\R^2)$, by considering it to be constant in the angular coordinate, i.e., identifying it as the function $\R^2\ni x\mapsto \vp_j(|x|)$,  we see have by assumption \eqref{eq-ass1} 
  \begin{align*}
  	  	\abs{\SP{\vp}{v\vp}}
  	  &\le \veps\sum_{j\in\Z}  \SP{\nabla\vp_j}{\nabla\vp_j}+   C(\veps)\sum_{j\in\Z} \SP{\vp_j}{\vp_j}
  \end{align*}
  Now, since $\vp_j$ lifted back to $\R^2$ is radial, we have 
  \begin{align*}
  	\SP{\nabla\vp_j}{\nabla\vp_j}= \SP{\partial_r\vp_j}{\partial_r\vp_j}
  	  = \SP{\partial_rP_j\vp}{\partial_rP_j\vp} 
  	  = \SP{P_j\partial_r\vp}{\partial_r\vp} 
  \end{align*}
  since each angular momentum projection $P_j$ commutes with the radial part of the kinetic energy. 
  We also have 
  \begin{align*}
  	 \SP{\vp_j}{\vp_j} =  \SP{P_j\vp}{P_j\vp} = \SP{P_j\vp}{\vp}
  \end{align*}
  so combining the above yields 
  \begin{align*}
  	 \abs{\SP{\vp}{v\vp}}
  	  &\le \veps\sum_{j\in\Z}  \SP{P_j\partial_r\vp}{\partial_r\vp} +   C(\veps) \sum_{j\in\Z} \SP{P_j\vp}{\vp} 
  	     =   \veps \SP{\partial_r\vp}{\partial_r\vp} 
  	         +   C(\veps) \SP{\vp}{\vp} 
  \end{align*}
  which proves the claim. 
\end{proof}
\begin{remark}\label{rem:diamagnetic}
	The above result also shows that any radial potential $v$ which is form bounded with respect to the nonmagnetic kinetic energy is also form bounded with respect to the magnetic kinetic energy with a rotationally symmetric magnetic field, with the same constants, since 
	by Lemma \ref{lem radial energy boundedness} we have 
	\begin{align}
		\abs{\SP{\vp}{v\vp}} 
		  &\le \veps \SP{\partial_r\vp}{\partial_r\vp} 
		       + C(\veps)\SP{\vp}{\vp}
		  \le \veps (\SP{\partial_r\vp}{\partial_r\vp} 
		       + \SP{\vp}{V\vp})
		       + C(\veps)\SP{\vp}{\vp} \\
		  &= \veps q_0(\vp,\vp)
		       + C(\veps)\SP{\vp}{\vp} 
	\end{align}
  since the effective potential $V=(V_j)_{j\in\Z}\ge 0$.  
\end{remark}

\section{The exponentially twisted magnetic quadratic  form}
\label{app-twisted}
Here we show that the twisted operators $e^F H e^{-F}$, or better 
their quadratic forms, are well behaved  for a large class of weights 
$F$. Moreover, the bounds are \emph{uniformly in} $F$ for which   
$\|F'\|_\infty\coloneqq \sup_{j\in\Z}\sup_{r>0}|F_j'(r)|$ is bounded. 

%
We denote by $\Upsilon_K$ the class of sequences of functions 
$F= (F_j)_{j\in\Z}$ satisfying  $ \|F\|_\infty<\infty $ and  
$ \|F'\|_\infty\le K$. 
\begin{lemma}\label{lem-twisted q_0}
  For any sequence $F= (F_j)_{j\in \Z}$ with $F\in \Upsilon_K$ we have 
  $e^{\pm F}\calD(q_0)\subset \calD(q_0)$.  
  Moreover, the quadratic form corresponding to $T_F\coloneqq e^F H_0 e^{-F}-H_0$, that is, 
  \begin{align*}
  	\SP{\vp}{T_F\vp}\coloneqq q_0(e^F\vp, e^{-F}\vp) - q_0(\vp,\vp)
  \end{align*}  
  is, uniformly in $F\in\Upsilon_K$,  infinitesimally form bounded with respect to $H_0$.  
\end{lemma}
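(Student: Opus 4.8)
The plan is to push everything into the angular-momentum-diagonal, one-dimensional picture of Appendix~\ref{app-polar coordinates}. Write $\widetilde\varphi=\calU\varphi=\sum_{j\in\Z}\varphi_j e^{ij\theta}$ as in \eqref{eq-expansion}; then $e^{\pm F}$ acts in the sector of angular momentum $j$ simply as multiplication by the \emph{bounded, real-valued} function $e^{\pm F_j(r)}$ (bounded because $\|F\|_\infty<\infty$, so $e^{\pm F}$ is a bounded operator on $L^2(\R^2)$), while the effective potential acts as multiplication by $V_j(r)$; in particular $e^{\pm F_j}$ and $V_j$ commute and $e^{\pm F_j}$ is self-adjoint. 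Since $F_j\in PC^1_{\text{bd}}(\R^+,\R)$ one has $e^{\pm F_j}\in W^{1,\infty}(\R^+)$, so the Leibniz rule for the one-dimensional radial derivative gives, a.e. on $\R^+$,
\begin{align*}
  \partial_r\bigl(e^{\pm F_j}\varphi_j\bigr)=e^{\pm F_j}\bigl(\partial_r\pm F_j'\bigr)\varphi_j\,.
\end{align*}

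For the domain statement, recall from Lemma~\ref{lem-representation} and Proposition~\ref{prop-angular momentum decomposition} that, in polar coordinates, $\varphi\in\calD(q_0)$ is equivalent to $\sum_j\n{\partial_r\varphi_j}^2<\infty$ and $\sum_j\n{V_j^{1/2}\varphi_j}^2<\infty$, where $V_j^{1/2}(r)=\ab{\Phi(r)-j}/r$. From the Leibniz identity, $\n{\partial_r(e^{\pm F_j}\varphi_j)}\le e^{\|F\|_\infty}\bigl(\n{\partial_r\varphi_j}+K\n{\varphi_j}\bigr)$, and from commutativity, $\n{V_j^{1/2}e^{\pm F_j}\varphi_j}=\n{e^{\pm F_j}V_j^{1/2}\varphi_j}\le e^{\|F\|_\infty}\n{V_j^{1/2}\varphi_j}$. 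Summing both over $j\in\Z$ shows that $e^{\pm F}\varphi$ again satisfies the two finiteness conditions, i.e.\ $e^{\pm F}\calD(q_0)\subset\calD(q_0)$.

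Next I would compute $q_0(e^F\varphi,e^{-F}\varphi)$ sector by sector. The potential part is unchanged, since $\SP{e^{F_j}\varphi_j}{V_j e^{-F_j}\varphi_j}=\SP{\varphi_j}{V_j\varphi_j}$ because $e^{F_j}$ is real and cancels $e^{-F_j}$ inside the integral; and the radial part equals, by the Leibniz identity, $\SP{(\partial_r+F_j')\varphi_j}{(\partial_r-F_j')\varphi_j}=\n{\partial_r\varphi_j}^2-\n{F_j'\varphi_j}^2+2i\,\im\SP{F_j'\varphi_j}{\partial_r\varphi_j}$. Subtracting $q_0(\varphi,\varphi)$ and summing over $j$ gives the explicit formula
\begin{align*}
  \SP{\varphi}{T_F\varphi}=\sum_{j\in\Z}\Bigl(-\n{F_j'\varphi_j}^2+2i\,\im\SP{F_j'\varphi_j}{\partial_r\varphi_j}\Bigr)\,,
\end{align*}
which is the twisted quadratic-form identity for $H_0$ used later in the proof of Theorem~\ref{thm 3}. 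From here, $\ab{\SP{\varphi}{T_F\varphi}}\le\sum_j\n{F_j'\varphi_j}^2+2\sum_j\n{F_j'\varphi_j}\,\n{\partial_r\varphi_j}$; estimating $\n{F_j'\varphi_j}\le K\n{\varphi_j}$ and applying $2ab\le\veps a^2+\veps^{-1}b^2$ yields, for every $\veps>0$,
\begin{align*}
  \ab{\SP{\varphi}{T_F\varphi}}\le\veps\sum_{j\in\Z}\n{\partial_r\varphi_j}^2+K^2\bigl(1+\veps^{-1}\bigr)\n{\varphi}^2\le\veps\,q_0(\varphi,\varphi)+K^2\bigl(1+\veps^{-1}\bigr)\n{\varphi}^2\,,
\end{align*}
the last inequality because $V\ge0$ forces $q_0(\varphi,\varphi)\ge\sum_j\n{\partial_r\varphi_j}^2$. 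Since the constant $K^2(1+\veps^{-1})$ depends only on $K$ and $\veps$, this is exactly the claimed infinitesimal form boundedness of $T_F$ with respect to $H_0$, uniformly over $F\in\Upsilon_K$.

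I do not expect a genuine obstacle. The one point that needs a little care is the justification of the Leibniz rule for $\partial_r$ applied to $e^{\pm F_j}\varphi_j$ when $F_j$ is only piecewise $C^1$ and $\varphi_j$ is a general element of $\calD(\partial_r)$: this follows from the Sobolev product rule, since $e^{\pm F_j}\in W^{1,\infty}(\R^+)$ (its a.e.\ derivative $\pm F_j'e^{\pm F_j}$ being bounded), or, alternatively, one may first carry out all computations for $\varphi\in C_0^\infty(\R^2)$ --- a form core for $q_0$ --- and then pass to the limit using exactly the uniform bounds of the second paragraph.
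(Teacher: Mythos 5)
Your proof is correct and follows essentially the same route as the paper's: sector-wise Leibniz rule $\partial_r(e^{\pm F_j}\varphi_j)=e^{\pm F_j}(\partial_r\pm F_j')\varphi_j$, domain invariance from boundedness of $F_j$ and $F_j'$, the explicit formula for $\SP{\vp}{T_F\vp}$, and then Cauchy--Schwarz plus Young with $\|F_j'\varphi_j\|\le K\|\varphi_j\|$ and $V\ge 0$. Your closing remark on justifying the Leibniz rule for $F_j\in PC^1_{\rm bd}$ and general $\varphi_j\in\calD(\partial_r)$, and your modulus bound $\sum_j\bigl(\|F_j'\varphi_j\|^2+2\|F_j'\varphi_j\|\|\partial_r\varphi_j\|\bigr)$, are both slightly cleaner than what is written in the paper (whose intermediate bound retains a $-\|F_j'\varphi_j\|^2$ that could make the right-hand side negative), but the argument is the same one.
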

\begin{proof} 
  We have 
  \begin{align*}
  	\partial_r(e^{\pm F_j}\vp_j)  
  	  &=  e^{\pm F_j}\left( \partial_r\vp_j \pm  F_j'\vp_j)\right) 
  \end{align*}
  and since $F_j$ and $F_j'$ are bounded, this implies 
  $  |\partial_r(e^{\pm F_j}\vp_j) | 
  	  \lesssim   |\partial_r\vp_j| + \vp_j|$. 
Since also $\sqrt{V_j}|e^{\pm F_j}\vp_j|\lesssim \sqrt{V_j}|\vp_j|$ 
one sees that $e^{\pm F}\vp \in \calD(q_0)$ as soon as  
$\vp \in \calD(q_0)$

  As quadratic forms  and using Proposition \ref{prop-angular momentum decomposition} we have 
  \begin{align}
  	\SP{\vp}{T_F\vp} 
  	  &=  q_0(e^F\vp, e^{-F}\vp) - q_0(\vp,\vp) 
  	    = \sum_{j\in\Z} \left( 
  	        \SP{\partial_r(e^{F_j}\vp_j)}{\partial_r(e^{-F_j}\vp_j} 
  	          - \SP{\partial_r\vp_j}{\partial_r\vp_j}
  	     \right) \nonumber \\
  	  &= \sum_{j\in\Z} \left( 
  	        \SP{\partial_r\vp_j+F_j'\vp_j}{\partial_r\vp_j-F_j'\vp_j} 
  	         - \SP{\partial_r\vp_j}{\partial_r\vp_j}
  	     \right) \nonumber \\
  	  &= \sum_{j\in\Z} \left( 
  	        \SP{F_j'\vp_j}{\partial_r\vp_j} -  \SP{\partial_r\vp_j}{F_j'\vp_j} - \SP{F_j'\vp_j}{F_j'\vp_j} 
  	     \right) \label{eq-superdooper2}
  \end{align}
  since $e^{F_j}$ commutes with the effective potential $V_j$ for 
  all $j\in\Z$. Thus, for all $0<\veps\le 1$,  
  \begin{align*}
  	|\SP{\vp}{T_F\vp} |
  	  &\le \sum_{j\in\Z} \left( 
  	          2\n{F_j'\vp_j}\n{\partial_r\vp_j} - \SP{F_j'\vp_j}{F_j'\vp_j}
  	        \right)
  	     \le  \sum_{j\in\Z} \left( 
  	           \veps \n{\partial_r\vp_j}^2 +(\veps^{-1}-1) \n{F_j'\vp_j}^2 
  	         \right) \\
  	  &\le \veps\n{\partial_r \vp}^2 + K(\veps^{-1}-1)\n{\vp}^2 
  	     \le  \veps q_0(\vp,\vp) + K(\veps^{-1}-1)\n{\vp}^2 \, ,
  \end{align*}
  which finishes the proof.
\end{proof}
\begin{remark}
  Using \cite[Theorem VI.1.33]{kato1995perturbation} this implies that, uniformly 
  in $F\in\Upsilon_K$,  
  \begin{align}
  	\SP{\vp}{e^F H_0 e^{-F}\vp} 
  	  \coloneqq q_0(e^F\vp, e^{-F}\vp) = q_0(\vp,\vp) +  \SP{\vp}{T_F\vp} 
  \end{align}
  yields a non--symmetric sectorial closed quadratic form on $\calD(q_0)$. 
\end{remark}
To control a perturbation $W$ which is not rotationally symmetric, 
 we recall that the Fourier
 transformation of the angular variable is given through the unitary
 operator $$\F:\H \to \bigoplus_{j\in\Z} L^2(\R^+,\d r)$$ acting as the
 closure of the map
$$\psi\longmapsto (\F\psi)_j\equiv\hat{\psi}_j:=
\(\frac{1}{\sqrt{2\pi}}\int_{0}^{2\pi}
\psi(\,\cdot\,,\theta)e^{-ij\theta}d\theta\)_{j\in\Z}\,,$$
initially defined on $\calU \calC^\infty_0(\R^2)$.

It is easy to check that, for any $j\in \Z$ and $\vp\in \calU \calC^\infty_0(\R^2)$, 
\begin{equation*}
[P_j \psi](r,\theta) = \hat{\psi}_j(r) e_j(\theta)\,,\quad r>0, \theta
\in [0,2\pi)\,,
\end{equation*}
with $e_j(\theta)\coloneqq e^{ij\theta}/\sqrt {2\pi}$, since $P_j :=
\mathbf{1}\otimes |e_j {\rangle}{ \langle} e_j|$ on $L^2(\R^+)\otimes
L^2(\S^1)\simeq \H$. Moreover, we have
\begin{align}
  \label{eq:11}
  \sps{P_j\varphi}{W P_k\psi}_{\mathcal{H}}=\sps{ \hat{\varphi}_j}{
  \widehat{W}(\cdot, j-k)
   \hat{\psi}_k}_{L^2(\R^+)}\,.
\end{align}

\begin{lemma}\label{w-and-ef}
Let $F= (F_j)_{j\in\Z}$ be a sequence of bounded functions
satisfying \eqref{efw} for some $a>0$ and $0<\zeta\le 1$. Then, for any 
$\varphi\in \calC^\infty_0(\R^2)$,
\begin{align}\label{ffbb}
  \ab{\sps{\varphi}{e^{F} W e^{-F}\varphi}}\le \xi(a,\zeta)\sps{\varphi}{v \,\varphi}\,. 
\end{align}
Moreover,  for any $\varphi\in
C^\infty_0(\R^2)$,
\begin{align}\label{relbound}
  \ab{\sps{\varphi}{ W \varphi}}\le \xi(2a,\zeta)\sps{\varphi}{v \,\varphi}\,. 
\end{align}
Here $v$ is defined through Condition \ref{con3} and $\xi(a,\zeta):=\sum_{k\in\Z} e^{\frac{a}{2}|k|^\zeta}$. 
\end{lemma}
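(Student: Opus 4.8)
The plan is to expand everything in the angular Fourier basis and reduce the estimate to a discrete convolution bound controlled by Condition \ref{con3} together with the Cauchy--Schwarz inequality. Write $\wti\vp = \calU\vp = \sum_{j\in\Z}\vp_j e_j$ with $e_j(\theta) = e^{ij\theta}/\sqrt{2\pi}$. First I would compute, using \eqref{eq:11} and the fact that $e^F$ acts diagonally in $j$ as multiplication by $e^{F_j}$,
\begin{align*}
  \sps{\vp}{e^F W e^{-F}\vp}
    = \sum_{j,k\in\Z} \SP{e^{F_j}\vp_j}{\widehat{W}(\cdot,j-k)\, e^{-F_k}\vp_k}_{L^2(\R^+,rdr)}
    = \sum_{j,k\in\Z} \SP{\vp_j}{\widehat{W}(\cdot,j-k)\,\vp_k}_{L^2} e^{F_j(r) - F_k(r)}\,,
\end{align*}
where the last rewriting is only heuristic since $F_j - F_k$ depends on $r$; more carefully I would keep the weights inside the integral and estimate pointwise in $r$. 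Using the pointwise bound $\ab{\widehat{W}(r,j-k)}\le b(r)e^{-a\ab{j-k}^\zeta}$ from \eqref{asbe} together with the weight bound $\ab{F_j(r)-F_k(r)}\le \tfrac a2\ab{j-k}^\zeta$ from \eqref{efw}, the kernel of the double sum is bounded by $b(r)\, e^{-\frac a2\ab{j-k}^\zeta}$ in absolute value.

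Next I would apply the Schur test (or equivalently Cauchy--Schwarz with the symmetric splitting $e^{-\frac a2\ab{j-k}^\zeta} = e^{-\frac a4\ab{j-k}^\zeta}\cdot e^{-\frac a4\ab{j-k}^\zeta}$). Pointwise in $r$,
\begin{align*}
  \sum_{j,k\in\Z} b(r)\, e^{-\frac a2\ab{j-k}^\zeta}\ab{\vp_j(r)}\ab{\vp_k(r)}
    \le b(r)\Big(\sum_{m\in\Z} e^{-\frac a2\ab{m}^\zeta}\Big) \sum_{j\in\Z}\ab{\vp_j(r)}^2
    = \xi(a,\zeta)\, b(r)\sum_{j\in\Z}\ab{\vp_j(r)}^2\,,
\end{align*}
using Young's inequality for the convolution of the sequences $(\ab{\vp_j(r)}^2)_j$ and $(e^{-\frac a2\ab{m}^\zeta})_m$, or simply $\ab{\vp_j}\ab{\vp_k}\le\tfrac12(\ab{\vp_j}^2+\ab{\vp_k}^2)$ and symmetry in $j\leftrightarrow k$. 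Integrating against $r\,dr$, recalling $\sum_j\ab{\vp_j(r)}^2$ integrates to $\sum_j\n{\vp_j}^2_{L^2(\R^+,rdr)}$ and that this equals the angular integral of $\ab{\wti\vp(r,\theta)}^2$, and that $b(\ab{x}) = v(x)$ is radial, yields $\ab{\sps{\vp}{e^F W e^{-F}\vp}}\le \xi(a,\zeta)\sps{\vp}{v\vp}$, which is \eqref{ffbb}. For \eqref{relbound} one repeats the argument with $F\equiv 0$: the kernel is then bounded by $b(r)e^{-a\ab{j-k}^\zeta}$, so the same computation gives the constant $\sum_{m}e^{-a\ab{m}^\zeta}$; writing this (somewhat wastefully) as $\le \sum_m e^{-\frac{2a}{2}\ab{m}^\zeta} = \xi(2a,\zeta)$ matches the stated form.

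The main obstacle — really the only point requiring care rather than routine bookkeeping — is the $r$-dependence of the twist $F_j(r)-F_k(r)$: one cannot pull $e^{F_j - F_k}$ out of the radial integral, so the Schur-type estimate must be carried out pointwise in $r$ before integrating, and one must check that $e^F$ genuinely acts as the diagonal multiplier $\sum_j e^{F_j}P_j$ on the relevant dense set $\calU\calC_0^\infty(\R^2)$ (which it does by the definition of $e^{\pm F}$ and Lemma \ref{lem-twisted q_0}). A secondary technical point is justifying the interchange of the (absolutely convergent, by \eqref{asbe} and the boundedness of $\vp\in\calC_0^\infty$) double sum with the radial integral, which is routine by Tonelli once the integrand is replaced by its absolute value.
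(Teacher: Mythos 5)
Your proof is correct and takes essentially the same route as the paper: expand in angular Fourier modes, bound the kernel by $b(r)\,e^{-\frac a2|j-k|^\zeta}$ using \eqref{asbe} and \eqref{efw}, then combine Cauchy--Schwarz with a Young/Schur estimate for the $(j,k)$-sum. The only difference is cosmetic --- the paper first applies Cauchy--Schwarz in the radial $L^2$ inner product (producing $\norm{b^{1/2}\hat\vp_j}\norm{b^{1/2}\hat\vp_k}$) and then Young's inequality on the resulting double sum, while you carry out the Schur/Young step pointwise in $r$ before integrating; both orderings are valid and yield the same constant.
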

\begin{proof}
We estimate using \eqref{asbe} for any $\varphi\in
C^\infty_0(\R^2)$
\begin{align*}
  \ab{\sps{\varphi}{e^{F} W e^{-F}\varphi}}&\le \sum_{j,k\in\Z}
 \ab{\SP{e^{F_j}P_j\varphi}{W e^{-F_k}P_k\varphi}}\\
&=\sum_{j,k\in\Z} \ab{\SP{e^{F_j} \hat{\vp}_j}{\widehat{W}(\cdot, j-k) e^{-F_k}\hat{\vp}_k}_{L^2(\R^+)}}\\
&\le \sum_{j,k\in\Z} e^{-a\ab{j-k}^\zeta}\SP{e^{F_j} \ab{\hat{\vp}_j}}
{b  \,e^{-F_k}\ab{\hat{\vp}_k}}_{L^2(\R^+)}\\
&\le \sum_{j,k\in\Z} e^{-a\ab{j-k}^\zeta/2}\SP{ \ab{\hat{\vp}_j}}
{b  \,\ab{\hat{\vp}_k}}_{L^2(\R^+)}\\
&\le  \sum_{j,k\in\Z} e^{-a\ab{j-k}^\zeta/2} \norm{b^{1/2}
  \hat{\vp}_j}_{L^2(\R^+)} \norm{b^{1/2}  \hat{\vp}_k}_{L^2(\R^+)}
\end{align*}
where in the last two inequalities we use \eqref{efw} and
Cauchy-Schwarz inequality for the scalar product, respectively. 
We can estimate de last sums applying  Young's inequality for
convolutions to get
\begin{align*}
   \ab{\sps{\varphi}{e^{F} W e^{-F}\varphi}}\le \left( \sum_{k\in\Z} e^{-a\ab{k}^\zeta/2}\right)
\left(  \sum_{j\in\Z}  \norm{b^{1/2}
  {\hat{\vp}_j}}_{L^2(\R^+)}^2\right) =\xi(a,\zeta)\sps{\varphi}{v \,\varphi}\,.
\end{align*}
This proves \eqref{ffbb}. In the case, $F=0$, we clearly obtain the same
estimate  as above  with $a/2$ replaced by $a$. This concludes the
proof of the lemma.  
\end{proof}

\begin{proposition}
  Assume that $W$ satisfies Condition \ref{con3} for some $a>0$, 
  $0<\zeta\le 1$, and  $F= (F_j)_{j\in\Z}\subset PC^1(\R_+,\R)$ 
  functions satisfying \eqref{efw} such that also $\|F\|_\infty, \|F'\|_\infty<\infty$. Then the twisted quadratic form  
  \begin{equation}
  	q_F(\vp,\vp)= q(e^F\vp, e^{-F}\vp)= q_0(e^F \vp, e^{-F}\vp) 
  	+\SP{e^F\vp}{We^{-F}\vp}
  \end{equation}
  is a closed sectorial form  on $\calD(q_0)$. 	Moreover, we have 
	\begin{align}\label{eq-explicit}
		\re\,  q(e^F\vp, e^{-F}\vp) 
		  &= \SP{\partial_r\vp}{\partial_r\vp} + \SP{\vp}{(V-(F')^2)\vp} 
		      + \re \SP{e^F\vp}{We^{-F}\vp}
	\end{align}as quadratic forms on $\calD(q_0)$. 
\end{proposition}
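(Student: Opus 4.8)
The plan is to realise $q_F$ as a form-bounded perturbation, with relative bound strictly less than one, of the twisted \emph{magnetic} form $\widetilde q_0^F(\vp,\psi):=q_0(e^F\vp,e^{-F}\psi)$, which has already been identified as a closed sectorial form on $\calD(q_0)$ in the remark following Lemma~\ref{lem-twisted q_0} (via \cite[Theorem~VI.1.33]{kato1995perturbation} applied to $q_0$ and $T_F$). By Lemma~\ref{lem-twisted q_0} the operators $e^{\pm F}$ map $\calD(q_0)$ into itself, and since $W$ is infinitesimally $H_0$-form bounded (Remark~\ref{rel-bdd}) we have $\calD(q_0)=\calQ(H_0)\subseteq\calQ(W)$; hence $w_F(\vp,\psi):=\SP{e^F\vp}{We^{-F}\psi}$ is a well-defined sesquilinear form on $\calD(q_0)$ (the diagonal expression in the statement being its polarization), and $q_F=\widetilde q_0^F+w_F$ there.

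First I would bound $w_F$ relatively to $\widetilde q_0^F$. Polarising Lemma~\ref{w-and-ef} — legitimate since both sides are sesquilinear forms agreeing on the diagonal — gives $\ab{w_F(\vp,\vp)}\le\xi(a,\zeta)\,\SP{\vp}{v\vp}$ for $\vp\in\calC_0^\infty(\R^2)$. The radial potential $v$ from Condition~\ref{con3} lies in $L^2(\R^2)+L^\infty(\R^2)$ and is therefore infinitesimally $p^2$-form bounded; by Lemma~\ref{lem radial energy boundedness} and Remark~\ref{rem:diamagnetic} it is then infinitesimally $q_0$-form bounded, and so is the fixed multiple $\xi(a,\zeta)v$. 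Since $\re\widetilde q_0^F(\vp,\vp)$ differs from $q_0(\vp,\vp)$ only by the bounded operator $(F')^2=\sum_{j}|F_j'|^2P_j$ — which is exactly what \eqref{eq-superdooper2} gives after taking real parts, see below — the $q_0$-form norm and the form norm of $\widetilde q_0^F$ are equivalent, so for every $\kappa\in(0,1)$ there is $C_\kappa$ with $\ab{w_F(\vp,\vp)}\le\kappa\,\re\widetilde q_0^F(\vp,\vp)+C_\kappa\n{\vp}^2$, first on $\calC_0^\infty(\R^2)$ and then on all of $\calD(q_0)$ by density, using that $\calC_0^\infty(\R^2)$ is a form core for $q_0$ and that both $\vp\mapsto e^{\pm F}\vp$ and the form of $W$ are continuous in the $q_0$-form norm (the former being implicit in the proof of Lemma~\ref{lem-twisted q_0}).

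With this in hand, \cite[Theorem~VI.1.33]{kato1995perturbation} applied to the closed sectorial form $\widetilde q_0^F$ and the perturbation $w_F$ shows that $q_F=\widetilde q_0^F+w_F$ is again closed and sectorial on $\calD(q_0)$. Finally, \eqref{eq-explicit} follows by taking real parts in $q_0(e^F\vp,e^{-F}\vp)=q_0(\vp,\vp)+\SP{\vp}{T_F\vp}$ and using \eqref{eq-superdooper2}: the cross terms $\SP{F_j'\vp_j}{\partial_r\vp_j}$ and $-\SP{\partial_r\vp_j}{F_j'\vp_j}$ are complex conjugates of each other, hence cancel under $\re$, leaving $\re\SP{\vp}{T_F\vp}=-\SP{\vp}{(F')^2\vp}$; adding $q_0(\vp,\vp)=\SP{\partial_r\vp}{\partial_r\vp}+\SP{\vp}{V\vp}$ and $\re w_F(\vp,\vp)$ yields \eqref{eq-explicit}. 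The main obstacle is purely bookkeeping for the sectorial-form machinery: one must ensure the relative bound of the \emph{non-symmetric} perturbation $w_F$ with respect to $\widetilde q_0^F$ is genuinely $<1$ — which is why the \emph{infinitesimal} form bound of $v$, not a mere form bound, is needed, the constant $\xi(a,\zeta)$ being harmless — and that the density extension from $\calC_0^\infty(\R^2)$ is legitimate, which rests on the form-norm continuity of $e^{\pm F}$.
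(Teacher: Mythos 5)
Your proof is correct and reaches the same conclusion by essentially the same ingredients, but you organise the perturbation argument slightly differently from the paper. The paper applies Kato's Theorem~VI.1.33 \emph{once}, taking $q_0$ as the closed sectorial base form and treating the combined perturbation $T_F + \SP{e^F\cdot}{We^{-F}\cdot}$ as relatively bounded with respect to $q_0$ (i.e.\ to $H_0$), with both pieces being infinitesimally $H_0$-form bounded by Lemmas~\ref{lem-twisted q_0} and \ref{w-and-ef} together with Remark~\ref{rem:diamagnetic}. You instead apply Kato in two stages: first noting, as the remark after Lemma~\ref{lem-twisted q_0} already does, that $\widetilde q_0^F = q_0 + T_F$ is closed and sectorial on $\calD(q_0)$, and then perturbing $\widetilde q_0^F$ by $w_F$; this forces you to insert an extra step translating the infinitesimal $q_0$-form bound on $w_F$ into an infinitesimal $\widetilde q_0^F$-form bound, which you handle correctly by observing that $\re\widetilde q_0^F$ differs from $q_0$ by the bounded operator $(F')^2$. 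The paper's one-shot version avoids this translation; your two-step version makes the role of the already-established form $\widetilde q_0^F$ more explicit and is also more careful about the density extension from $\calC_0^\infty(\R^2)$ to $\calD(q_0)$, which the paper leaves implicit. The computation of the real part via \eqref{eq-superdooper2} — the cross terms in $T_F$ are conjugate to one another and cancel, leaving $-\SP{F'\vp}{F'\vp}$ — agrees with the paper's.
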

\begin{proof}
	By Lemmas \ref{lem-twisted q_0} and \ref{w-and-ef}, the quadratic forms 
	corresponding to $T_F$ and $e^F W e^{-F}$ are infinitesimally form 
	bounded with respect to $H_0$. Thus we can apply 
	\cite{kato1995perturbation}[Theorem VI.1.33] to the form 
	\begin{align*}
	  q(e^F\vp, e^{-F}\vp) 
	    = q_0(\vp,\vp) + \SP{\vp}{T_F\vp} + \SP{e^F\vp}{We^{-F}\vp}\, . 
	\end{align*}
	to see that is it is closed sectorial form on $\calD(q_0)$. 
	The explicit form  \eqref{eq-expansion} follows from this since by 
	\eqref{eq-superdooper2} we have 
	\begin{align}
		\re \SP{\vp}{T_F\vp} 
          &= \sum_{j\in\Z} \re\left( 
  	        \SP{F_j'\vp_j}{\partial_r\vp_j} -  \SP{\partial_r\vp_j}{F_j'\vp_j} - \SP{F_j'\vp_j}{F_j'\vp_j} 
  	     \right)  \\
  	     &= -\sum_{j\in\Z} \SP{F_j'\vp_j}{F_j'\vp_j} 
  	         = - \SP{F'\vp}{F'\vp}
	\end{align}

\end{proof}

One more result, which we need and recall here, is the (reverse) triangle inequality for $j\mapsto |j|^\zeta$, when $0<\zeta\le 1$. 
\begin{lemma}\label{lem triangle inequality}
	For all $j,k\in\Z$ we have $|j+k|^\zeta\le |j|^\zeta+|k|^\zeta$ and, in particular, also $ \abs{|j|^\zeta-|k|^\zeta}\le |j+k|^\zeta$ 
\end{lemma}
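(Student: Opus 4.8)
The final statement is Lemma \ref{lem triangle inequality}: for all $j,k\in\Z$ and $0<\zeta\le 1$ we have $|j+k|^\zeta\le |j|^\zeta+|k|^\zeta$, and consequently $\big||j|^\zeta-|k|^\zeta\big|\le|j+k|^\zeta$.

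\textbf{Proof proposal.} The plan is to reduce everything to a single real-variable inequality: for all $a,b\ge 0$ one has $(a+b)^\zeta\le a^\zeta+b^\zeta$ when $0<\zeta\le 1$. Granting this, the first claim follows immediately by taking $a=|j|$, $b=|k|$ and using $|j+k|\le|j|+|k|$ together with the fact that $t\mapsto t^\zeta$ is nondecreasing on $[0,\infty)$: $|j+k|^\zeta\le(|j|+|k|)^\zeta\le|j|^\zeta+|k|^\zeta$. For the ``in particular'' part, apply the subadditivity just proved to the pair $(j+k,-k)$ to get $|j|^\zeta=|(j+k)+(-k)|^\zeta\le|j+k|^\zeta+|k|^\zeta$, hence $|j|^\zeta-|k|^\zeta\le|j+k|^\zeta$; by symmetry (swapping the roles, i.e. writing $|k|^\zeta=|(j+k)+(-j)|^\zeta\le|j+k|^\zeta+|j|^\zeta$) we also get $|k|^\zeta-|j|^\zeta\le|j+k|^\zeta$, and combining the two bounds gives $\big||j|^\zeta-|k|^\zeta\big|\le|j+k|^\zeta$.

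It remains to prove the scalar inequality $(a+b)^\zeta\le a^\zeta+b^\zeta$ for $a,b\ge0$, $0<\zeta\le1$. If $a+b=0$ both sides vanish, so assume $a+b>0$ and set $x=a/(a+b)\in[0,1]$, $y=b/(a+b)\in[0,1]$ with $x+y=1$. Dividing by $(a+b)^\zeta$, the claim becomes $1\le x^\zeta+y^\zeta$. Since $0\le x\le1$ and $\zeta\le1$ we have $x^\zeta\ge x$, and likewise $y^\zeta\ge y$, so $x^\zeta+y^\zeta\ge x+y=1$. (The inequality $x^\zeta\ge x$ for $x\in[0,1]$ is elementary: it holds trivially at $x=0$, and for $x\in(0,1]$ it is equivalent to $x^{\zeta-1}\ge1$, which holds because $\zeta-1\le0$ and $0<x\le1$.) This proves the scalar inequality and hence the lemma.

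I do not expect any genuine obstacle here — the statement is a standard concavity/subadditivity fact for the power $t\mapsto t^\zeta$. The only minor point to be careful about is the direction of the monotonicity step (using $|j+k|\le|j|+|k|$ before applying subadditivity, rather than after) and handling the degenerate case $j+k=0$ in the ``in particular'' part, where the bound $\big||j|^\zeta-|k|^\zeta\big|\le 0$ forces $|j|=|k|$, which is indeed consistent with $j=-k$. Everything else is a one-line computation.
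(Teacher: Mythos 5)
Your proof is correct and uses essentially the same idea as the paper's: both reduce to the subadditivity of $t\mapsto t^{\zeta}$ on $[0,\infty)$ via an elementary manipulation exploiting $\zeta\le 1$ (the paper splits $(|j|+|k|)^{\zeta}=\tfrac{|j|}{(|j|+|k|)^{1-\zeta}}+\tfrac{|k|}{(|j|+|k|)^{1-\zeta}}$ and bounds the denominators; you normalize so the arguments sum to $1$ and use $x^{\zeta}\ge x$ on $[0,1]$ — these are the same observation in slightly different clothing). You also make explicit the ``usual trick'' for the reverse inequality, which the paper leaves implicit.
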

\begin{proof}
  This is well--known, we give the easy argument for the convenience of the reader(s). 
  If $\zeta=1$, this is the usual triangle inequality. So let $0<\zeta<1$ and also $j,k\not=0$. Then 
  \begin{align*}
    |j+k|^\zeta 
      &\le (|j|+|k|)^\zeta = \frac{|j|+|k|}{(|j|+|k|)^{1-\zeta}} 
        = \frac{|j|}{(|j|+|k|)^{1-\zeta}} + \frac{|k|}{(|j|+|k|)^{1-\zeta}} \\
      &\le \frac{|j|}{|j|^{1-\zeta}} + \frac{|k|}{|k|^{1-\zeta}} 
        = |j|^\zeta +|k|^\zeta\, .
  \end{align*}
  and with the usual trick, the reverse triangle inequality 
  $
  	 \abs{|j|^\zeta-|k|^\zeta}\le |j+k|^\zeta
  $
  follows.
\end{proof}

\bigskip
\noindent {\bf Acknowledgments.}
 	D.H. and S.W. gratefully acknowledge the funding by Deutsche Forschungsgemeinschft (DFG) through project ID 258734477--SFB 1173.
 E.S. has been partially funded by Fondecyt (Chile)
 project \# 118--0355. 

\end{document}